\numberwithin{equation}{section}
\newtheorem*{theorem*}{Theorem}
\newtheorem{theorem}{Theorem}[section]
\newtheorem{lemma}[theorem]{Lemma}
\newtheorem{proposition}[theorem]{Proposition}
\newtheorem{corollary}[theorem]{Corollary}
\newtheorem{remark}[theorem]{Remark}
\newtheorem*{THMA}{Theorem A}
\newtheorem*{THMB}{Theorem B}
\newtheorem*{THMC}{Theorem C}
\newtheorem*{THMC'}{Theorem C'}
\newtheorem*{SILVERMANTHME}{Silverman's Theorem E \cite{silverman}}
\newtheorem*{IRRETUTTE}{Irreducibility of Tutte Polynomials (Merino-Mier-Noy \cite{merino})}
\newtheorem*{CSS}{Call-Silverman Specialization \cite[Theorem 4.1]{callsilverman}}
\newtheorem*{CPL}{Dujardin-Favre Classification of Passivity Locus \cite[Theorem 4]{DF}}
\newtheorem*{convention}{Convention}
\newtheorem*{algpointsaredense}{Algebraic Points Are Dense}
\renewcommand{\hat}{\widehat}
\newcommand{\dist}{{\rm dist}_{\mathbb{P}^1}}
\newcommand{\VDEG}{V_{\rm deg}}
\newcommand{\VGOOD}{V_{\rm good}}
\newcommand{\Qbar}{\overline{\mathbb{Q}}}
\newcommand{\U}{\mathcal{U}}
\newcommand{\V}{\mathcal{V}}
\renewcommand{\P}{\mathcal{P}}
\newcommand{\TUTTE}{\mathcal{T}}
\begin{document}
	
\title{Chromatic Zeros On Hierarchical Lattices\\ and
Equidistribution on Parameter Space}

\begin{author}[I. Chio]{Ivan Chio}
	\email{ivanchio@rochester.edu}
	\address{ %
		801 Hylan Hall \\ 
		University of Rochester \\ Rochester, NY 14627 \\
		United States
		 }
\end{author}

\begin{author}[R. K. W. Roeder]{Roland K. W. Roeder}
	\email{roederr@iupui.edu}
	\address{ %
		IUPUI Department of Mathematical Sciences\\
		LD Building, Room 224Q\\
		402 North Blackford Street\\
		Indianapolis, Indiana 46202-3267\\
		United States }
\end{author}

\date{\today}

\begin{abstract} Associated to any finite simple graph $\Gamma$ is the
{\em chromatic polynomial} $\P_\Gamma(q)$ whose complex zeros are called the {\em
chromatic zeros} of $\Gamma$.  A hierarchical lattice is a sequence of finite
simple graphs $\{\Gamma_n\}_{n=0}^\infty$ built recursively using a
substitution rule expressed in terms of a generating graph.  For each $n$, let
$\mu_n$ denote the probability measure that assigns a Dirac measure to each
chromatic zero of $\Gamma_n$. Under a mild hypothesis on the generating graph,
we prove that the sequence $\mu_n$ converges to some measure $\mu$ as $n$ tends
to infinity.  We call $\mu$ the {\em limiting measure of chromatic zeros} associated
to $\{\Gamma_n\}_{n=0}^\infty$.
In the case of the Diamond Hierarchical Lattice
we prove that the support of $\mu$ has Hausdorff dimension two.

The main techniques used come from holomorphic dynamics and more specifically
the theories of activity/bifurcation currents and arithmetic dynamics.  We
prove a new equidistribution theorem that can be used to relate the chromatic
zeros of a hierarchical lattice to the activity current of a particular marked
point.  We expect that this equidistribution theorem will have several other
applications.
\end{abstract}

\maketitle

\section{Introduction}

Motivated by a concrete problem from combinatorics and mathematical physics, we
will prove a general theorem about the equidistribution of certain parameter
values for algebraic families of rational maps.  We will begin with the
motivating problem about chromatic zeros (Section \ref{SEC:INTR_CHROMATIC}) and
then present the general equidistribution theorem (Section \ref{SEC:INTRO_EQUIDISTRIBUTION}).

\subsection{Chromatic zeros on hierarchical lattices}\label{SEC:INTR_CHROMATIC}
Let $\Gamma$ be a finite simple graph.  The {\em chromatic polynomial}
$\P_\Gamma (q)$ counts the number of ways to color the vertices of $\Gamma$ with
$q$ colors so that no two adjacent vertices have the same color.  It is
straightforward to check that the chromatic polynomial is monic, has integer
coefficients, and has degree equal to the number of vertices of $\Gamma$.  The
chromatic polynomial was introduced in 1912 by G.D.  Birkhoff in an attempt to
solve the Four Color Problem \cite{birkhoff, birkhoff1}.  Although the Four
Color Theorem was proved later by different means, chromatic polynomials and
their zeros have become a central part of combinatorics.\footnote{For example, a search on Mathscinet yields
333 papers having the words ``chromatic polynomial'' in the title.}  For a comprehensive
discussion of chromatic polynomials we refer the reader to the book
\cite{DKT_BOOK}.

A further motivation for study of the chromatic polynomials comes from
statistical physics because of the connection between the chromatic polynomial
and the partition function of the antiferromagnetic Potts Model; see, for
example, \cite{WUSURVEY,shrock,sokalsurvey} and \cite[p.323-325]{BAXTER}.

We will call a sequence of finite simple graphs $\Gamma_n = (V_n, E_n)$, where
the number of vertices  $|V_n| \rightarrow \infty$, a ``lattice''.  
The standard example is the $\mathbb{Z}^d$ lattice where, for
each $n \geq 0$, one defines $\Gamma_n$ to be the graph whose vertices consist
of the integer points in $[-n,n]^d$ and whose edges connect vertices
at distance one in~$\mathbb{R}^d$.
For a given
lattice, $\{\Gamma_n\}_{n=1}^\infty$, we are interested in whether the sequence
of measures
\begin{equation}\label{limitingmeasure}
\mu_n:=\frac{1}{|V_n|} \sum_{\substack{q \in \mathbb{C} \\ \P_{\Gamma_n}(q)=0}} \delta_q
\end{equation}
has a limit $\mu$, and in describing its limit if it has one.  Here, $\delta_q$ is the Dirac measure
which, by definition, assigns measure $1$ to a set containing $q$ and measure $0$ otherwise.  (In (\ref{limitingmeasure}) zeros of $\P_{\Gamma_n}(q)$ are counted
with multiplicity.)  If $\mu$ exists,
we call it the {\em limiting measure of chromatic zeros} for the lattice
$\{\Gamma_n\}_{n=1}^\infty$.

This problem has received considerable interest from the physics community
especially through the work of Shrock with and collaborators Biggs, Chang, and Tsai
(see \cite{shrock3,shrock2,BS,CS,ST} for a sample) and Sokal with collaborators Jackson, Procacci, Salas and others
(see \cite{SS1,SS6,JPS} for a sample).  Indeed, one of the main
motivations of these papers is understanding the possible ground states
(temperature $T=0$) for the thermodynamic limit of the Potts Model, as well as
the phase transitions between them.  Most of these papers consider sequences of
$m \times n$ grid graphs with $m \leq 30$ fixed and $n \rightarrow \infty$.
This allows the authors to use transfer matrices and the Beraha-Kahane-Weiss
Theorem \cite{BKW} to rigorously deduce (for fixed $m$) properties of the
limiting measure of chromatic zeros.  The zeros typically
accumulate to some real-algebraic curves in $\mathbb{C}$ whose complexity
increases as $m$ does; see \cite[Figures 1 and 2]{shrock3} and \cite[Figures
21 and 22]{SS1} as examples.  Indeed, this behavior was first observed in the 1972  work of Biggs-Damerell-Sands \cite{BDS} 
 and then, more extensively, in the 1997 work of Shrock-Tsai \cite{ST97}.  Beyond these cases with $m$ fixed, numerical techniques are used in
\cite{SS6}  to make conjectures about the limiting behavior of the zeros as $m
\rightarrow \infty$, i.e.\ for the $\mathbb{Z}^2$ lattice.  

To the best of our knowledge, it is an open and very difficult question whether
there is a limiting measure of chromatic zeros for the $\mathbb{Z}^2$ lattice.
If such a measure does exist, rigorously determining its properties also seems quite challenging.
For this reason, we will consider the limiting measure of chromatic zeros for
\textit{hierarchical lattices}.  They are constructed as follows: start with a
finite simple graph $\Gamma \equiv \Gamma_1$ as the \textit{generating graph},
with two vertices labeled $a$ and $b$, such that $\Gamma$ is symmetric over $a$
and $b$.  For each $n > 1$, $\Gamma_n$ retains the two marked vertices $a$ and
$b$ from $\Gamma$, and we inductively obtain $\Gamma_{n+1}$ by replacing each
edge of $\Gamma$ with $\Gamma_n$, using $\Gamma_n$'s marked vertices as if they
were endpoints of that edge. {A key example to keep in mind is the Diamond
Hierarchical Lattice (DHL) shown in Figure \ref{lattice}.  In fact, one can interpret
the DHL as an anisotropic version of the $\mathbb{Z}^2$ lattice; see
\cite[Appendix E.4]{BLR1} for more details.

\begin{figure}[h!]
	\centering

	\scalebox{1.1}{
		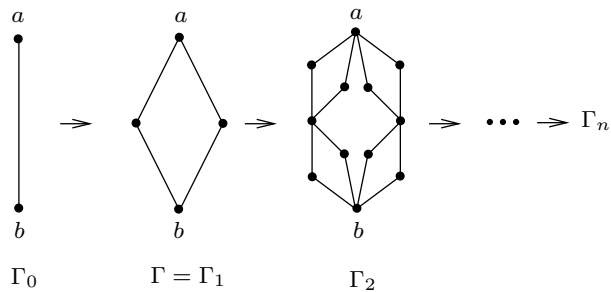
	}
	\caption{Diamond Hierarchical Lattice (DHL)}
	\label{lattice}
\end{figure}

\noindent
Several other possible generating graphs are shown in Figure \ref{generating_graphs},
including a generalization of the DHL called the $k$-fold DHL.

Statistical physics on hierarchical lattices dates back to the work of Berker and Ostlund \cite{BO}, followed by
Griffiths and Kaufman \cite{GK}, Derrida, De Seze, and Itzykson \cite{DDI}, 
Bleher and \v{Z}alys \cite{BZ1,BZ2,BZ3}, and Bleher and Lyubich \cite{BL}.

A graph $\Gamma$ is called \textit{$2$-connected} if $\Gamma$ has three or more vertices and if  there is no vertex whose removal disconnects the graph.
Our main results about the limiting measure of chromatic zeros are:

\begin{figure}[h!]
        \centering

        \scalebox{1.1}{
                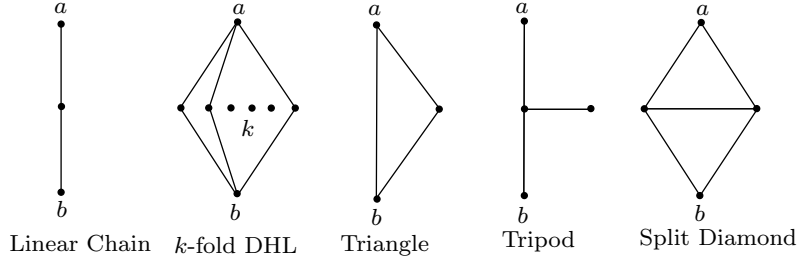
        }
        \caption{Several possible generating graphs.  The $k$-fold DHL, Triangle, and Split Diamond are $2$-connected,
while the others are not.}
        \label{generating_graphs}
\end{figure}

\begin{THMA}
Let $\{\Gamma_n\}_{n=1}^\infty$ be a hierarchical lattice whose generating
graph $\Gamma \equiv \Gamma_1$ is $2$-connected. Then its limiting measure
$\mu$ of chromatic zeros exists.
\end{THMA}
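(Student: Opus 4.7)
The plan is to realize the chromatic zeros of each $\Gamma_n$ as the parameter values in a one-parameter holomorphic family of rational maps on $\mathbb{P}^1$ at which a distinguished marked point, after $n$ iterates, lands on a prescribed target, and then to invoke the equidistribution theorem announced in Section~\ref{SEC:INTRO_EQUIDISTRIBUTION}.

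First I would set up a Migdal--Kadanoff / Berker--Ostlund style renormalization. For each $n$, introduce two partial chromatic quantities: $N_n(q)$ counts proper $q$-colorings of $\Gamma_n$ in which the marked vertices $a$ and $b$ receive the same color, and $D_n(q)$ counts those in which $a$ and $b$ receive different colors. Since $\Gamma_{n+1}$ is obtained from $\Gamma$ by replacing each edge with an independent copy of $\Gamma_n$, a direct edge-by-edge enumeration expresses the homogeneous pair $[N_{n+1}(q):D_{n+1}(q)]$ as a rational function, depending polynomially on $q$, of $[N_n(q):D_n(q)]$. This yields an algebraic family of rational maps $R_q\colon\mathbb{P}^1\to\mathbb{P}^1$ parameterized by $q\in\mathbb{C}$, a marked starting point $y_0(q)\in\mathbb{P}^1$, and a target point $y_\star\in\mathbb{P}^1$, such that $\P_{\Gamma_n}(q)=0$ if and only if $R_q^n(y_0(q))=y_\star$. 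In these terms, $\mu_n$ equidistributes (up to multiplicity and normalization) over those parameters.

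The core dynamical input is then the new equidistribution theorem from Section~\ref{SEC:INTRO_EQUIDISTRIBUTION}: under appropriate non-degeneracy hypotheses on the triple $(R_q, y_0, y_\star)$, the measures $\frac{1}{\deg R_q^n}\sum_{R_q^n(y_0(q))=y_\star}\delta_q$ converge weakly to a definite measure on $\mathbb{C}$ built from the activity (bifurcation) current of the marked point $y_0$. Matching the normalization $1/|V_n|$ in (\ref{limitingmeasure}) against $1/\deg R_q^n$ --- two geometric sequences with the same growth rate, since both behave like $|E(\Gamma)|^n$ up to a bounded factor determined by the generating graph --- then yields the existence of $\mu=\lim_{n\to\infty}\mu_n$.

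The hard part will be verifying the non-degeneracy hypotheses of the equidistribution theorem, and this is precisely where $2$-connectedness of $\Gamma$ should enter. I would need to exclude the degenerate possibilities that (i) $\deg R_q\leq 1$ identically in $q$, (ii) the marked point $y_0(q)$ is preperiodic to $y_\star$ for \emph{every} $q$, and (iii) $y_\star$ is a totally invariant exceptional value of $R_q$ for generic $q$. When $\Gamma$ has a cut vertex the chromatic polynomial factors multiplicatively over the blocks and the induced dynamics collapses to a M\"obius action, which suggests that $2$-connectedness is exactly what prevents (i); (ii) and (iii) should then be ruled out by a single explicit computation at a convenient value of $q$ (for instance a large real $q$, where $\P_{\Gamma_n}(q)>0$ asymptotically), since an algebraic family which is not identically preperiodic at one parameter cannot be identically preperiodic. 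Once these hypotheses are confirmed, the equidistribution theorem delivers the limit measure $\mu$.
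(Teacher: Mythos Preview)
Your overall strategy is exactly the paper's: set up the Migdal--Kadanoff renormalization as an algebraic family $r_q:\mathbb{P}^1\to\mathbb{P}^1$ over $\mathbb{Q}$ with marked starting point $a(q)=0$ and target $b(q)$, then apply Theorem~C'. Your argument that no iterate satisfies $r_q^n(a(q))\equiv b(q)$ --- since otherwise $\P_{\Gamma_n}$ would vanish identically --- is essentially Proposition~\ref{noiterate}, and your remark matching the normalizations $|V_n|$ and $|E|^n$ is Proposition~\ref{vertexedge}.

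The genuine gap is in how you propose to verify the remaining hypotheses. It is not enough to rule out $\deg r_q\le 1$; for the identification of the divisor of chromatic zeros with $(R^n)^*(y+q-1=0)\cap(y=0)$ to be valid \emph{with multiplicities}, the homogeneous recursion polynomials must be coprime, so that the generic degree of $r_q$ is exactly $|E|$. Your heuristic (a cut vertex forces factorization and collapse to a M\"obius action) points in the right direction but is the wrong implication; what is needed is that $2$-connectedness \emph{prevents} common factors. The paper obtains this by working with the full two-variable partition function $Z_\Gamma(q,y)$ rather than the purely chromatic counts, and invoking a nontrivial external input: the theorem of Merino--Mier--Noy that the Tutte polynomial of a $2$-connected graph is irreducible in $\mathbb{C}[x,y]$. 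Under the standard change of variables this yields irreducibility of $\tilde Z_{\Gamma_n}(q,y)=\U_n+(q-1)\V_n$ (Proposition~\ref{PROP:IRRED}), hence $\gcd(\U_n,\V_n)=1$, hence generic degree exactly $|E|$ and $\mathcal{S}_n=(R^n)^*\mathcal{S}_0$ (Proposition~\ref{NON-DEGENERATE-DEGREE}).

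The same irreducibility result is what rules out $b(q)=1-q$ being persistently exceptional (Proposition~\ref{nonexceptional}): if it were, then $(R^2)^*(y=1-q)=|E|^2\,(y=1-q)$, forcing $\tilde Z_2(q,y)=(y+q-1)^{|E|^2}$, which is reducible and contradicts Proposition~\ref{PROP:IRRED}. Your proposal to dispatch this by ``a single explicit computation at a convenient value of $q$'' does not work uniformly over all $2$-connected generating graphs, since you have no explicit formula for $r_q$ against which to test exceptionality at that parameter. So the missing ingredient in your plan is the Merino--Mier--Noy theorem; once you have it, the rest of your outline goes through exactly as in Section~\ref{SEC:PROOF_THMA}.
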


\begin{THMB}
Let $\mu$ be the limiting measure of chromatic zeros for the $k$-fold DHL and suppose $k \geq 2$.
Then, ${\rm supp}(\mu)$ has Hausdorff dimension $2$.
\end{THMB}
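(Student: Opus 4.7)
The plan is to identify $\mathrm{supp}(\mu)$ with the activity locus of a natural marked point in an algebraic family of rational maps on $\mathbb{P}^1$ (the family arising from the Migdal--Kadanoff-type renormalization of the $k$-fold DHL), and then invoke standard results from holomorphic dynamics that force nontrivial activity loci in such families to have full Hausdorff dimension.

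First, I would set up the renormalization. The self-similarity of the $k$-fold DHL induces a rational self-map $f_q\colon \mathbb{P}^1\to\mathbb{P}^1$, depending algebraically on the chromatic parameter $q\in\mathbb{C}$, acting on the natural ratio of Potts partition functions at the two possible boundary conditions at $a$ and $b$; for $k\ge 2$ its degree is at least $4$. The chromatic zeros of $\Gamma_n$ are then precisely the parameters $q$ for which the orbit of an explicit marked initial point $a(q)$ reaches, after exactly $n$ iterations of $f_q$, a target value encoding the vanishing of the chromatic polynomial. By the equidistribution theorem proved earlier in the paper, the measures $\mu_n$ converge to the activity current $T_a$ of the marked point $a$ in this family; in particular $\mathrm{supp}(\mu) = \mathcal{A}_a$, the activity locus of $a$.

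To conclude that $\dim_H(\mathcal{A}_a)=2$, I would argue as follows. Theorem A guarantees $\mu\not\equiv 0$, so $\mathcal{A}_a$ is non-empty. Combining the Dujardin--Favre classification of the passivity locus (CPL) with Ma\~n\'e--Sad--Sullivan-type density results, the set of $q_0\in\mathcal{A}_a$ for which $a(q_0)$ is strictly preperiodic to a repelling periodic cycle of $f_{q_0}$---a \emph{Misiurewicz parameter}---is dense in $\mathcal{A}_a$. At any such $q_0$, a Tan Lei-type similarity between dynamical and parameter spaces supplies a quasi-conformal embedding of a neighborhood of $a(q_0)$ in $J(f_{q_0})$ into a neighborhood of $q_0$ in $\mathcal{A}_a$, whence $\dim_H(\mathcal{A}_a) \ge \dim_H(J(f_{q_0}))$ locally. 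A Shishikura-type perturbation argument then produces such a $q_0$ with $\dim_H J(f_{q_0}) = 2$, and we conclude.

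The principal obstacle is this last step: locating within this specific algebraic family a Misiurewicz parameter whose Julia set has full Hausdorff dimension. One strategy is first to exhibit a postcritically finite (or near-postcritically finite) map in the family, typically at an integer or small rational value of $q$ where the renormalization simplifies, and then invoke Shishikura's theorem to perturb into a nearby Misiurewicz parameter at which $\dim_H J=2$. Verifying that the critical orbits of $f_q$ behave as required under such a perturbation calls for a concrete analysis of the renormalization dynamics of the $k$-fold DHL, and it is precisely here that the hypothesis $k\ge 2$ plays a decisive role: the degenerate case $k=1$ reduces to paths, whose chromatic polynomials have only trivial zeros and whose renormalization is M\"obius.
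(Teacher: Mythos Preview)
Your outline has the right large-scale shape---identify $\mathrm{supp}(\mu)$ with the activity locus of the marked point $a(q)\equiv 0$ and then argue that a non-empty activity locus in a one-parameter family has Hausdorff dimension two---but you are missing the one observation that makes the dimension step immediate, and without it your proposed route does not close.

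The point you overlooked is that for the $k$-fold DHL renormalization
\[
r_q(y)=\left(\frac{y^2+q-1}{2y+q-2}\right)^k,
\]
the marked point $a(q)\equiv 0$ is \emph{postcritical}: $c(q):=\sqrt{1-q}$ is a critical point of $r_q$ and $r_q(c(q))\equiv 0\equiv a(q)$. Hence the activity locus of $a$ coincides with the activity locus of the \emph{critical} marked point $c$. McMullen's theorem (Corollary~1.6 of \cite{McM1}) then applies directly: for a holomorphic one-parameter family, the activity locus of a marked critical point is either empty or has Hausdorff dimension two. The paper's proof is exactly this, together with a short direct check that the activity locus is non-empty (by exhibiting parameters where $r_q^n(0)\to 1$ and others where $r_q^n(0)\to\infty$).

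Your substitute for this---Tan Lei similarity at Misiurewicz parameters plus a Shishikura-type search for a parameter with $\dim_H J(f_{q_0})=2$---is essentially how McMullen's theorem is proved, but those arguments are carried out for \emph{critical} marked points (this is what produces the polynomial-like families and baby Mandelbrot sets). For a non-critical marked point the Tan Lei transfer and the construction of Mandelbrot-like subsets do not come for free, and you correctly flag that locating the required Misiurewicz parameter inside this particular family is the obstacle. That obstacle disappears once you see that $a$ is already the image of a critical point. One further correction: Theorem~A only asserts convergence of $\mu_n$; it does not by itself give $\mu\not\equiv 0$, so non-emptiness of the activity locus must be verified separately, as the paper does.
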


\begin{figure}[h!]
	\centering
	\includegraphics[scale=1.1]{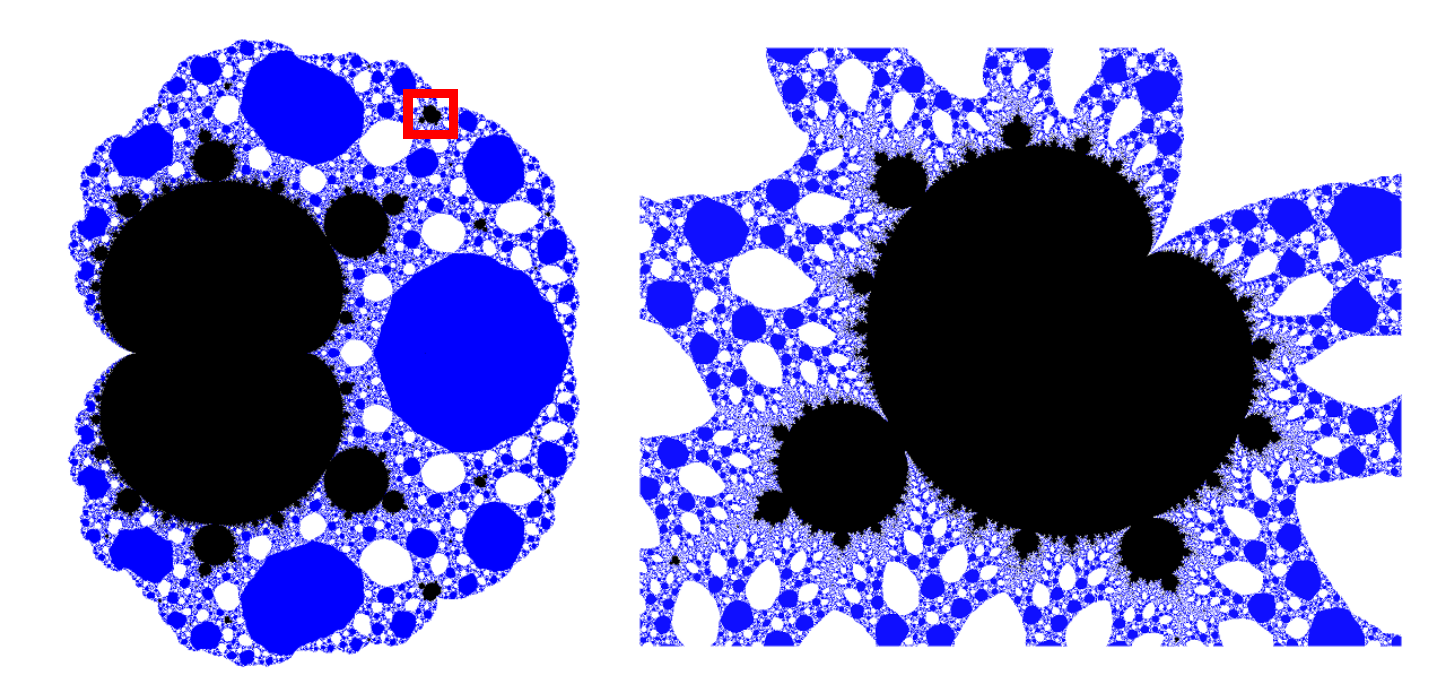}

	\caption{The support of the limiting measure of chromatic zeros for the
DHL equals the union of boundaries of the black, blue, and white sets.
Let $r_q(y)$ be the renormalization mapping for the DHL, given in (\ref{EQN:RFORDHL}).
Points in white correspond to parameter values $q$ for which $r_q^n(0)
\rightarrow 1$, points in blue correspond to parameter values $q$ for which
$r_q^n(0) \rightarrow \infty$, and points in black correspond to parameter
values for which $r_q^n(0)$ does neither. The region depicted on the left is
approximately $-2 \leq \textrm{Re} (q) \leq 4$ and $-3 \leq \textrm{Im} (q)
\leq 3$.  The region on the right is a zoomed in view of the region shown in
the red box on the left.  See Section \ref{SEC:PROOF_THMB} for an explanation of the 
appearance of ``baby Mandelbrot sets'', as on the right. 
Their appearance will imply Theorem B.
Figures 3 and 4 were made using the Fractalstream software \cite{FRACTAL}.
\label{FIG:DHLSUPP}}
\end{figure}

\begin{figure}[h!]
        \centering
        \includegraphics[scale=0.4]{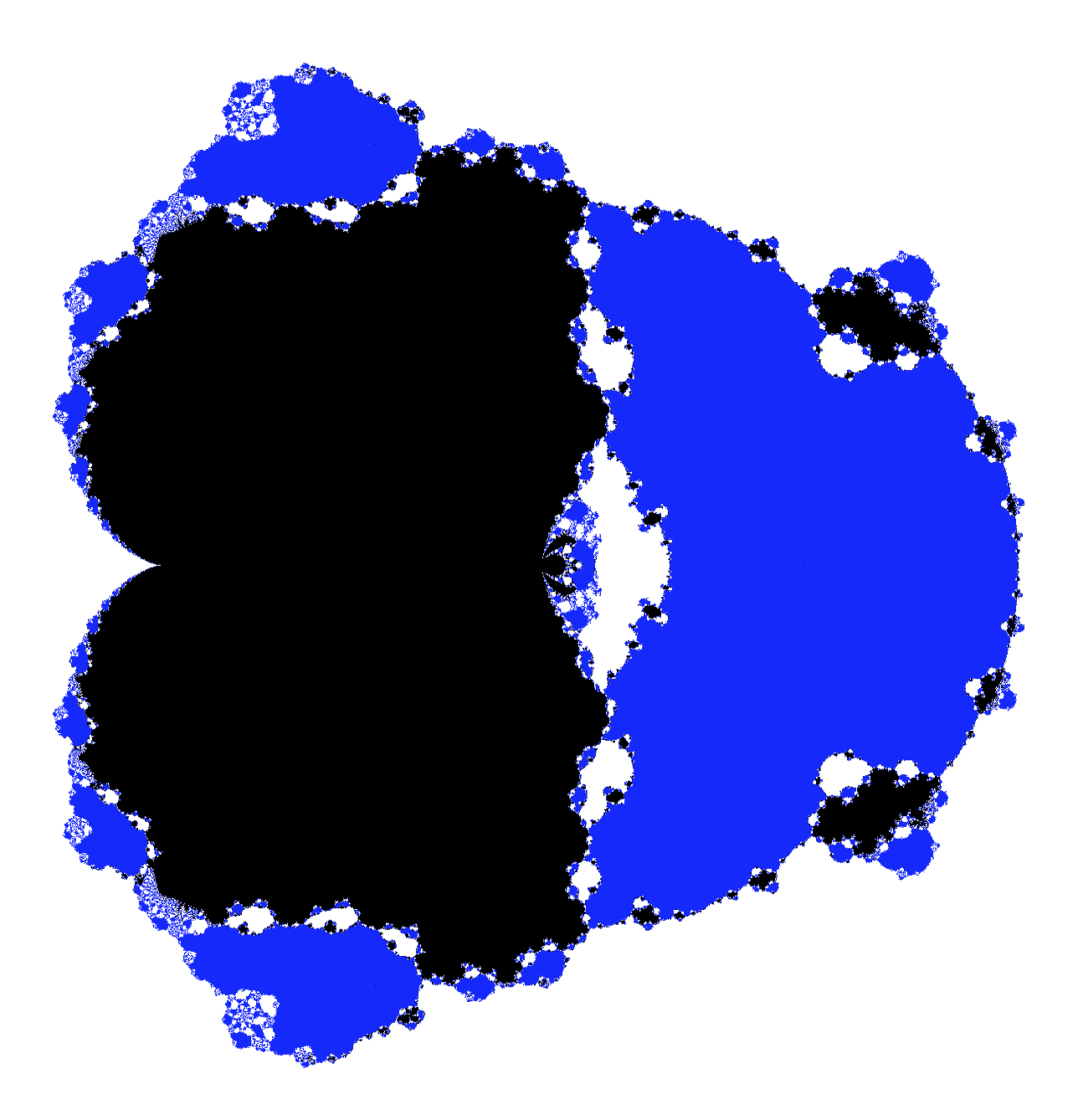}
        \caption{The support of the limiting measure of chromatic zeros for the
hierarchical lattice generated by the split diamond (see Figure \ref{generating_graphs}) equals the union of boundaries of the black, blue, and white sets.  Let $r_q(y)$ denote the 
renormalization mapping generated by the split diamond, given in (\ref{EQN:RENORM_SPLIT_DIAMOND}).  
The coloring scheme is the same as in Figure \ref{FIG:DHLSUPP}, but using this different mapping.
The region depicted is
approximately $-1 \leq \textrm{Re} (q) \leq 5$ and $-3 \leq \textrm{Im} (q)
\leq 3$.  
\label{FIG:SPLIT_DIAMOND}}
\end{figure}

\begin{remark}  The set shown in
Figure \ref{FIG:DHLSUPP} has been studied from the perspective of holomorphic dynamics  by several authors.  
We refer the reader to the works of Luo \cite{JLUO}, Aspenberg-Yampolsky \cite{MR2480740}, Wang-Qiu-Yin-Qiao-Gao \cite{MR2608338}, 
and Yang-Zeng \cite{MR3153590} for  details.
\end{remark}

The technique for proving Theorems A and B comes from the connection between the
antiferromagnetic Potts model in statistical physics and the chromatic polynomial; see, for example,
 \cite{WUSURVEY,shrock,sokalsurvey} and \cite[p.323-325]{BAXTER}.

For any graph $\Gamma$, let $Z_\Gamma(q,y)$ be the partition function (\ref{DEF:PARTI}) for the
antiferromagnetic Potts model with $q$ states and ``temperature'' $y$. We remark that $Z_\Gamma$ is defined with multivariate edge variables $y_e$, here we set $y_e = y$ for all edges, and $Z_\Gamma (q,y)$ becomes a polynomial
in both $q$ and $y$ by the Fortuin-Kasteleyn \cite{FK} representation~(\ref{DEF:PARTI2}). Then, by setting $y = 0$, one has:
\begin{align}\label{EQN:POTTS_CHROMATIC}
	\P_\Gamma(q) = Z_\Gamma(q,0).
\end{align}
See Section
\ref{PottsModel} for more details.

Given a hierarchical lattice $\{\Gamma_n\}_{n=1}^\infty$ generated by $\Gamma =
(V,E)$ let us write $Z_n(q,y) \equiv Z_{\Gamma_n}(q,y)$ for each $n \in
\mathbb{N}$.  The zero locus of $Z_n(q,y)$ is a (potentially reducible)
algebraic curve in $\mathbb{C}^2$.  However, we will consider it as a divisor
by assigning positive integer multiplicities to each irreducible component
according to the order at which $Z_\Gamma(q,y)$ vanishes on that component.  This
divisor will be denoted by
\begin{align*}
\mathcal{S}_n \, :=\, (Z_n(q, y)=0),
\end{align*}
where, in general, the zero divisor of a polynomial $p(x,y)$ will be denoted by $(p(x,y)=0)$.  Since $\Gamma_0$ is a single edge with its two endpoints,
we have
\begin{align*}
\mathcal{S}_0 \,=\, (q(y+q-1)=0) \, =\,(q=0)+(y+q-1=0).
\end{align*}
If $\Gamma$ is $2$-connected, there is a 
Migdal-Kadanoff renormalization procedure that takes 
$\Gamma$ and produces a rational map 
\begin{align*}
R \equiv R_\Gamma : \mathbb{C} \times \mathbb{P}^1 \rightarrow \mathbb{C} \times \mathbb{P}^1 \qquad  \mbox{be given by} \qquad  R(q,y) = (q,r_q(y)),
\end{align*}
with the property that
\begin{align}\label{EQN:PULLBACK}
\mathcal{S}_{n+1} = R^* \mathcal{S}_{n} \qquad \mbox{for $n\geq 0$}.
\end{align}
Here, $\mathbb{P}^1$ denotes Riemann Sphere, $R^*$ denotes the pullback of
divisors, and $r_q: \mathbb{P}^1 \rightarrow \mathbb{P}^1$ is a degree $|E|$
rational map\footnote {Actually, the degree can drop below $|E|$ for finitely
many values of $q$.} depending on $q$.  (Informally, one can think of the
pullback on divisors $R^*$ as being like the set-theoretic preimage, but
designed to keep track of multiplicities.) The reader should keep in mind the
case of the DHL for which
\begin{align}\label{EQN:RFORDHL}
r_q(y)=\left(\frac{y^2+q-1}{2y+q-2}\right)^2.
\end{align}
It will be derived in Section \ref{PottsModel}.

Because $R(q,y) = (q,r_q(y))$ is a skew product over the identity, for each $n \geq 0$ we have
\begin{align*}
\mathcal{S}_n \,\, =\,\, (R^n)^* \big((q=0)+(y+q-1=0)\big) \,\, = \,\,(q=0) + (R^n)^*(y+q-1=0).% =: (q=0) + \tilde{S}_n.
\end{align*}
The chromatic polynomial of a connected graph $\Gamma$ has a simple zero at
$q=0$, which we can ignore when discussing the limiting measure of chromatic
zeros.  It corresponds to the divisor $(q=0)$ above.  Therefore, using
(\ref{EQN:POTTS_CHROMATIC}), all of the chromatic zeros for
$\Gamma_n$ (other than $q=0$) are given by
\begin{align}\label{EQN:CHROMATIC_RENORM1}
\tilde{\mathcal{C}}_n = (R^{n})^* (y+q-1=0) \,  \cap \, (y=0),
\end{align}
where each intersection point is assigned its Bezout multiplicity.  

Since we want to normalize and then take limits as $n$ tends to infinity, we re-write 
(\ref{EQN:CHROMATIC_RENORM1}) in terms of currents (see \cite{SIBONYDYNAMICS,DS1} for background).
We find
\begin{align}\label{EQN:CHROMATIC_RENORM2}
\tilde{\mu}_n := \frac{1}{|V_n|} [\tilde{\mathcal{C}}_n] = (\pi_{1})_* \left(\frac{1}{|V_n|} (R^{n})^* [y+q-1 =0] \ \wedge \ [y=0]\right),
\end{align}
where $\pi_1: \mathbb{C} \times \mathbb{P}^1 \rightarrow \mathbb{C}$ defined by $\pi_1(q,y) = q$ is the projection map, the square brackets denote the current of integration over a divisor, and
$\wedge$ denotes the wedge product of currents.  Since $[y = 0]$ is the current
of integration over a horizontal line, the wedge product is just
the horizontal slice of $\frac{1}{|V_n|} (R^{n})^* [y+q-1 =0]$ at height
$y=0$.  Since the wedge product results in a measure on $\mathbb{C} \times \mathbb{P}^1$, we compose with the projection $(\pi_{1})_*$ 
to obtain a measure on $\mathbb{C}$.   (In the previous two paragraphs we have used tildes on $\tilde{\mathcal{C}}_n$ and $\tilde{\mu}_n$
to denote that we have dropped the simple zero at $q=0$.)

If the generating graph $\Gamma$ is $2$-connected, then we will see in Proposition \ref{nonexceptional} that there are at most
finitely many parameters $q$ such that $y=1-q$ is an exceptional point for
$r_q$. It then  follows quickly from the one-dimensional equidistribution
theorems of Lyubich \cite{LEP, Ly2} and Freire-Lopez-Ma\~{n}\'{e} \cite{FLM}
that the following convergence holds:
\begin{align*}
\frac{1}{|E_n|} (R^{n})^* [y+q-1 =0] \rightarrow \hat{T},
\end{align*}
 where $\hat{T}$ is the {\em fiber-wise Green current} for the family of rational maps $r_q(y)$.
In Proposition \ref{vertexedge} we'll see that $\alpha:= \lim_{n\rightarrow \infty} \frac{|E_n|}{|V_n|}$ exists so that  
\begin{align*}
\hat{T}_n:=\frac{1}{|V_n|} (R^{n})^* [y+q-1 =0] \rightarrow \alpha \hat{T}.
\end{align*}
However: 

\vspace{0.1in}
\noindent
{\bf First Main Technical Issue:}  $\hat{T}_n \rightarrow \alpha \hat{T}$ does not necessarily imply $\hat{T}_n \wedge [y=0] \rightarrow 
\alpha \hat{T} \wedge [y=0]$.

\vspace{0.1in}
\noindent
This issue will be handled using the notion of activity currents which were
introduced by DeMarco in \cite{Dem1} to study bifurcations in families of
rational maps (they are sometimes called bifurcation currents).  Since then,
they have been studied by Berteloot, DeMarco, Dujardin, Favre, Gauthier, Okuyama and
many others.   We refer the reader to the surveys by Berteloot \cite{Bert1} and
Dujardin \cite{Duj1} for further details.

We can re-write (\ref{EQN:CHROMATIC_RENORM2}) as
\begin{align*}
\tilde{\mu}_n := \frac{1}{|V_n|} [(r_q^n \circ a)(q) = b(q)],
\end{align*}
where $a,b: \mathbb{C} \rightarrow \mathbb{P}^1$ are the two marked points
\begin{align*}
a(q) = 0 \qquad \mbox{and} \qquad b(q) = 1-q.
\end{align*}
(Special care must be taken at the finitely many parameters $q$ for which ${\rm
deg}_y(r_q(y)) < |E|$.  It is the Second Main Technical Issue for proving Theorem
A and it will be explained in the next subsection.)

Meanwhile, the {\em activity current} of the marked point $a$ is defined by
\begin{align*}
T_a := \lim_{n \rightarrow \infty} \frac{1}{|E_n|} (r_q^n \circ a)^* \hat{\omega},
\end{align*}
where $\hat \omega$ is the fiberwise Fubini-Study $(1,1)$ form on $\mathbb{C} \times \mathbb{P}^1$.
Therefore, proving Theorem A reduces to proving the convergence
\begin{align}\label{EQN:DESIRED_CONV}
\tilde{\mu}_n  = \frac{1}{|V_n|} [(r_q^n \circ a)(q) = b(q)] \rightarrow \alpha T_a.
\end{align}
It will be a consequence of Theorems C and C' that are presented in the next subsection.

\subsection{Equidistribution in parameter space}\label{SEC:INTRO_EQUIDISTRIBUTION}

Let $V$ be a connected projective algebraic manifold.
An {\em algebraic family of rational maps of degree $d$} is a rational mapping
\begin{align*}
f: V \times \mathbb{P}^1 \dashrightarrow \mathbb{P}^1
\end{align*}
such that, there exists an algebraic hypersurface $\VDEG \subset V$ (possibly reducible) 
with the property that for each $\lambda \in V \setminus \VDEG$ the mapping
\begin{align*}
f_\lambda: \mathbb{P}^1 \rightarrow \mathbb{P}^1 \quad \mbox{defined by} \quad f_\lambda(z) = f(\lambda,z)
\end{align*}
is a rational map of degree $d$.  
A {\em marked point} is a rational map $a: V \dashrightarrow \mathbb{P}^1$.  (We will
denote the indeterminacy locus of $a$ by $I(a)$.  It is a proper subvariety of codimension at least two.)

Our result will depend heavily on a theorem from arithmetic dynamics due to
Silverman \cite[Theorem E]{silverman} and this will require us to assume that
the manifold $V$, the family $f$, and the marked points $a$ and $b$ are defined
over the algebraic numbers $\Qbar$.  In other words, every polynomial in the
definitions of these objects has coefficients in $\Qbar$.

\begin{convention}
Throughout the paper an algebraic family of rational maps $f: V \times \mathbb{P}^1 \dashrightarrow \mathbb{P}^1$ defined over $\Qbar$ will mean that both $V$ and $f$ are defined over $\Qbar$.
\end{convention}

\begin{THMC}
	Let $f: V \times \mathbb{P}^1 \dashrightarrow \mathbb{P}^1$ be an algebraic family of rational maps of degree $d \geq 2$ defined over $\Qbar$
and let $a, b: V \dashrightarrow \mathbb{P}^1$ be two marked points defined over $\Qbar$.
Extending $\VDEG$, if necessary, we can suppose $I(a) \cup I(b) \subset \VDEG$.

Suppose that:
	\begin{enumerate}
		\item[\rm (i)]  There is no iterate $n$ satisfying $f_\lambda^n a(\lambda) \equiv b(\lambda)$.
		\item[\rm (ii)] The marked point $b(\lambda)$ is not persistently exceptional for $f_\lambda$.
	\end{enumerate}
	Then we have the following convergence of currents on $V \setminus \VDEG$
	\begin{equation}\label{EQN:THMCSTATEMENT_CONV}
	\frac{1}{d^n} \left[(f_\lambda^n \circ a)(\lambda) = b(\lambda) \right] \rightarrow T_a,
	\end{equation}
	where $T_a$ is the activity current of the marked point $a(\lambda)$.	
\end{THMC}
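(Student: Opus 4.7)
My plan is to prove Theorem C by comparing plurisubharmonic potentials on $V \setminus \VDEG$. Fix a small coordinate chart $U \subset V \setminus \VDEG$ admitting a holomorphic lift $F : U \times \mathbb{C}^2 \to \mathbb{C}^2$ of $f$ (homogeneous of degree $d$ in the fibre) together with nowhere-vanishing holomorphic lifts $\widehat a,\widehat b : U \to \mathbb{C}^2\setminus\{0\}$ of the marked points. On $U$ the activity current equals $T_a = dd^c g_a$, where
\begin{equation*}
g_a(\lambda) \;:=\; \lim_{n\to\infty}\, d^{-n}\log\bigl\|F_\lambda^n\widehat a(\lambda)\bigr\|
\end{equation*}
is the fibrewise Green function evaluated along the marked orbit, and the divisor appearing in \eqref{EQN:THMCSTATEMENT_CONV} equals $dd^c$ of
\begin{equation*}
u_n(\lambda) \;:=\; d^{-n}\log\bigl|\widehat b(\lambda)\wedge F_\lambda^n\widehat a(\lambda)\bigr|,
\end{equation*}
with $\wedge$ the $SL_2(\mathbb{C})$-invariant pairing on $\mathbb{C}^2$. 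The theorem therefore reduces to showing $u_n \to g_a$ in $L^1_{loc}(U)$.

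The upper bound is routine: from $|v \wedge w|\leq \|v\|\,\|w\|$ and boundedness of $\|\widehat b\|$ on $U$, one gets $u_n(\lambda) \leq d^{-n}\log\|F_\lambda^n\widehat a(\lambda)\| + O(d^{-n})$, and the right-hand side converges locally uniformly to $g_a(\lambda)$. In particular $\{u_n\}$ is locally bounded above, so standard compactness for psh functions yields that every subsequential $L^1_{loc}$-limit $u$ of $\{u_n\}$ is psh and satisfies $u \leq g_a$ on all of $U$.

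The heart of the proof — the First Main Technical Issue flagged in the introduction — is the matching lower bound. Pointwise $u_n(\lambda_0)$ can plunge to $-\infty$ whenever $f_{\lambda_0}^n a(\lambda_0)=b(\lambda_0)$, and the task is to rule out that such negative spikes survive in the $L^1$ limit. Here the hypothesis that $V,f,a,b$ are all defined over $\Qbar$ enters crucially via \textbf{Silverman's Theorem E}. I would argue that hypotheses (i) and (ii) are exactly what is needed so that, after removing a proper Zariski-closed subset, each specialization at a parameter $\lambda_0 \in U(\Qbar)$ inherits the non-degeneracy required by Silverman: $b(\lambda_0)$ is non-exceptional for $f_{\lambda_0}$, and the forward orbit of $a(\lambda_0)$ is infinite and avoids $b(\lambda_0)$. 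The \textbf{Call--Silverman specialization} theorem recorded in the preamble is the standard tool for transferring these \emph{generic} statements to a dense set of \emph{special} parameters. Silverman's Theorem E then delivers
\begin{equation*}
d^{-n}\log\dist\!\bigl(f_{\lambda_0}^n a(\lambda_0),\,b(\lambda_0)\bigr) \;\longrightarrow\; 0
\end{equation*}
at each such $\lambda_0$, which, combined with pointwise convergence of $d^{-n}\log\|F_{\lambda_0}^n\widehat a(\lambda_0)\|$ to $g_a(\lambda_0)$, yields $u_n(\lambda_0)\to g_a(\lambda_0)$ on a Zariski-dense (hence topologically dense) subset $D\subset U$.

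To finish, any subsequential $L^1_{loc}$-limit $u$ of $\{u_n\}$ is psh, satisfies $u\leq g_a$ on $U$, and agrees with $g_a$ at every point of $D$. A standard argument — using upper semicontinuity of $u$, continuity of $g_a$ off the activity locus, and comparison of the masses carried by $dd^c u$ and $T_a$ — then forces $u\equiv g_a$. Since every subsequential limit is the same, the whole sequence $u_n$ converges in $L^1_{loc}(U)$ to $g_a$, and applying $dd^c$ produces \eqref{EQN:THMCSTATEMENT_CONV}. I expect the principal obstacle to lie in the deployment of Silverman's Theorem E: reducing the generic geometric hypotheses (i) and (ii) to the specialized arithmetic hypotheses Silverman requires at individual algebraic parameters will demand a careful Zariski/specialization argument, and ensuring that enough parameters in $U$ simultaneously satisfy every non-degeneracy condition is where most of the real work will go.
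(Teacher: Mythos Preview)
Your overall architecture matches the paper's: reduce to showing $h_n(\lambda):=d^{-n}\log\dist(f_\lambda^n a(\lambda),b(\lambda))\to 0$ in $L^1_{\rm loc}$, establish the easy upper bound, and for the lower bound produce a \emph{dense} set of parameters where $h_n(\lambda)\to 0$ pointwise, then upgrade.  But there are two real gaps.

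\textbf{The dense set of ``Silverman-applicable'' parameters.}  You assert that, after removing a proper Zariski-closed set, $a(\lambda_0)$ has infinite forward orbit for every $\lambda_0\in U(\Qbar)$, and that Call--Silverman specialization delivers this.  It does not.  Call--Silverman only tells you that $\hat h_{f_{\lambda_n}}(a(\lambda_n))/h(\lambda_n)\to \hat h_f(a)$ along sequences with $h(\lambda_n)\to\infty$; to conclude that preperiodic parameters are sparse you would additionally need $\hat h_f(a)>0$, which requires DeMarco's theorem and the extra hypotheses that $\dim V=1$ and that $a$ is active \emph{somewhere}.  In general neither holds.  The paper instead splits into the active and passive loci.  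On the active locus it does not use Silverman at all: a Montel argument shows that arbitrarily close to any active $\lambda_0$ one can force the orbit of $a$ to land on a repelling cycle disjoint from $b$, giving $h_n\to 0$ directly.  On the passive locus the paper invokes the Dujardin--Favre classification, which is exactly the missing structural ingredient: it guarantees that on each passive component either $a$ is \emph{persistently} preperiodic (a case your outline omits entirely, and where Silverman's Theorem~E is inapplicable --- one argues directly that $\dist(f_\lambda^n a(\lambda),b(\lambda))$ is bounded below off a proper subvariety), or the preperiodic parameters lie in a proper closed subvariety, after which one chooses $\lambda_1\in\Qbar$ avoiding that subvariety and applies Silverman.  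Without Dujardin--Favre (or the restrictive arithmetic substitute above) there is no mechanism in your sketch to rule out preperiodicity of $a(\lambda)$ on a dense set of $\Qbar$-parameters.

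\textbf{Upgrading pointwise-on-a-dense-set to $L^1_{\rm loc}$.}  Your endgame (``upper semicontinuity of $u$, continuity of $g_a$ off the activity locus, comparison of masses'') is not a standard argument and is not obviously correct: an $L^1_{\rm loc}$-limit need not agree with pointwise limits on any prescribed countable set, and your dense set $D$ may well lie entirely in the activity locus where $g_a$ is not continuous.  The paper's argument is cleaner and different: by Hartogs' Lemma, failure of $h_n\to 0$ in $L^1_{\rm loc}$ would produce an \emph{open} set on which a subsequence converges pointwise to something strictly negative, immediately contradicting density of the good parameters.
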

\noindent
The precise definition of activity current will be given in Section \ref{basicsinactivitycurrents}.

The following version of Theorem C holds on all of $V$, without removing $\VDEG$, an essential feature
for our application to Theorem A.

\begin{THMC'}
        Let $f: V \times \mathbb{P}^1 \dashrightarrow \mathbb{P}^1$ be an algebraic family of rational maps of degree $d \geq 2$ defined over $\Qbar$
and let $a, b: V \dashrightarrow \mathbb{P}^1$ be two marked points defined over $\Qbar$.
Suppose that
        \begin{enumerate}
                \item[\rm (i)]  There is no iterate $n$ satisfying $f_\lambda^n a(\lambda) \equiv b(\lambda)$.
                \item[\rm (ii)] The marked point $b(\lambda)$ is not persistently exceptional for $f_\lambda$.
        \end{enumerate}
Consider the rational map
\begin{align*}
F: V \times \mathbb{P}^1 \dashrightarrow V \times \mathbb{P}^1  \quad \mbox{defined by} \quad
F(\lambda,z) = (\lambda,f(\lambda,z)).
\end{align*}
Then the following sequence of currents on $V$
        \begin{equation}\label{EQN:THMCPRIME_CONV}
        (\pi_{1})_* \left(\frac{1}{d^n} (F^{n})^*\left[z=b(\lambda)\right] \ \wedge \ \left[z = a(\lambda)\right]\right)
        \end{equation}
converges and the limit equals 
$T_a$ when restricted to $V \setminus \VDEG$.
Here, $\pi_1: V \times \mathbb{P}^1 \rightarrow \mathbb{P}^1$ is the projection
onto the first coordinate $\pi_1(\lambda,z) = \lambda$.
\end{THMC'}

\begin{remark}
We have phrased Theorems C and C' in their natural level of generality.
However, in most applications that we have in mind (in particular to the
chromatic zeros), one can use $V = \mathbb{P}^m$ and define everything in the
usual affine coordinates $\mathbb{C}^m \subset \mathbb{P}^m$ in the following
ways:
\begin{itemize}
\item[(i)]  $f(\lambda,z) = \frac{P(\lambda,z)}{Q(\lambda,z)}$ with $P,Q \in \Qbar[\lambda,z]$ and having
no common factors of positive degree in $\Qbar[\lambda,z]$, and
\item[(ii)] $a(\lambda) = \frac{R(\lambda)}{S(\lambda)}$
with $R,S \in \Qbar[\lambda]$ and having no common factors of positive degree in $\Qbar[\lambda]$ (and similarly
for $b(\lambda)$).
\end{itemize}
The reader can keep in mind the simple case of the renormalization mapping for the DHL (\ref{EQN:RFORDHL}) in which case everything is defined over $\mathbb{Q} \subset \Qbar$.  Here $V = \mathbb{P}^1$,
\begin{itemize}
\item[(i)] $r(q,y) = \left(\frac{y^2+q-1}{2y+q-2}\right)^2$,
\item[(ii)] $a(q) \equiv 0$, and $b(q) = 1-q$.
\end{itemize}
The degree of this family is $d=4$ and $\VDEG = \{0,\infty\}$ because the degree of $r_q(y)$ drops
when $q=0$ and $q=\infty$ but at no other values of $q$.
\end{remark}

The proofs of Theorem C and C' will  closely follow the strategy that
Dujardin-Favre use in \cite[Theorem 4.2]{DF}.  However:

\vspace{0.1in}
\noindent
{\bf Second Main Technical Issue:} The proof of \cite[Theorem 4.2]{DF} requires
a technical ``Hypothesis~(H)'' that is not satisfied for the Migdal-Kadanoff renormalization mapping (\ref{EQN:RFORDHL})
for the DHL (and presumably not satisfied for many other hierarchical lattices).  Indeed, 
$q=0 \in \VDEG$ for this mapping  and there are active parameters accumulating to $q=0$.  
One sees this in Figure \ref{FIG:DHLSUPP} where $q=0$ is the ``main cusp'' on the left side of the black region.

\vspace{0.1in}
\noindent
Our assumption that the family and the marked points are defined over $\Qbar$ allows
us to avoid Hypothesis (H).  Note that, using quite different techniques, Okuyama
has proved in \cite[Theorem~1]{okuyama} a version of \cite[Theorem 4.2]{DF} without Hypothesis (H).  His proof
requires the marked point to be critical, but does not require working over $\Qbar$.  

Once Theorem C is proved, one can extend the convergence (\ref{EQN:THMCSTATEMENT_CONV})
across $\VDEG$ by an application of the compactness theorem for families of
plurisubharmonic functions \cite[Theorem 4.1.9]{hormander}, thus proving
Theorem C'.  Note that a similar statement to Theorem C' is found in  the work
of Gauthier-Vigny \cite[Corollary 3.1]{GV}.  The proof there also uses such
compactness to extend a given convergence across various ``bad'' parameters that are
analogous to our $\VDEG$.

\subsection{Brief History of Migdal-Kadanoff Renormalization}\label{SEC:HISTORY}
The renormalization mapping $r_q(y)$ given in (\ref{EQN:RFORDHL}) for the DHL
and its variants for other hierarchical lattices date back to the early 1980s.
More specifically, Migdal \cite{MIGDAL1,MIGDAL2} and Kadanoff
\cite{KADANOFF1976} described approximate renormalization equations for the
Ising Model on the $\mathbb{Z}^d$.  Berker-Ostland \cite{BO}, Bleher-\v{Z}alys
\cite{BZ3}, Kaufman-Griffiths \cite{GK}, Derrida-De Seze-Itzykson \cite{DDI},
Kinzel-Domany \cite{PhysRevB.23.3421},  Andelman-Berker \cite{PhysRevB.29.2630}, and others noticed that these
equations became exact on suitable hierarchical lattices and that the setting extends to the Potts model.  
Equation (\ref{EQN:RFORDHL}) plays a prominent role in several of the papers referenced above.
Study of Potts Models on Hierarchical Lattices continues to be an active area of physics \cite{PhysRevB.80.134201}.

\subsection{Recent works on interplay between holomorphic dynamics and statistical physics}
The present work lies in the context of several recent papers where holomorphic dynamics has played
a role in studying problems from statistical physics.  We describe a sample of them here.

To the best of our knowledge each of the previous works mentioned in Section \ref{SEC:HISTORY} focuses on zeros of
the partition function in the complex temperature plane (or sometimes the
complex magnetic field plane) but not in the $q$-plane for fixed complex
temperature.  
Studying the zeros of the partition function in the complex $q$-plane requires
quite different techniques.  To the best of our knowledge, the first time they
were studied for hierarchical lattices is by Royle-Sokal in Appendix B of \cite{ROYLESOKAL}, 
where the accumulation loci of chromatic zeros for the leaf
joined trees are studied.  Because they are interested in the accumulation loci instead
of the limiting measure of chromatic zeros, they are able to use a classical
Proposition of Lyubich \cite[Proposition 3.5]{MR751394} to deduce their results.

The novelty of our paper is that we also use holomorphic dynamics to study the
chromatic zeros, but for a different type of hierarchical lattices.  Moreover,
we are interested in the limiting measure of chromatic zeros rather than the
accumulation locus of them.  This requires us to prove a new theorem in
holomorphic dynamics (Theorem C) which can be interpreted as a quantitative
version of the Proposition  of Lyubich \cite[Proposition 3.5]{MR751394}.

One can interpret a rooted Cayley Tree as a type of hierarchical lattice, and
this allows one to apply a renormalization theory that is similar to the
Migdal-Kadanoff version used in this paper, in order to study statistical
physics on such trees.  This led to holomorphic dynamics playing an important
role in proof of the Sokal Conjecture by Peters and Regts \cite{PR1} and also
in their work on the location of Lee-Yang zeros for bounded degree graphs
\cite{PR2}.  The same renormalization theory was also recently used in
combination with techniques from dynamical systems by He, Ji, and the authors
of the present paper to characterize the limiting measure of Lee-Yang zeros for
the Cayley Tree \cite{CHJR}.

Meanwhile, holomorphic dynamics has been used by Bleher, Lyubich, and the second author of the present
paper to characterize the limiting measure of Lee-Yang zeros for the DHL \cite{BLR1} and also
to describe the limit behavior of the Lee-Yang-Fisher zeros for the DHL \cite{BLR2}.

Finally, let us note that in the recent paper \cite{MR4118578} Chang-Roeder-Shrock study the accumulation
loci of $q$-plane zeros for the Diamond Hierarchical lattice and various fixed values of temperature $y$, both in
ferromagnetic and antiferromagnetic regimes.  The key technique in that paper is again \cite[Proposition 3.5]{MR751394}.

\subsection{Relationship to a conjecture of Sokal}
For any connected graph  $\Gamma$ let $\Delta(\Gamma)$ denote the maximal degree of
a vertex of $\Gamma$.  There is a conjecture of Sokal which asserts that $|P_\Gamma(q)| > 0$ for all complex $q$ satisfying ${\rm Re}(q) > \Delta(\Gamma)$.  (See, for example, \cite[Conjecture 21]{JACKSON_CONJ}.)  

The techniques in our paper do not give insight into this conjecture because
our hypothesis that the generating graph be $2$-connected leads to the marked
vertices $a$ and $b$ having degree two or larger.  This results in 
$\Delta(\Gamma_n)$ becoming unbounded as $n$ tends to infinity.  We use $2$-connectivity of the
generating graph to guarantee that the renormalization mapping does not have
common factors (of positive degree) in the numerator and denominator.  (See Sections \ref{SEC:MK_RENORM_DHL} and  \ref{SUBSEC:MK_ARBITRARY}.)  We do
not presently see how to work around this hypothesis, however it may be quite
interesting for future study.

An additional challenge is that our techniques are about the limiting measure
of chromatic zeros and hence would not detect regions in the $q$ plane where there are a
negligible proportion chromatic zeros, in the limit as $n$ tends to infinity.

\subsection{Structure of the paper}
In Section \ref{basicsinactivitycurrents} we present background on activity
currents and describe the Dujardin-Favre classification of the passive locus,
that will play an important role in the proofs of Theorems C and C'.
Theorems C and C' are proved in
Section \ref{SEC:THMC} and Section \ref{SEC:PROOF_THMC'}.

We return to the problem of chromatic zeros in
Section \ref{PottsModel} by 
providing background on their connection with the Potts Model from statistical physics.
We also set up the renormalization mapping $r_q(y)$ associated to any hierarchical lattice having $2$-connected
generating graph.
We prove Theorem A in Section \ref{SEC:PROOF_THMA} by verifying the hypotheses of Theorem C'.

For the $k$-fold DHL with $k \geq 2$, one can check that the critical points  $y= \pm \sqrt{1-q}$
satisfy $r_q(\pm \sqrt{1-q}) \equiv 0 \equiv a(q)$.  Therefore, a result of
McMullen \cite[Corollary 1.6]{McM1} gives that ${\rm supp}(T_a)$ has Hausdorff
dimension $2$.  This is explained in Section \ref{SEC:PROOF_THMB}, where we
prove Theorem B.

We conclude the paper with Section \ref{SEC:EXAMPLES} were we 
discuss the chromatic zeros associated with the hierarchical lattices generated by each of the
graphs shown in Figure \ref{generating_graphs}.  We also provide a more detailed
explanation of Figures \ref{FIG:DHLSUPP} and \ref{FIG:SPLIT_DIAMOND}.

%Appendix \ref{APP:THMD} presents a non-arithmetic version of Theorem C.

\vspace{0.1in}
\noindent
{\bf Acknowledgments:}
We are very grateful to Robert Shrock for introducing us to the problem of
understanding the limiting measure of chromatic zeros for a lattice and for
several helpful comments about our paper.  We are also very grateful to Laura
DeMarco and Niki Myrto Mavraki who have given us guidance on arithmetic
dynamics and also provided the details from Subsection~4.2 (Arithmetic proof of
Proposition 4.2) as well as Proposition \ref{PROP:SYMMETRY}.  We also thank
Romain Dujardin, Charles Favre, Thomas Gauthier, and Juan Rivera-Letelier for
interesting discussions and comments.  We thank the anonymous referees for
their detailed reading of our paper and for several comments and suggestions,
which have helped to considerably improve the paper.  This work was supported by NSF grant
DMS-1348589.

%\subsection{Basic Notations Used} Let $\omega$ be the Fubini-Study form on
%$\mathbb{P}^1$, normalized so that $\int_{\mathbb{P}^1} \omega = 1$. Let $\pi_1
%: \Lambda \times \mathbb{P}^1 \rightarrow \Lambda, \pi_2: \Lambda \times
%\mathbb{P}^1 \rightarrow \mathbb{P}^1$ be the two natural projection maps, and
%let $\widehat{\omega}:=\pi_2^* \omega$. For two distinct holomorphic functions
%$\phi, \psi$ on a connected complex manifold, let $[\phi = \psi]$ be the
%\textit{current of integration} over the divisor defined by the equation $\phi
%= \psi$. The Poincar\'{e}-Lelong formula asserts that ${\rm dd}^{c} \log |\phi -
%\psi|=[\phi=\psi]$. Given $z=[z_0 : z_1], w=[w_0 : w_1] \in \mathbb{P}^1$, we
%denote $[z,w]$ the chordal distance on $\mathbb{P}^1$, given in homogeneous
%coordinates by

%\begin{equation*} [z,w]^2=\frac{|z_0w_1-z_1w_0|^2}{(|z_0|^2+|z_1|^2)(|w_0|^2+|w_1|^2)}.
%\end{equation*}

%%%%%%%%%%%%%%%%%%%%%%%%%%%%%%%%%%%%%%%%%%%%%%%%%%%%%%%%%%%%%%%%%%%%%%%%%%%%%%%%%%%%%%%%%%%%%%%%%%%%%%%
%%%%%%%%%%%%%%%%%%%%%%%%%%%%%%%%%%%%%%%%%%%%%%%%%%%%%%%%%%%%%%%%%%%%%%%%%%%%%%%%%%%%%%%%%%%%%%%%%%%%%%%

\section{Basics in Activity Currents}\label{basicsinactivitycurrents}

\subsection{Holomorphic Families, Marked Points, and Active/Passive Dichotomy}

Let $\Lambda$ be a connected complex manifold. A {\em holomorphic family of
rational maps of degree $d \geq 2$} is a holomorphic map $f:\Lambda \times
\mathbb{P}^1 \rightarrow \mathbb{P}^1$ such that $f_{\lambda}:=f(\lambda,
\cdot): \mathbb{P}^1 \rightarrow \mathbb{P}^1$ is a rational map of degree $d$
for every $\lambda \in \Lambda$.  

Associated with $f_\lambda$ is the skew product mapping
\begin{align}
F: \Lambda \times \mathbb{P}^1 \rightarrow \Lambda \times \mathbb{P}^1  \quad  \mbox{given by} \quad
F(\lambda,z) = (\lambda,f_\lambda(z)).
\end{align}
Note that it is conventional in the literature to denote by $f_\lambda^n(z)$ the second component of $F^n(\lambda,z)$.

A {\em marked point} is a holomorphic map $a:
\Lambda \rightarrow \mathbb{P}^1$.
The marked point $a:\Lambda \rightarrow \mathbb{P}^1$ is called {\em passive}
at $\lambda_0 \in \Lambda$ if the family $\{f^n_\lambda a(\lambda) \}$ is
normal in some neighborhood of $\lambda_0$, otherwise $a$ is said to be {\em
active} at $\lambda_0$. The set of all parameters where $a$ is active is called the {\em active
locus of $a$}.  

\begin{remark}
Historically in holomorphic dynamics one considers {\em marked critical
points}, i.e. marked points $a(\lambda)$ that are critical points of
$f_\lambda$ for every parameter $\lambda$.  In this paper it will be crucial to
consider non-critical marked points.  Fortunately, several results from the
classical literature carry over to our setting.  We will review in this section
the results that we need and carefully check that the marked points need not be
critical.

Let us note that recently there has been considerable interest in the dynamical
properties of non-critical marked points, with some of the motivations coming
from problems in arithmetic dynamics.  As a sample of such recent papers, we
refer the reader to \cite{DeM3,DEMARCO_KAWA,GAUTHIER1,FAVRE_GAUTHIER} and the references therein.
\end{remark}

\subsection{Activity Current for Holomorphic Families}
\label{SEC:LOCAL_ACTIVE_CURRENT}
The active locus naturally supports a closed positive $(1,1)$ current $T_a$ called the {\em activity current of $a(\lambda)$}, introduced
by Laura DeMarco in \cite{DeM2}.   (We refer the reader to \cite{DEMAILLY} for background on currents, plurisubmarmonic (PSH) functions, and Monge-{A}mp\`ere operators.)

The construction of $T_a$ can be done in a coordinate-free manner, however we will first express it in local coordinates, which are simpler for explicit calculations and allow us to check that the marked point need not be critical.

Suppose $\Lambda$ is an open subset of~$\mathbb{C}^m$.
We can choose a lift
$\tilde{f}: \Lambda \times \mathbb{C}^2 \rightarrow \mathbb{C}^2$ which is holomorphic and so that for each
$\lambda \in \Lambda$,
\begin{align}\label{EQN:LIFT_OF_F}
\tilde{f}_\lambda (z, w) := \tilde{f}(\lambda, z, w)= (P_\lambda(z, w), Q_\lambda(z,w)),
\end{align}
where $P_\lambda, Q_\lambda$ are both homogeneous polynomials of degree $d$.
Similarly, the marked
point $a: \Lambda \rightarrow~\mathbb{P}^1$ can be lifted to a
holomorphic map
\begin{align*}
\tilde{a}: \Lambda \rightarrow \mathbb{C}^2 \setminus \{(0,0)\}.
\end{align*}
The choices of lifts $\tilde{f}$ and $\tilde{a}$ are unique up to a non-vanishing scaling factor that depends holomorphically on $\lambda$.

The function
$G_n: \Lambda \times \mathbb{C}^2 \rightarrow [-\infty,\infty)$ given by
\begin{align*}
G_n(\lambda,(z,w)) := \frac{1}{d^n}  \log||\tilde{f}^n_\lambda(z,w)||
\end{align*}
is PSH in $(\lambda,z,w)$, where $|| \cdot ||$ is the Euclidean norm on $\mathbb{C}^2$.  Over any compact subset of $\Lambda$ there is a constant $C > 0$ such
that 
\begin{align*}
C^{-1} ||(z,w)||^d \leq ||\tilde{f}_\lambda(z,w)|| \leq C ||(z,w)||^d.
\end{align*}
Using this one can check that the $G_n$ converge uniformly on compact subsets
of $\Lambda \times \left(\mathbb{C}^2 \setminus \{(0,0)\}\right)$.  This 
implies that the limit $G:\Lambda \times \mathbb{C}^2 \rightarrow
[-\infty,\infty)$ is PSH and continuous off of $\Lambda \times \{(0,0)\}$.

Let ${\rm pr}: \mathbb{C}^2 \setminus \{(0,0)\} \rightarrow \mathbb{P}^1$ be
the canonical projection.   For any sufficiently small open $U \subset
\mathbb{P}^1$ there is a holomorphic $s: U \rightarrow \mathbb{C}^2 \setminus
\{(0,0)\}$ such that ${\rm pr} \circ s(u) = u$ for all $u \in U$.
One defines the {\em fiberwise Green current} $\hat{T}$ on $\Lambda \times \mathbb{P}^1$ locally on $\Lambda \times U$
by 
\begin{align}\label{EQN:FIBERWISE_GREEN_LOCAL_POTENTIAL}
\hat{T} := {\rm dd}^c G(\lambda,s(u)).
\end{align}
Here, ${\rm dd}^c := \frac{i}{\pi} \partial \overline{\partial}$ is the
Monge-Amp\`{e}re operator and ${\rm d}$ is the exterior derivative (not to be
confused with the degree $d$ of a rational map).  In (\ref{EQN:FIBERWISE_GREEN_LOCAL_POTENTIAL})
${\rm dd}^c$ is taken with respect to $(\lambda,u)$.

Now consider the functions
\begin{align}\label{EQN:HN_H}
H_n(\lambda):= G_n(\lambda,\tilde{a}(\lambda)) \qquad \mbox{and} \qquad  H(\lambda):= G(\lambda,\tilde{a}(\lambda)).
\end{align} 
Both are PSH functions and the locally uniform convergence $G_n$ to $G$ described in the previous paragraph implies
locally uniform convergence of $H_n$ to $H$.
We define the {\em activity current} by
\begin{align*}
T_a := {\rm dd}^c H(\lambda).
\end{align*}

When defining $\hat{T}$ and $T_a$ we have made choices of lifts
$\tilde{f}_\lambda$, $\tilde{a}$, and $s: U \rightarrow \mathbb{C}^2 \setminus
\{(0,0)\}$.  One can check that a different choice results in adding a
pluriharmonic function to $G(\lambda,s(u))$ and/or to $H(\lambda)$ and hence
does not affect the definitions of $\hat{T}$ and $T_a$.

If $\Lambda$ is a complex manifold that is not an open subset of
$\mathbb{C}^m$ then one defines $T_a$ using the above formula in local coordinates.  The definition is compatible
under change of coordinates.

\begin{theorem}\label{suppofbifcur} {\bf (DeMarco \cite{DeM2})}  The support of the activity current $T_a$  coincides with the active locus of $a$.
\end{theorem}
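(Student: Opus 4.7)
The plan is to prove the two inclusions ``$\lambda_0$ passive $\Longrightarrow \lambda_0 \notin \text{supp}(T_a)$'' and ``$\lambda_0$ active $\Longrightarrow \lambda_0 \in \text{supp}(T_a)$'' separately, both relying on the identity $T_a = dd^c H$ with $H = \lim_n H_n$. The common preliminary step is to establish an exponential convergence rate
\begin{align*}
|H(\lambda) - H_n(\lambda)| \leq \frac{C_K}{d^n} \quad \text{on every compact } K \subset \Lambda,
\end{align*}
with $C_K$ independent of $n$. This would follow by telescoping: the homogeneity of $\tilde{f}_\lambda$ promotes the estimate $C^{-1} \|w\|^d \leq \|\tilde{f}_\lambda(w)\| \leq C \|w\|^d$ to hold for \emph{all} $w \in \mathbb{C}^2 \setminus \{0\}$ uniformly in $\lambda \in K$, which gives $|H_n - H_{n-1}| \leq (\log C)/d^n$, and summing a geometric series delivers the claim.

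For the first inclusion, I would suppose $\{f^n_\lambda a(\lambda)\}$ is normal on a neighborhood $V$ of $\lambda_0$ and extract a locally uniformly convergent subsequence $f^{n_k}_\lambda a(\lambda) \to \phi(\lambda)$. After a coordinate inversion on $\mathbb{P}^1$ if $\phi(\lambda_0) = \infty$, choose a holomorphic section $s$ of the tautological line bundle on an open set containing $\phi(V)$. Shrinking $V$, one may factor $\tilde{f}^{n_k}_\lambda \tilde{a}(\lambda) = \beta_{n_k}(\lambda)\, s(f^{n_k}_\lambda a(\lambda))$ with $\beta_{n_k} : V \to \mathbb{C}^*$ holomorphic, which yields
\begin{align*}
H_{n_k}(\lambda) = \frac{1}{d^{n_k}} \log |\beta_{n_k}(\lambda)| + \frac{1}{d^{n_k}} \log \|s(f^{n_k}_\lambda a(\lambda))\|,
\end{align*}
where the second term is uniformly $O(d^{-n_k})$ and the first is pluriharmonic. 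Since $H_{n_k} \to H$ uniformly, $H$ is a uniform limit of pluriharmonic functions and is therefore pluriharmonic, so $T_a = dd^c H$ vanishes in a neighborhood of $\lambda_0$.

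For the second inclusion I would argue the contrapositive: if $T_a = 0$ on a neighborhood $U$ of $\lambda_0$, then $a$ is passive at $\lambda_0$. On a simply connected subneighborhood $V \subset U$, pluriharmonicity lets us write $H = \text{Re}(g)$ for some holomorphic $g$, and I would introduce the rescaled lift
\begin{align*}
\psi_n(\lambda) := e^{-d^n g(\lambda)}\, \tilde{f}^n_\lambda \tilde{a}(\lambda) \in \mathbb{C}^2 \setminus \{0\}.
\end{align*}
A direct computation gives $\log \|\psi_n(\lambda)\| = d^n(H_n(\lambda) - H(\lambda))$, which by the preliminary estimate is uniformly bounded both above and below on $V$. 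Hence $\{\psi_n\}$ takes values in a fixed compact annular region of $\mathbb{C}^2 \setminus \{0\}$, is a normal family, and projects to a normal family $[\psi_n] = f^n_\lambda a$ on $\mathbb{P}^1$, so $a$ is passive at $\lambda_0$. The hardest step will be discovering the rescaling device $\psi_n$: it succeeds only because the exponential convergence rate $|H - H_n| = O(d^{-n})$ is precisely fast enough to cancel the factor $e^{d^n g}$, a delicate interplay that ultimately rests on the homogeneity of the lift $\tilde{f}$.
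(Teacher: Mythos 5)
Your proof is correct and it is, in essence, the argument that DeMarco gives in the cited source; the paper itself does not reproduce the proof but only records the key equivalence (normality of $\{f_\lambda^n a(\lambda)\}$ near $\lambda_0$ $\iff$ pluriharmonicity of $G(\lambda,\tilde a(\lambda))$ near $\lambda_0$) and remarks that the proof is explicit and does not require $a$ to be critical. Your two directions realize that equivalence exactly: the normal-family-to-pluriharmonic direction via the section $s$ and the scalar $\beta_{n_k}$, and the converse via the rescaled lift $\psi_n = e^{-d^n g}\tilde f_\lambda^n\tilde a$ together with the telescoped bound $|H_n-H|=O(d^{-n})$ — the same device as in DeMarco's original argument, and nowhere is criticality of $a$ used.
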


\noindent
Indeed, in \cite[Theorem 9.1]{DeM2} DeMarco proves the equivalence
of the following two statements:
\begin{itemize}
\item[(i)] The functions $\{\lambda \mapsto f_\lambda^n(a(\lambda)) \, : \, n \geq 0\}$ forms a normal family in a neighborhood of $\lambda_0$.
\item[(ii)] For any holomorphic lift $\tilde{a}(\lambda)$ the function $G(\lambda,\tilde{a}(\lambda))$ is pluriharmonic in a neighborhood of $\lambda_0$.
\end{itemize}
Although we do not include the proof here, let us note that it is explicit, concise and does not require the marked point $a(\lambda)$ to be critical.

\subsection{Coordinate-free description of activity current $T_a$.}
Let $\omega$ be the Fubini-Study $(1,1)$ form on $\mathbb{P}^1$ and let $\hat
\omega = \pi_2^* \omega$, where $\pi_2(\lambda,z) = z$ is the projection onto
the second coordinate.  

%The next proposition and corollary are standard results in complex
%dynamics, see for example \cite[Proposition 3.1]{DF}.

\begin{proposition}\label{PROP:GREENCURRENT}
We have
\begin{align}
\hat{T} = \lim_{n \rightarrow \infty} \frac{1}{d^n} (F^n)^* \widehat{\omega} \qquad \mbox{and}  \qquad T_a = \lim_{n \rightarrow} \frac{1}{d^n} (f_\lambda^n \circ a)^* \omega,
\end{align}
where $F(\lambda,z) = (\lambda,f_\lambda(z))$ is the skew product associated with $f_\lambda$.
\end{proposition}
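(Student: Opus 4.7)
The plan is to reduce both identities to local potential calculations and then invoke the locally uniform convergence $G_n \to G$ (and hence $H_n \to H$) already established when $\hat{T}$ and $T_a$ were defined. I would proceed in local coordinates on $\Lambda \times U$ where $U \subset \mathbb{P}^1$ is a small chart equipped with a holomorphic section $s: U \to \mathbb{C}^2 \setminus \{(0,0)\}$ of the tautological projection $\mathrm{pr}$, and use the standard fact that the Fubini--Study form admits the local potential $\omega|_U = \mathrm{dd}^c \log\|s(u)\|$. By definition, this also gives $\hat{\omega} = \mathrm{dd}^c \log\|s(u)\|$ on $\Lambda \times U$ (treating $u$ as the fiber coordinate).

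The key observation is that $\tilde{f}^n_\lambda(s(u))$ and $s(f^n_\lambda(u))$ both lie in $\mathbb{C}^2 \setminus \{(0,0)\}$ and project to the same point $f^n_\lambda(u) \in \mathbb{P}^1$. Hence on the open set where $f^n_\lambda(u) \in U$ (shrinking the base chart if necessary) there is a non-vanishing holomorphic function $\phi_n(\lambda,u)$ with
\[
\tilde{f}^n_\lambda(s(u)) \;=\; \phi_n(\lambda,u) \, s\bigl(f^n_\lambda(u)\bigr).
\]
Taking $\log\|\cdot\|$ and then $\frac{1}{d^n}\mathrm{dd}^c$, the contribution $\frac{1}{d^n}\mathrm{dd}^c \log|\phi_n|$ vanishes because $\log|\phi_n|$ is pluriharmonic. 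What remains is exactly
\[
\mathrm{dd}^c G_n(\lambda, s(u)) \;=\; \tfrac{1}{d^n}(F^n)^*\hat{\omega}
\]
on the chart. Since $G_n \to G$ locally uniformly on $\Lambda \times (\mathbb{C}^2 \setminus\{(0,0)\})$, the left-hand side converges as currents to $\mathrm{dd}^c G(\lambda,s(u))$, which is the local potential definition of $\hat{T}$. Patching over a cover of $\mathbb{P}^1$ by such charts yields the first convergence globally.

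The second identity is handled in essentially the same way by composing with the marked point. Using the lift $\tilde{a}(\lambda)$, the point $\tilde{f}^n_\lambda(\tilde{a}(\lambda))$ projects to $f^n_\lambda(a(\lambda))$, so on any chart $U \ni f^n_\lambda(a(\lambda))$ there is a non-vanishing holomorphic $\psi_n(\lambda)$ with $\tilde{f}^n_\lambda(\tilde{a}(\lambda)) = \psi_n(\lambda)\, s(f^n_\lambda(a(\lambda)))$. Taking $\log\|\cdot\|$, the pluriharmonic term $\log|\psi_n|$ is killed by $\mathrm{dd}^c$, giving
\[
\mathrm{dd}^c H_n(\lambda) \;=\; \tfrac{1}{d^n} (f^n_\lambda \circ a)^*\omega.
\]
Locally uniform convergence $H_n \to H$, which was already established in Section~\ref{SEC:LOCAL_ACTIVE_CURRENT}, implies $L^1_{\mathrm{loc}}$ convergence, and $\mathrm{dd}^c$ is continuous in this topology, so $\tfrac{1}{d^n}(f^n_\lambda \circ a)^*\omega \to \mathrm{dd}^c H = T_a$.

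The only mild subtlety — and really the one point needing a moment of care — is the dependence on the choices of local section $s$, the lift $\tilde{f}$, and the lift $\tilde{a}$: each alternative choice changes the relevant $\log\|\cdot\|$ by a locally pluriharmonic function, so the resulting currents are unaffected, and the patching across different charts of $\mathbb{P}^1$ is coherent. Once this is recorded, the argument is essentially a bookkeeping exercise built on the potential-theoretic convergence already in hand. Everything extends from an open subset of $\mathbb{C}^m$ to a general complex manifold $\Lambda$ by passing to local charts, as was done in the definitions of $\hat{T}$ and $T_a$.
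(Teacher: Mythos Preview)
Your proof is correct and follows the same approach as the paper: both reduce to the locally uniform convergence $G_n \to G$ (and $H_n \to H$) combined with the standard potential formula for the Fubini--Study form. The paper's version is slightly more streamlined in that it invokes the global identity $\mathrm{pr}^*\omega = \mathrm{dd}^c \log\|(z,w)\|$ on $\mathbb{C}^2\setminus\{(0,0)\}$ directly, which makes the introduction of the auxiliary non-vanishing functions $\phi_n,\psi_n$ and the condition $f^n_\lambda(u)\in U$ unnecessary---any holomorphic lift of $f^n_\lambda \circ \mathrm{(section)}$ already computes the pullback potential---but this is a cosmetic difference, not a substantive one.
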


\begin{proof}
This is a consequence of the convergence of the potentials $G_n(\lambda,(z,w))$ to $G(\lambda,(z,w))$ that was discussed in the previous subsection and the fact that
\begin{align*}
{\rm pr}^* \omega = {\rm dd}^c \log \|(z,w)\|,
\end{align*} 
where ${\rm pr}: \mathbb{C}^2 \setminus \{(0,0)\} \rightarrow \mathbb{P}^1$ is the canonical projection; see, for example, \cite[p. 30]{GH}.
\end{proof}

There is an equivalent, alternative, description of the activity
current $T_a$ given in \cite[Proposition 3.1]{DF}.  We state it here because it ties nicely with the discussion presented in the 
introduction of our paper, but we note that it's not actually needed in our proofs of Theorems C and C'.
Let $v_n, v_\infty$ be the local potentials of $d^{-n}(F^n)^* \widehat{\omega}$
and $\widehat{T}$ respectively. In the proof of
\cite[Proposition 3.1]{DF} one sees that all the $v_n$'s and
$v_\infty$ are continuous, and particularly $v_n \rightarrow v_\infty$ locally
uniformly. (This corresponds to the properties and convergence of $G_n(\lambda,(z,w))$ to
$G(\lambda,(z,w))$ described in Section \ref{SEC:LOCAL_ACTIVE_CURRENT}.) 
We therefore have the following corollary:

\begin{corollary}\label{bifurcationdefinition}
For any marked point $a:\Lambda \rightarrow \mathbb{P}^1$, we have the following convergence of intersection of currents:
 \begin{equation}\label{EQN:CONV_TO_ACTIVITY_CURRENT_DEF}
 \frac{1}{d^n}(F^n)^* \widehat{\omega} \wedge \left[z = a(\lambda)\right] \rightarrow \widehat{T} \wedge \left[z = a(\lambda)\right].
 \end{equation}
\end{corollary}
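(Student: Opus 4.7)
The plan is to pass to local potentials and then invoke standard Bedford--Taylor continuity. Fix a point of $\Lambda \times \mathbb{P}^1$ and a small polydisc neighborhood $U$ on which the Fubini-Study current $\hat\omega$ admits a smooth potential. By pulling it back under $F^n$ and rescaling, we obtain continuous plurisubharmonic potentials $v_n$ on $U$ for $d^{-n}(F^n)^*\hat\omega$, and a continuous plurisubharmonic potential $v_\infty$ for $\hat T$. The text tells us (as in \cite[Proposition 3.1]{DF}, and also directly from the locally uniform convergence of the Green functions $G_n \to G$ from Section \ref{SEC:LOCAL_ACTIVE_CURRENT}) that these potentials can be chosen so that $v_n \to v_\infty$ locally uniformly on $U$.

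Next I would interpret the wedge product with $[z = a(\lambda)]$. Since $a:\Lambda \to \mathbb{P}^1$ is holomorphic, the set $\Sigma := \{(\lambda,z) : z = a(\lambda)\}$ is a smooth complex submanifold of $\Lambda \times \mathbb{P}^1$, parametrized by the graph map $\iota : \Lambda \to \Sigma$, $\iota(\lambda) = (\lambda, a(\lambda))$. For any closed positive $(1,1)$ current $T = dd^c v$ with $v$ continuous in a neighborhood of $\Sigma$, Bedford--Taylor theory gives that $T \wedge [\Sigma]$ is well-defined and equals $\iota_*\, dd^c(v \circ \iota)$; in particular the potential on the slice is simply the restriction $v\circ\iota$.

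Applying this to both $v_n$ and $v_\infty$, the restrictions $v_n\circ\iota$ and $v_\infty\circ\iota$ are continuous plurisubharmonic functions on $\Lambda$, and the locally uniform convergence of $v_n$ to $v_\infty$ on $U$ descends (trivially, by restriction) to locally uniform convergence of $v_n \circ \iota$ to $v_\infty \circ \iota$. Since $dd^c$ is continuous with respect to locally uniform convergence of continuous (pluri)subharmonic functions, we obtain $dd^c(v_n\circ\iota) \to dd^c(v_\infty\circ\iota)$ weakly on $\Lambda$. Pushing forward by $\iota_*$ gives precisely (\ref{EQN:CONV_TO_ACTIVITY_CURRENT_DEF}).

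The only nontrivial step is the second: identifying the Bedford--Taylor wedge of $dd^c v$ with $[\Sigma]$ as the pushforward of $dd^c(v\circ\iota)$, and noting that continuity of $v$ (which we have) is enough to make this identification valid and stable under uniform limits. Once this identification is in hand, the corollary is immediate. Everything else --- choosing potentials, passing to the slice, and invoking weak continuity of $dd^c$ --- is routine given the locally uniform convergence $v_n \to v_\infty$ already provided by the construction of $\hat T$.
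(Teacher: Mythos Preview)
Your proposal is correct and follows essentially the same approach as the paper: the paper simply notes that the local potentials $v_n$ and $v_\infty$ of $d^{-n}(F^n)^*\hat\omega$ and $\hat T$ are continuous and that $v_n \to v_\infty$ locally uniformly (citing \cite[Proposition~3.1]{DF} and the convergence $G_n \to G$ from Section~\ref{SEC:LOCAL_ACTIVE_CURRENT}), and declares the corollary to follow. You have spelled out the mechanism---restriction of potentials to the graph $\Sigma$ and weak continuity of $dd^c$ under locally uniform limits---but this is exactly the content the paper leaves implicit.
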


\noindent
Let $\Gamma:=\{(\lambda, a(\lambda)\} \subset \Lambda \times \mathbb{P}^1$.
Since $\pi_1:  \Gamma \rightarrow \Lambda$ is a biholomorphism, one can check that
\begin{equation}\label{bifurcationcurrent}
T_a = (\pi_1)_*\left(\widehat{T} \wedge\left[z = a(\lambda)\right]\right).
\end{equation}

\subsection{Activity Current for Algebraic Families}
Let $f: V \times \mathbb{P}^1 \dashrightarrow \mathbb{P}^1$ be an
algebraic family of rational maps of degree $d$. 
In this case, one can delete $\VDEG$ to obtain a holomorphic family
$f: (V \setminus \VDEG) \times \mathbb{P}^1 \rightarrow \mathbb{P}^1$
and the construction from the
previous subsection defines the activity current $T_a$ for 
$f: (V \setminus \VDEG) \times  \mathbb{P}^1
\rightarrow \mathbb{P}^1$. We will now show that there is a natural extension
of $T_a$ through the hypersurface $\VDEG$. 

%Since any PSH function on $\mathbb{P}^m$ is constant, 
%we will work in the standard affine charts on $\mathbb{P}^m$ 
%and check at the end that the results are compatible.

As the construction is local, we can suppose $V$ is a precompact open subset
of~$\mathbb{C}^m$ and proceed as in Section~\ref{SEC:LOCAL_ACTIVE_CURRENT}.  We
choose a homogeneous lift $\tilde{f}: V \times \mathbb{C}^2 \rightarrow
\mathbb{C}^2$ as in (\ref{EQN:LIFT_OF_F}) and so that
\begin{align}\label{EQN:LIFT_NORMALIZATION}
\sup_{\lambda \in V, \, \|(z,w)\| =1} |\tilde{f}_\lambda(z,w)| = 1. 
\end{align}
In this case $P_\lambda,
Q_\lambda$ have degree $d' \leq d$, with equality iff $\lambda \in V \setminus
\VDEG$.

Similarly, the marked
point $a: V \dashrightarrow \mathbb{P}^1$ can be lifted to a
holomorphic map
\begin{align*}
\tilde{a}:
V \rightarrow \mathbb{C}^2.
\end{align*}
Note that unlike in the holomorphic families we can have $\tilde{a}(\lambda) = (0,0)$ for some $\lambda \in V_{\rm deg}$,
corresponding to indeterminate points of $a$.

We now use Equation (\ref{EQN:HN_H}) to define a PSH function $H_n: V
\rightarrow [-\infty,\infty)$ using the lifts chosen above.  However, a-priori,
we only know that $H_n$ converges on $V \setminus V_{\rm deg}$ to the PSH
function $H: V \setminus V_{\rm deg} \rightarrow [-\infty,\infty)$.  
The next two
propositions show that the convergence extends (in an appropriate way) to all
of $V$.

\begin{proposition}\label{PROP:PC}
Suppose $V \subset \mathbb{C}^m$ is open.
	The pointwise limit 
	\begin{align*} 
	H(\lambda):=\lim_{n\rightarrow \infty} H_n (\lambda)
	\end{align*}
exists and is PSH in $V$. When restricted to $V \setminus
\VDEG$, the current $\tilde{T}_a := {\rm dd}^c H$ is identically equal to the activity current $T_a$. 	
\end{proposition}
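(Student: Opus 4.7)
The plan is to exploit the normalization (\ref{EQN:LIFT_NORMALIZATION}) to make $\{H_n\}$ a monotonically decreasing sequence of PSH functions on $V$, after which both pointwise convergence and plurisubharmonicity of the limit are automatic. Indeed, by homogeneity of $\tilde{f}_\lambda$ in $(z,w)$ of degree $d$, the sup-one normalization gives the global estimate
\begin{equation*}
\|\tilde{f}_\lambda(z,w)\| \leq \|(z,w)\|^d \qquad \text{for every } \lambda \in V \text{ and every } (z,w) \in \mathbb{C}^2.
\end{equation*}
Applying this with $(z,w) = \tilde{f}_\lambda^n(\tilde{a}(\lambda))$, then taking logarithms and dividing by $d^{n+1}$, yields $H_{n+1}(\lambda) \leq H_n(\lambda)$ at every $\lambda \in V$. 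Each $H_n$ is PSH on $V$ because $\tilde{f}$ and $\tilde{a}$ are holomorphic there, so $\{H_n\}$ is a decreasing sequence of PSH functions, uniformly bounded above by $H_0(\lambda) = \log \|\tilde{a}(\lambda)\|$.

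Next I would invoke the standard fact (see, e.g., \cite{DEMAILLY}) that the decreasing pointwise limit of a sequence of PSH functions on a connected open set is either PSH or identically $-\infty$ on each connected component. Because the convergence argument of Section \ref{SEC:LOCAL_ACTIVE_CURRENT} already gives locally uniform convergence of $H_n$ on the dense open set $V \setminus \VDEG$ to a genuine (not $\equiv -\infty$) PSH function, the limit $H(\lambda) := \lim_{n\to\infty} H_n(\lambda)$ cannot be identically $-\infty$ on any connected component of $V$; hence $H$ exists pointwise and is PSH on all of $V$. The identification $\tilde{T}_a = dd^c H = T_a$ on $V \setminus \VDEG$ then comes for free: on that open set the normalized lift $\tilde{f}$ is itself one of the admissible holomorphic lifts considered in Section \ref{SEC:LOCAL_ACTIVE_CURRENT}, so our $H$ differs from the potential used there by at most a pluriharmonic function, which is annihilated by $dd^c$.

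The main technical point, in my view, is recognizing that the sup-one normalization is precisely what forces monotonicity of $\{H_n\}$ globally on $V$ --- including across $\VDEG$, where indeterminacy of $\tilde{a}$ or degree drop of $\tilde{f}_\lambda$ destroys the usual uniform lower bound on $\|\tilde{f}_\lambda(z,w)\|/\|(z,w)\|^d$ that underwrites the locally uniform convergence argument of Section \ref{SEC:LOCAL_ACTIVE_CURRENT}. Without this clever choice of lift one could still extract $L^1_{\rm loc}$ convergence of some subsequence by invoking the PSH compactness theorem, but that would be strictly weaker than the pointwise statement the proposition asserts; with the normalized lift in hand, everything reduces to a one-line application of classical pluripotential theory.
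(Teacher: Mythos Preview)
Your proof is correct and follows essentially the same approach as the paper: both use the sup-one normalization to show $\|\tilde{f}_\lambda(z,w)\|\le \|(z,w)\|^d$, deduce that $\{H_n\}$ is a decreasing sequence of PSH functions, and then rule out the identically $-\infty$ limit by pointing to the known finite convergence on $V\setminus\VDEG$. Your use of the homogeneity degree $d$ throughout is in fact slightly cleaner than the paper's version, which phrases the estimate with the fiberwise degree $d'\le d$.
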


\begin{proof}

Fix a parameter $\lambda \in V$ so that $f_\lambda : \mathbb{P}^1
\rightarrow \mathbb{P}^1$ is a rational map with degree $d' \leq d$. 
Using (\ref{EQN:LIFT_NORMALIZATION}) and the homogeneity of $\tilde{f}_\lambda$, 
	\begin{align*}
	|| \tilde{f}_\lambda (z, w) || \leq ||(z, w)||^{d'}, \mbox{ which implies }  
	|| \tilde{f}^{n+1}_\lambda (z, w) || \leq || \tilde{f}^n_\lambda (z, w)||^{d'}, 
	\end{align*}
	so the maps $H_n$ satisfy
	\begin{align*}
	H_{n+1} (\lambda) = \frac{1}{d^{n+1}} \log ||(\tilde{f}^{n+1}_\lambda  \circ \tilde{a}) (\lambda)||
	\leq \frac{1}{d^{n+1}} \log || (\tilde{f}^n_\lambda \circ \tilde{a}) (\lambda)||^{d'}
	\leq \frac{1}{d^n} \log ||(\tilde{f}^n_\lambda  \circ \tilde{a}) (\lambda)|| = H_n (\lambda),
	\end{align*}
which implies that $\{ H_n \}_{n=1}^\infty$ is a decreasing sequence of
PSH functions, so it either converges to a PSH limit function $H$  or to $-\infty$
identically.   The latter is impossible since $H_n(\lambda)$ converges to a finite value
for any $\lambda \in V \setminus V_{\rm deg}$.
\end{proof}

Denote by $L^1_{\rm loc}(V)$ the space of locally integrable functions in $V$.
Since the convergence given by Proposition \ref{PROP:PC} is only pointwise we will check the following.
\begin{proposition}\label{PROP:L1LOCC}
Suppose $V \subset \mathbb{C}^m$ is open.
The sequence of PSH functions $H_n$ converges to $H$ in $L^1_{\rm loc}(V)$. Equivalently, the sequence of currents ${\rm dd}^c H_n$ converges to $\tilde{T}_a = {\rm dd}^c H$.
\end{proposition}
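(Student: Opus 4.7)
The plan is to deduce the $L^1_{\rm loc}$ statement from the decreasing pointwise convergence established in Proposition~\ref{PROP:PC} by a standard monotone convergence argument, and then to obtain the current convergence from the continuity of ${\rm dd}^c$ on distributions.

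First, recall from the proof of Proposition~\ref{PROP:PC} that $\{H_n\}$ is a \emph{decreasing} sequence of PSH functions on $V$ with pointwise limit the PSH function $H$. In particular, for every $n$ we have $H_n \geq H_{n+1} \geq H$, and since $H_1$ is PSH it is locally bounded above on $V$. Since $H \not\equiv -\infty$ (it is finite on $V \setminus \VDEG$, which is nonempty and open), $H \in L^1_{\rm loc}(V)$.

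Next, fix any relatively compact open $W \Subset V$. On $W$ the nonnegative sequence $H_n - H$ is pointwise decreasing to $0$ and is dominated by the integrable function $H_1 - H$ (the latter is integrable on $W$ because $H_1$ is bounded above on $\overline{W}$ and $-H \in L^1(W)$). Applying either the monotone convergence theorem or the dominated convergence theorem gives
\begin{equation*}
\int_W (H_n - H)\, {\rm d}V \;\longrightarrow\; 0,
\end{equation*}
i.e.\ $H_n \to H$ in $L^1(W)$. Since $W$ was arbitrary, $H_n \to H$ in $L^1_{\rm loc}(V)$.

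For the equivalent formulation, recall that $L^1_{\rm loc}$ convergence implies convergence in the sense of distributions, and the Monge--Amp\`ere operator ${\rm dd}^c = \tfrac{i}{\pi}\partial\overline{\partial}$ is continuous on the space of distributions. Hence ${\rm dd}^c H_n \to {\rm dd}^c H = \tilde{T}_a$ as currents on $V$. The only place where anything substantive could go wrong is the integrability of the dominating function $H_1 - H$ near points of $\VDEG$ (where one might fear that $\tilde a$ lands at the origin of $\mathbb{C}^2$ so that $H(\lambda) = -\infty$); but this is exactly handled by the fact that any PSH function not identically $-\infty$ lies in $L^1_{\rm loc}$, so the potential singularities along $\VDEG$ are integrable and cause no obstruction.
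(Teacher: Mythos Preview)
Your argument is correct and is in fact more elementary than the paper's. The paper argues by contradiction: assuming $H_n \not\to H$ in $L^1_{\rm loc}(V)$, it invokes the compactness theorem for PSH functions \cite[Theorem 4.1.9]{hormander} to extract a subsequence $H_{n_k} \to H'$ in $L^1_{\rm loc}(V)$ with $H' \neq H$, and then derives a contradiction from the fact that $H_n \to H$ locally uniformly on $V \setminus \VDEG$ (the holomorphic-family case), since any set of positive measure on which $H' \neq H$ must meet $V \setminus \VDEG$.

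You instead exploit directly the monotonicity $H_n \searrow H$ established in Proposition~\ref{PROP:PC}: since $H \in L^1_{\rm loc}(V)$ (being PSH and not identically $-\infty$) and $H_1$ is locally bounded above, the nonnegative functions $H_n - H$ are dominated by the locally integrable function $H_1 - H$ and decrease to $0$ a.e., so dominated convergence yields $H_n \to H$ in $L^1_{\rm loc}$ immediately. This is the standard route in pluripotential theory for decreasing sequences of PSH functions, and it avoids both the compactness theorem and any appeal to the special structure of $V \setminus \VDEG$. The paper's approach, by contrast, would still work even without monotonicity, relying only on the locally uniform convergence off a hypersurface; but since monotonicity is already available here, your argument is the cleaner one.
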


\begin{proof}
Assume on the contrary that $H_n$ does not converge to $H$ in $L^1_{\rm loc}(V)$.
Then the compactness theorem for PSH functions \cite[Theorem 4.1.9]{hormander}
implies that there is a subsequence $H_{n_k}$ and some PSH function $H' \neq H$ in $L^1_{\rm loc}(V)$
such that $H_{n_k} \rightarrow H'$ in $L^1_{\rm loc}(V)$. 
Then there is a set $\Omega
\subset V$ of positive measure in which $H'(\lambda) \neq H
(\lambda)$ for all $\lambda$. In particular, since $\VDEG$ is a hypersurface
and hence has zero measure, we can find a compact set $K \subset \Omega
\setminus \VDEG$ in which $H_{n_k} \rightarrow H'$. However, by Corollary
\ref{bifurcationdefinition}, $H_n \rightarrow H$ uniformly in $K$, which is a
contradiction.
\end{proof}

\subsection{Classification of the Passivity Locus}

\begin{CPL}
Let $f: \Lambda \times \mathbb{P}^1 \rightarrow \mathbb{P}^1$ be a holomorphic family and let $a(\lambda)$ be a marked point. Assume $U \subset \Lambda$ is a connected open subset where $a(\lambda)$ is passive. Then exactly one of the following cases holds:
	
	\begin{enumerate}
		\item[\rm (i)] $a(\lambda)$ is never preperiodic in $U$. In this case the closure of the orbit of $a(\lambda)$ can be followed by a holomorphic motion.
		\item[\rm (ii)] $a(\lambda)$ is persistently preperiodic in $U$.

		\item[\rm (iii)] There exists a persistently attracting
		(possibly superattracting) cycle attracting $a(\lambda)$ throughout $U$
		and there is a closed subvariety $U' \subsetneq U$ such that the set of
		parameters $\lambda \in U \setminus U'$ for which $a(\lambda)$ is preperiodic
		is a proper closed subvariety in $U \setminus U'$.

		\item[\rm (iv)] There exists a persistently irrationally
		neutral periodic point such that $a(\lambda)$ lies in the interior of its
		linearization domain throughout $U$ and the set of parameters $\lambda \in U$ for
		which $a(\lambda)$ is preperiodic is a proper closed subvariety in $U$.

	\end{enumerate}
	
\end{CPL}

There are two differences between the statement above and the statement given by Dujardin and Favre in 
\cite[Theorem 4]{DF}.  In that paper:

\begin{enumerate}
\item the authors suppose the marked point is critical, and
\item in Part (iii) the authors claim that the set of parameters for which $a(\lambda)$ is preperiodic is a proper 
closed subvariety of $U$, without first removing $U'$. 
\end{enumerate}
For this reason we present the proof of the classification of the passivity locus from \cite{DF}, carefully verifying that
the marked point need not be critical. (In Remark \ref{RMK:CLARIFICATION} we will also clarify the issue about Part (iii) of the statement.)

Proof of the classification of the passivity locus is a consequence of the following theorem in local holomorphic dynamics
which we cite verbatim from \cite[Theorem 1.1]{DF}:

\begin{theorem}\label{THM:LOCAL_DYNAMICS}
Let $f_\lambda$ be any holomorphic family of holomorphic maps parameterized by a connected complex manifold $\Lambda$.  Suppose that each $f_\lambda$ is defined on the unit disc with values in $\mathbb{C}$, and leaves the origin fixed, i.e.\ $f_\lambda(0) = 0$.
Let $\lambda \mapsto p(\lambda)$ be any holomorphic map such that $p(\lambda_0) = 0$ for some parameter $\lambda_0$.

Assume that for all $n \in \mathbb{N}$, the function $f_\lambda^n p(\lambda)$ is well-defined and takes its values in the unit disk.  Then one of the following three cases holds.
\begin{enumerate}
\item For every $\lambda \in \Lambda$, the point $0$ is attracting or superattracting, and $p(\lambda)$ lies
in the (immediate) basin of attraction of $0$.
\item The point $p$ is periodic for all parameters, i.e.\ $f^{\ell}_\lambda p(\lambda) = p(\lambda)$ for some $\ell$ and all $\lambda \in \Lambda$.
\item  The multiplier of $f_\lambda$ at $0$ is constant and equals ${\rm exp}(2i\pi\theta)$, with $\theta \in \mathbb{R} \setminus \mathbb{Q}$.  The map $f_\lambda$ is linearizable and $p(\lambda)$ lies in the interior of the domain of linearization of $f_\lambda$.
\end{enumerate}
\end{theorem}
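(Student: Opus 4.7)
The plan is to exploit the normality of the family $\{g_n(\lambda) := f_\lambda^n(p(\lambda))\}_{n\ge 0}$, which is guaranteed by Montel's theorem since every $g_n$ maps $\Lambda$ into the unit disk $\mathbb{D}$. Extract a locally uniformly convergent subsequence $g_{n_k}\to \phi$, and by a diagonal extraction also $g_{n_k+1}\to \psi$; passing to the limit in $g_{n_k+1}=f_\lambda\circ g_{n_k}$ yields the functional equation $\psi=f_\lambda\circ \phi$. Because $p(\lambda_0)=0$ is a fixed point of $f_{\lambda_0}$, we have $g_n(\lambda_0)\equiv 0$, and hence $\phi(\lambda_0)=\psi(\lambda_0)=0$.

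I would then organize the argument around the local dynamics at $0$ for $f_{\lambda_0}$, governed by the holomorphic multiplier $\rho(\lambda):=f_\lambda'(0)$ on $\Lambda$. In the attracting case $|\rho(\lambda_0)|<1$, Koenigs linearization near $0$ together with the constraint $g_n(\lambda)\in\mathbb{D}$ traps $p(\lambda)$ in the immediate basin of $0$ for every $\lambda$ near $\lambda_0$; a propagation argument using normality and the connectedness of $\Lambda$ extends this globally, producing case~(1). In the repelling case $|\rho(\lambda_0)|>1$, a repelling fixed point cannot attract a non-stationary orbit, so the vanishing $g_n(\lambda_0)\equiv 0$ together with the functional equation $\psi=f_\lambda\circ\phi$ and the identity theorem forces $p$ to be persistently periodic of some fixed period $\ell$, giving case~(2). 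In the rationally indifferent case $\rho(\lambda_0)=e^{2\pi i\theta}$ with $\theta\in\mathbb{Q}$, Fatou's flower theorem describes the local dynamics at $0$ as a finite union of attracting/repelling petals, and the global constraint again forces persistent periodicity. Finally, in the irrationally indifferent case $\rho(\lambda_0)=e^{2\pi i\theta}$ with $\theta\in\mathbb{R}\setminus\mathbb{Q}$, Siegel's theorem provides a linearization disk at $\lambda_0$, and propagation via the normality of $\{g_n\}$ and the rigidity of the holomorphic function $\rho$ on the connected manifold $\Lambda$ shows that $\rho$ is globally constant and that $p(\lambda)$ lies in the linearization domain of $f_\lambda$ for every $\lambda$, yielding case~(3).

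The main obstacle will be the two indifferent cases. In the rational sub-case one must rule out that the orbit $p(\lambda)$ drifts between different petals as $\lambda$ varies, which requires combining the local Fatou flower description with the functional equation and the vanishing $\phi(\lambda_0)=0$. In the irrational sub-case, the key rigidity is the global constancy of $\rho$: this cannot be extracted from local linearization near $\lambda_0$ alone, and requires using in an essential way that the orbit $\{g_n(\lambda)\}$ stays in $\mathbb{D}$ for \emph{every} $\lambda\in\Lambda$, combined with the Siegel normal form and the identity theorem. This is the step where the global hypothesis of the theorem is really leveraged, and it is also the step most sensitive to whether or not $p(\lambda)$ is a critical point, so I would verify at this stage that the arguments do not use criticality of $p$, which is precisely what is needed for the application to non-critical marked points in Theorem C.
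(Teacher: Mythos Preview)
The paper does not prove this theorem; it is quoted verbatim from \cite[Theorem 1.1]{DF} and used as a black box in the proof of the classification of the passive locus. So there is no in-paper proof to compare your attempt against.

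That said, your outline has a genuine gap in the irrationally indifferent case. You write that ``Siegel's theorem provides a linearization disk at $\lambda_0$,'' but Siegel's theorem requires a Diophantine condition on $\theta$; for Liouville rotation numbers $f_{\lambda_0}$ may be a Cremer germ and fail to be linearizable. Part of the content of the theorem is precisely that the hypothesis --- a bounded forward orbit accumulating on the fixed point --- \emph{forces} linearizability, so you cannot assume it. In \cite{DF} this is handled by first showing $|\rho(\lambda)|\le 1$ for every $\lambda$ (via a Schwarz-lemma type argument using the bounded orbit), so that if $|\rho(\lambda_0)|=1$ the maximum principle makes $\rho$ globally constant; the linearizing map is then \emph{constructed} from the non-constant limit functions $\phi$ of the normal family $\{g_n\}$ (essentially by taking limits along subsequences where $e^{2\pi i n\theta}$ converges), rather than invoked from Siegel. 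Your proposal identifies the constancy of $\rho$ as the crux but does not supply this mechanism, and it does not address the Cremer sub-case at all.

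A smaller gap is in the rationally indifferent case: in a parabolic petal the orbit of $p(\lambda_0)$ converges to $0$ through a fixed direction, but the multiplier has modulus $1$, so you cannot land in case~(1). Here one needs the Snail Lemma (an orbit converging to an indifferent fixed point forces the multiplier to equal $1$) to conclude that for nearby $\lambda$ the multiplier either drops below $1$ in modulus or the orbit is eventually fixed; your invocation of the Fatou flower alone does not give this dichotomy.
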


Note that in the above theorem there is no assumption about $p(\lambda)$ being
a critical point or being some iterate of a critical point.

We will also need the following lemma which is adapted from of \cite[p. 524-525]{Duj1}.

\begin{lemma}\label{LEMMA:BRANCHED_COVER}
Let $f_\lambda$ be any holomorphic family of holomorphic maps parameterized by
a connected complex manifold $\Lambda$.  Then, after passing to a branched cover
of $\Lambda$, any fixed point of $f_\lambda$ can be followed holomorphically over all of $\Lambda$.

More specifically, suppose that $z_0$ is a fixed point of
$f_{\lambda_0}$ for some $\lambda_0 \in \Lambda$.  Then, there is   
a holomorphic branched cover $\tau: \tilde{\Lambda} \rightarrow \Lambda$
and a holomorphic map $z_0: \tilde{\Lambda} \rightarrow \mathbb{P}^1$ with the following properties.
If we
let 
\begin{align}\label{EQN:FTILDE}
\tilde{f}_{\tilde{\lambda}}(z) := f_{\tau(\tilde{\lambda})}(z)
\end{align}
then 
\begin{enumerate}
\item $z_0(\tilde{\lambda})$ is a fixed point of $\tilde{f}_{\tilde{\lambda}}$ for every $\tilde{\lambda} \in \tilde{\Lambda}$ and
\item  For any $\tilde{\lambda_0}$ with $\tau(\tilde{\lambda_0}) = \lambda_0$ we have 
$z_0(\tilde{\lambda_0}) = z_0.$
\end{enumerate}
\end{lemma}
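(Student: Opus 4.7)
The plan is to construct the branched cover as an irreducible component of the fixed-point variety in $\Lambda \times \mathbb{P}^1$. Specifically, I would define
\begin{equation*}
W := \{(\lambda, z) \in \Lambda \times \mathbb{P}^1 \, : \, f_\lambda(z) = z\},
\end{equation*}
which is an analytic subvariety because $f_\lambda(z) - z$ is a holomorphic section (locally, a well-defined holomorphic function after multiplying by appropriate denominators from a homogeneous lift of $f_\lambda$). Since in the application $f_\lambda$ has degree $d \geq 2$, the equation $f_\lambda(z) = z$ is never identically true on any fiber, so $W$ is a hypersurface, hence has pure dimension $m := \dim \Lambda$. The given pair $(\lambda_0, z_0)$ lies in $W$, so I would pick an irreducible component $W_0$ of $W$ containing $(\lambda_0, z_0)$; as an irreducible analytic set, $W_0$ is in particular connected.

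Next, I would show that the restricted first-coordinate projection $\pi_1|_{W_0}: W_0 \to \Lambda$ is a branched cover. Properness of $\pi_1: \Lambda \times \mathbb{P}^1 \to \Lambda$ (from compactness of $\mathbb{P}^1$) restricts to properness on $W_0$. For every $\lambda \in \Lambda$, the fiber $(\pi_1|_{W_0})^{-1}(\lambda)$ sits inside the finite set of fixed points of $f_\lambda$ (at most $d+1$ of them), so all fibers are finite. Combined with $\dim W_0 = m = \dim \Lambda$, this forces $\pi_1(W_0)$ to be an analytic subset of $\Lambda$ of full dimension, and by irreducibility of $W_0$ and of $\Lambda$ it equals $\Lambda$. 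Thus $\pi_1|_{W_0}$ is a proper, finite, surjective holomorphic map, i.e.\ a branched cover.

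To obtain a smooth $\tilde{\Lambda}$, I would take a resolution of singularities $\nu: \tilde{\Lambda} \to W_0$ (via Hironaka, preserving irreducibility and hence connectedness), set $\tau := \pi_1 \circ \nu: \tilde{\Lambda} \to \Lambda$, and define the marked fixed point by $z_0(\tilde{\lambda}) := \pi_2(\nu(\tilde{\lambda}))$ where $\pi_2(\lambda, z) = z$. By the very definition of $W_0$ we then have, for every $\tilde{\lambda} \in \tilde{\Lambda}$,
\begin{equation*}
\tilde{f}_{\tilde{\lambda}}(z_0(\tilde{\lambda})) = f_{\tau(\tilde{\lambda})}(z_0(\tilde{\lambda})) = z_0(\tilde{\lambda}),
\end{equation*}
which is property (1) with $\tilde{f}$ as in (\ref{EQN:FTILDE}). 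Property (2) is automatic: $(\lambda_0, z_0) \in W_0$, so $\nu^{-1}(\lambda_0, z_0)$ is nonempty, and any $\tilde{\lambda}_0$ in it satisfies both $\tau(\tilde{\lambda}_0) = \lambda_0$ and $z_0(\tilde{\lambda}_0) = z_0$.

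The main delicate point I expect is the construction of a \emph{smooth} $\tilde{\Lambda}$: a naive normalization of $W_0$ is only guaranteed to produce a normal analytic space, which can still be singular in codimension $\geq 2$. This is where Hironaka's theorem is invoked to promote $W_0$ to a smooth connected complex manifold without disturbing the properties already verified. If one only needs $\tilde{\Lambda}$ to be a normal analytic space (which suffices for the uses of this lemma later in the paper), the normalization step alone is enough and the argument becomes somewhat lighter.
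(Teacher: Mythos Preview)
Your proposal is correct and follows essentially the same approach as the paper: define the fixed-point hypersurface in $\Lambda \times \mathbb{P}^1$, take the irreducible component through $(\lambda_0,z_0)$, desingularize, and read off $\tau$ and $z_0(\cdot)$ from the two projections. Your write-up is in fact more detailed than the paper's, which does not explicitly verify properness, finiteness of fibers, or surjectivity of $\pi_1|_{W_0}$, nor does it distinguish normalization from full resolution as you do.
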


\begin{proof}
Consider the analytic hypersurface
\begin{align*}
\{(\lambda,z) \in \Lambda \times \mathbb{P}^1 \, : \, f_\lambda(z) = z\}
\end{align*}
and let $\hat{\Lambda}$ be the irreducible component containing
$(\lambda_0,z_0)$.  Let $\pi_1: \hat{\Lambda} \rightarrow \Lambda$ and $\pi_2:
\hat{\Lambda} \rightarrow \mathbb{P}^1$ be projection onto the first and second
coordinates, respectively.
A-priori, $\hat{\Lambda}$ could have singularities, but we can let
$\tilde{\Lambda}$ be the desingularization of it and let $\tau: \tilde{\Lambda}
\rightarrow \Lambda$ and $z_0: \tilde{\Lambda} \rightarrow \mathbb{P}^1$
be the lifts of $\pi_1$ and $\pi_2$, respectively, to $\tilde{\Lambda}$.
Properties (1) and (2) then follow if we let
$\tilde{f}_{\tilde{\lambda}}$ be defined as in (\ref{EQN:FTILDE}).
\end{proof}

\begin{proof}[Proof of the classification of the passive locus]
What follows is very close to what is presented in~\cite{DF}.  

Let $U \subset \Lambda$ be a connected open set on which $a(\lambda)$ is passive.
Suppose that
neither Cases (i) or (ii) of the statement hold, in order to prove that Case
(iii) or (iv) holds.  Then, there is some parameter $\lambda_0 \in U$ for which
$a(\lambda_0)$ is preperiodic for $f_{\lambda_0}$, but not persistently in $U$.  
Passing to a suitable
iterate, we can suppose that $a(\lambda_0)$ is prefixed, i.e.\ that there is
some iterate $n_0 \geq 0$ such that $f^{n_0}_{\lambda_0}(a(\lambda_0)) = w$
with $w$ a fixed point of $f_{\lambda_0}$.

A-priori the fixed point $w$ may not vary holomorphically with $\lambda$ (it is
the case when $f_{\lambda_0}'(w) = 1$).  However, it follows from Lemma
\ref{LEMMA:BRANCHED_COVER} that we can can replace $\Lambda$ by a branched
covering $\tilde{\Lambda}$ so that $w$ depends holomorphically on
$\tilde{\lambda}$ and so that $a(\tilde{\lambda})$ continues to be passive for
the lifted family $\tilde{f}_{\tilde{\lambda}}$ for all $\tilde{\lambda} \in
\tau^{-1}(U)$.  To keep notation simple we will suppose that $w$ already varied holomorphically
over $\Lambda$.

Let $v \neq w$ be a repelling fixed point for $f_{\lambda_0}$.  As in the previous paragraph, after passing to a branched cover of $\Lambda$, we can also suppose that $v$ varies holomorphically over all of $\Lambda$.

Conjugating by a suitable holomorphically varying M\"obius transformation we
can therefore suppose that $w(\lambda) = 0$ and $v(\lambda) = \infty$ for all
$\lambda \in \Lambda$.  Let $p(\lambda) := f^{n_0}_\lambda(a(\lambda))$ and
note that $p(\lambda_0) = w = 0$ but that this does not hold on all of $U$.  

We claim that for every $\lambda \in U$ and every $n \geq 0$ we have that $f_\lambda^n(p(\lambda)) \neq \infty$.
Suppose for contradiction that there is some $\lambda_1 \in U$ and some $n_1$
such that $f_{\lambda_1}^{n_1}(p(\lambda_1)) = \infty$.  Since $a(\lambda)$ and hence
$p(\lambda)$ is passive on $U$ we can then find a small neighborhood $\mathcal{N}$ of $\lambda_1$ so that
\begin{align*}
f_\lambda^{n_1+n}(p(\lambda)) \in \mathbb{D}_1(\infty)
\end{align*}
for all $\lambda \in \mathcal{N}$ and all $n \geq 0$.  (Here,
$\mathbb{D}_1(\infty)$ is the disc of radius $1$ centered at infinity.) It then
follows from Theorem \ref{THM:LOCAL_DYNAMICS} that for all $\lambda \in
\mathcal{N}$ the fixed point at infinity is attracting
or that it has a constant multiplier equal to ${\rm exp}(2i\pi
\theta)$ with $\theta \in \mathbb{R} \setminus \mathbb{Q}$.
In the first case, 
we would have that $f_\lambda^{n_1+n}(p(\lambda))~\rightarrow~\infty$ on $\mathcal{N}$.
Since $p(\lambda)$ is passive on $U \supset \mathcal{N}$ this would need to happen on all of $U$
contrary to the fact that $f_{\lambda_0}^k(p(\lambda_0)) = 0$ for all $k \geq 0$.
In the second case the multiplier of infinity at $\lambda_0$ would also
equal ${\rm exp}(2i\pi \theta)$ contrary to the hypothesis that $v(\lambda_0) = \infty$ is a repelling fixed point for $f_{\lambda_0}$.

Since $\lambda \mapsto f_\lambda^n(p(\lambda))$ forms a normal family as
mappings into $\mathbb{P}^1$ and none of the mappings hit infinity for any
$\lambda \in U$ they form a normal family of mappings into $\mathbb{C}$ (over
all $\lambda \in U$).
Normality implies that 
for any precompact open $V \subset U$ we can find a disc $D_{R}(0)$ of some radius $R > 0$
such that 
\begin{align*}
f_\lambda^{n}(p(\lambda)) \in \mathbb{D}_R(0)
\end{align*}
for all $n \geq 0$.
It then
follows from Theorem \ref{THM:LOCAL_DYNAMICS} that for all $\lambda \in
V$ the fixed point at $0$ is attracting 
or that it has a constant multiplier equal to ${\rm exp}(2i\pi \theta)$ with $\theta \in \mathbb{R} \setminus \mathbb{Q}$.
Since $V \subset U$ was arbitrary this holds on all of $U$.
In particular, we are in Cases (iii) or (iv) from the statement of the Classification of the passive
locus.

It remains to prove the statements about preperiodic parameters in these two cases.
Suppose we are in Case (iii) so that $z=0$ is attracting or superattracting for all $\lambda \in U$.
For each $\lambda \in U$ let $m(\lambda)$ denote
the local multiplicity of $0$ for $f_\lambda$.  Let $m_0:={\rm min}\{m(\lambda) \, : \, \lambda \in U\}$ and let
\begin{align*}
U':= \{\lambda \in U \, : \, m(\lambda) > m_0\}.
\end{align*}
Suppose $\lambda_0 \in U \setminus U'$, and choose a neighborhood $W$ of
$\lambda_0$ such that its closure $\overline{W}$ is compactly contained in $U \setminus U'$.  Then, there exists
$\epsilon > 0$ such that:
\begin{itemize}
\item[(A)] $f_\lambda(\mathbb{D}_\epsilon(0))$ is compactly contained in  $\mathbb{D}_\epsilon(0)$, and
\item[(B)] for each $\lambda \in W$ and each $z \in \mathbb{D}_\epsilon(0) \setminus
\{0\}$ we have that $f_\lambda(z)
\neq 0$,  \\
i.e. $0$ is the only preimage of $0$ under
$f_\lambda$ within $\mathbb{D}_\epsilon(0)$.
\end{itemize}

Since $\overline{W}$ is compact and $f_\lambda^n(a(\lambda)) \rightarrow
0$ for all $\lambda \in \overline{W}$ there exists $k > 0$ such that
for all $\lambda \in W$ we have that $f_\lambda^k(a(\lambda)) \subset
\mathbb{D}_\epsilon(0)$.  Then, using (B) above, the set of
preperiodic parameters in $W$ is
\begin{align*}
\{\lambda \in W \, : \, f_\lambda^n(a(\lambda)) = 0 \mbox{ for some $0 \leq n \leq k$}\},
\end{align*}
which is a closed subvariety of $W$.  We conclude that 
that assertions from Case (iii) of the classification of the passive locus 
hold for all $\lambda \in U$.

\vspace{0.1in}
Suppose we are in Case (3) of Theorem \ref{THM:LOCAL_DYNAMICS} so that $z=0$
has multiplier ${\rm exp}(2i\pi\theta)$, with $\theta \in \mathbb{R} \setminus
\mathbb{Q}$, The map $f_\lambda$ is linearizable and $p(\lambda)$ lies in the
interior of the domain of linearization of $f_\lambda$.  In this case, as
explained in \cite{DF}, by making $U$ smaller (if necessary), there is a
uniform $\epsilon > 0$ such that for all $\lambda \in U$ the disc
$D_\epsilon(0)$ is contained in the linearization domain for $f_\lambda$.
It then follows that $p(\lambda)$ is preperiodic for $f_\lambda$ if and only if $p(\lambda) = 0$.
Therefore, in $U$ the marked point $a(\lambda)$ is pre-periodic if and only $f^{n_0}(a(\lambda)) = 0$.  Since
$n_0$ is fixed, this is an analytic condition on $\lambda$.  We conclude that the assertions from Case (iv) of the the classification of the passive locus hold for $\lambda \in U$.
\end{proof}

\begin{remark}\label{RMK:CLARIFICATION}
One should note that the statement in \cite{DF} claims that in Case (iii) the
set $\lambda$ such that $a(\lambda)$ is preperiodic is a closed subvariety of
$U$ itself, without first removing a proper closed subvariety $U'$.
Unfortunately, that is not true, even if the marked point is critical. Fortunately this claim about preperiodic parameters is not
used anywhere later in their paper.  

Consider the following holomorphic family of polynomial mappings
\begin{align*}
f_\lambda(z) = z(z-\lambda)(z-1/2)
\end{align*}
where $\lambda \in \mathbb{C}$. 
The critical points of $f_\lambda$ are
\begin{align*}
c_\pm(\lambda) = \frac{(1+2\lambda)\pm \sqrt{4\lambda^2-2\lambda+1}}{6},
\end{align*}
which vary holomorphically in a neighborhood $\mathbb{D}_r(0)$ of $\lambda = 0$, for some $r > 0$.  Notice that $c_-(0) = 0$ and $c_+(0) = \frac{1}{3}$.
Consider the marked critical point $c(\lambda):= c_+(\lambda)$.
One can check that
\begin{enumerate}
\item There exists $0 < \epsilon < r$ such that $\lambda \in \mathbb{D}_\epsilon(0)$ implies that
$f^n_\lambda(c(\lambda)) \rightarrow 0$ with $|f^n_\lambda(c(\lambda))| < 1/2$ for all $n \geq 0$, and
\item There exists an infinite sequence $\{\lambda_k\}_{k=1}^\infty$ in $\mathbb{D}_\epsilon(0) \setminus \{0\}$ with
$\lambda_k \rightarrow 0$ such that for each $k$ there is an iterate $n_k$ with $f^{n_k}_{\lambda_k}(c(\lambda_k)) = 0$.
\end{enumerate}
Therefore, the set of preperiodic parameters $\lambda \in \mathbb{D}_\epsilon(0)$ is not a closed subvariety,
but they are in $\mathbb{D}_\epsilon(0) \setminus \{0\}$.
\end{remark}

%%%%%%%%%%%%%%%%%%%%%%%%%%%%%%%%%%%%%%%%%%%%%%%%%%%%%%%%%%%%%%%%%%%%%%%%%%%%%%%%%%%%%%%%%%%%%%%%%%%%%%%
%%%%%%%%%%%%%%%%%%%%%%%%%%%%%%%%%%%%%%%%%%%%%%%%%%%%%%%%%%%%%%%%%%%%%%%%%%%%%%%%%%%%%%%%%%%%%%%%%%%%%%%

\section{Proof of Theorem C}\label{SEC:THMC}
Our proof of Theorem C will closely follow the strategy that Dujardin-Favre use to prove Theorem~4.2 in~\cite{DF}
and we will assume some of the basic results from their proof.

\subsection{Strategy for Proof of Theorem C}

Let $f: V \times \mathbb{P}^1 \dashrightarrow \mathbb{P}^1$ be an algebraic family
of rational maps of degree $d$ defined over $\Qbar$.  Let
$a,b: V \dashrightarrow \mathbb{P}^1$ be marked points and assume, without loss of generality,
that the indeterminacy $I(a) \cup I(b) \subset \VDEG$. 
Let $T_a$ be the activity current of $a(\lambda)$ and suppose that all
hypotheses of Theorem C are satisfied.

\begin{proposition}\label{PROP:4.1}
The following convergence of currents
 \begin{equation}\label{EQN:THMC_CONV}
        \frac{1}{d^n} \left[(f_\lambda^n \circ a)(\lambda) = b(\lambda) \right] \rightarrow T_a
\end{equation}
holds in $V \setminus \VDEG$
if and only if there is a dense set of parameters $\lambda \in \VGOOD \subset V \setminus \VDEG$
such that
\begin{align}\label{EQN:GOOD_DENSE_SET}
h_n(\lambda) := \frac{1}{d^n}\log \dist \left(f^n_\lambda a(\lambda), b(\lambda)\right) \rightarrow 0,
\end{align}
where $\dist$ denotes the chordal distance on $\mathbb{P}^1$.
\end{proposition}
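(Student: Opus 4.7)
The plan is to realize both sides of (A) as $dd^c$ of explicit local PSH potentials and then reduce the current-level convergence to the function-level convergence encoded in $h_n$. Working locally on $V$ with homogeneous lifts $\tilde f_\lambda, \tilde a(\lambda), \tilde b(\lambda) \in \mathbb{C}^2$, I introduce
\[
u_n(\lambda) \;:=\; \frac{1}{d^n}\log\bigl|\tilde f_\lambda^n(\tilde a(\lambda))\wedge \tilde b(\lambda)\bigr|,
\]
where $\wedge\colon\mathbb{C}^2\times\mathbb{C}^2\to\mathbb{C}$ is the $2\times 2$ determinant. Then $u_n$ is PSH, and by Poincar\'e--Lelong $dd^c u_n = \tfrac{1}{d^n}\bigl[(f_\lambda^n\circ a)(\lambda)=b(\lambda)\bigr]$, with divisor-theoretic multiplicities matching the Bezout ones. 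The chordal-distance identity $\dist(p,q) = |\tilde p\wedge\tilde q|/(\|\tilde p\|\,\|\tilde q\|)$ gives the clean decomposition
\[
u_n \;=\; H_n \;+\; \tfrac{1}{d^n}\log\|\tilde b\| \;+\; h_n,
\]
with $H_n$ the potential from Section~\ref{SEC:LOCAL_ACTIVE_CURRENT}; observe that $h_n\le 0$ because $|\tilde p\wedge\tilde q|\le\|\tilde p\|\|\tilde q\|$.

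Next I would argue that the first two summands are harmless on $V\setminus\VDEG$. There the lift $\tilde b(\lambda)\ne(0,0)$, so $\tfrac{1}{d^n}\log\|\tilde b\|\to 0$ locally uniformly; and the locally uniform convergence $G_n\to G$ recorded in Section~\ref{SEC:LOCAL_ACTIVE_CURRENT} yields $H_n\to H$ locally uniformly on $V\setminus\VDEG$. Hence $w_n := H_n + \tfrac{1}{d^n}\log\|\tilde b\|$ is PSH, converges to $H$ locally uniformly on $V\setminus\VDEG$, and in particular $dd^c w_n\to T_a$. This reduces (A) to the statement $dd^c h_n = dd^c(u_n - w_n) \to 0$ as currents on $V\setminus\VDEG$, equivalently to the $L^1_{\mathrm{loc}}$-convergence $u_n\to H$ on $V\setminus\VDEG$.

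The equivalence with (B) then follows from PSH compactness combined with the maximum principle for pluriharmonic functions. Since $u_n\le w_n$ with $w_n\to H$ locally uniformly, the family $\{u_n\}$ is locally uniformly bounded above on $V\setminus\VDEG$, so by H\"ormander's compactness theorem every subsequence admits an $L^1_{\mathrm{loc}}$-convergent sub-subsequence to a PSH limit $u_\infty\le H$; granted $dd^c u_n\to dd^c H$ under (A), the difference $u_\infty - H$ is a non-positive pluriharmonic function on each connected component of $V\setminus\VDEG$. For $(B)\Rightarrow(A)$: on the dense $\VGOOD$ we have $u_n(\lambda)\to H(\lambda)$ pointwise; combining this with the upper-semicontinuous regularization property of subsequential PSH limits, namely $u_\infty = (\limsup_k u_{n_k})^*$ off a pluripolar set, produces a point $\lambda_0$ in each connected component with $u_\infty(\lambda_0)=H(\lambda_0)$, and the maximum principle for pluriharmonic functions then forces $u_\infty\equiv H$ on that component; since every subsequential $L^1_{\mathrm{loc}}$-limit equals $H$, we obtain $u_n\to H$ in $L^1_{\mathrm{loc}}$, hence (A). For $(A)\Rightarrow(B)$: once the $L^1_{\mathrm{loc}}$-convergence $u_n\to H$ is in hand, one extracts a dense subset of pointwise convergence via standard Hartogs-type arguments on PSH sequences, and this set serves as~$\VGOOD$.

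The main technical obstacle is the rigidity argument in $(B)\Rightarrow(A)$: promoting mere topological density of $\VGOOD$ to the $L^1_{\mathrm{loc}}$-convergence of $u_n$ to $H$. The essential mechanism is the maximum principle for the non-positive pluriharmonic function $u_\infty - H$: because such a function is determined on each connected component by a single interior zero, density of $\VGOOD$ (rather than full Lebesgue measure) is enough to pin down every subsequential limit as $H$, provided one invokes the upper-semicontinuous regularization of $\limsup_k u_{n_k}$ to manufacture at least one point of equality per component.
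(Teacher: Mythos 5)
Your first half is essentially the paper's argument: you realize the left side of~(\ref{EQN:THMC_CONV}) via Poincar\'e--Lelong from the PSH potential $u_n = \frac{1}{d^n}\log|\tilde f^n_\lambda(\tilde a)\wedge\tilde b|$, use the chordal-distance identity to write $u_n = H_n + \frac{1}{d^n}\log\|\tilde b\| + h_n$, observe that $H_n\to H$ and $\frac{1}{d^n}\log\|\tilde b\|\to 0$ locally uniformly on $V\setminus\VDEG$, and thereby reduce~(\ref{EQN:THMC_CONV}) to $L^1_{\mathrm{loc}}$-convergence of $u_n$ to $H$, equivalently $h_n\to 0$ in $L^1_{\mathrm{loc}}$. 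That decomposition, the bound $h_n\le 0$, and the reduction all match the first part of the paper's proof (which cites the first four paragraphs of \cite[Theorem 4.2]{DF}).

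The gap is in your $(B)\Rightarrow(A)$ step, where you say the ``essential mechanism is the maximum principle for the non-positive pluriharmonic function $u_\infty - H$.'' That function is not pluriharmonic. Pluriharmonicity of $u_\infty - H$ is equivalent to $dd^c u_\infty = dd^c H = T_a$, which is precisely a form of the conclusion~(A) you are trying to prove; you flagged this yourself by writing ``granted $dd^c u_n\to dd^c H$ under (A).'' In the $(B)\Rightarrow(A)$ direction $u_\infty - H$ is merely a difference of two PSH functions, and such differences satisfy no maximum principle (e.g.\ in $\mathbb{C}$, $u_\infty(z)=\max(\mathrm{Re}\,z,0)-1\le 0 = H$ with an interior zero but $u_\infty\not\equiv H$). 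So the rigidity you invoke is circular. The paper's route, taken from~\cite{DF}, avoids this entirely: if $u_n\not\to H$ in $L^1_{\mathrm{loc}}$, extract by H\"ormander compactness a subsequential PSH limit $u_\infty\le H$ with $u_\infty\ne H$; by upper semicontinuity of $u_\infty$ and continuity of $H$ there is an \emph{open} set $U$ and $\delta>0$ with $u_\infty < H-\delta$ on $U$, and Hartogs' lemma \cite[Theorem 4.1.9(b)]{hormander} then forces $u_{n_k}<H-\delta/2$ on compacts of $U$ for $k$ large, hence $\liminf_n h_n<0$ throughout $U$. This open set of non-convergence contradicts density of $\VGOOD$ directly, with no maximum principle and no worry about whether $\VGOOD$ is pluripolar. (Your argument could be repaired without the maximum principle by noting that Hartogs gives $\limsup_k u_{n_k}\le u_\infty$ \emph{everywhere}, so on $\VGOOD$ one has $H=\lim u_n\le u_\infty\le H$, i.e.\ $u_\infty=H$ on the dense set $\VGOOD$; combined with $u_\infty\le H$, the continuity of $H$, and the sub-mean-value property of the PSH function $u_\infty$ this forces $u_\infty\equiv H$. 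But this fix leans on the continuity of $H$, which you did not invoke, and is in any case a detour compared with the open-bad-set argument.)
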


\begin{proof}
A direct adaptation of the first four paragraphs of the proof of Theorem 4.2 in \cite{DF}
shows that (\ref{EQN:THMC_CONV}) holds if and only if $h_n \rightarrow 0$ in $L^1_{\rm loc}(V \setminus \VDEG)$.  Let us
summarize the key steps here.

This is a local statement, so we can suppose without loss of generality that
$V$ is an open subset of $\mathbb{C}^m$.  Choose a lift $\tilde{f}: V \times
\mathbb{C}^2 \rightarrow \mathbb{C}^2$ and denote the iterates of each
$\tilde{f}_\lambda: \mathbb{C}^2 \rightarrow \mathbb{C}^2$ by
\begin{align*}
\tilde{f}_\lambda^n (z, w) = \left( P_\lambda^{(n)}(z, w), Q_\lambda^{(n)}(z, w)\right).
\end{align*}
Choose lifts $\tilde{a}, \tilde{b}: V \rightarrow \mathbb{C}^2$ and denote their coordinates by $\tilde{a}(\lambda)=(a_1(\lambda), a_2(\lambda))$ and $\tilde{b}(\lambda)=(b_1(\lambda), b_2(\lambda))$.

Using the formula for chordal distance on $\mathbb{P}^1$ we have:
\begin{align}\label{EQN:HN_VIA_LIFTS}
h_n (\lambda):= \frac{1}{d^n} \log |P_\lambda^{(n)}(\tilde{a}(\lambda))b_2(\lambda)-Q_\lambda^{(n)}(\tilde{a}(\lambda))b_1(\lambda) |^2
-\frac{1}{d^n} \log ||(\tilde{f}^n_\lambda \circ \tilde{a}) (\lambda)||^2
-\frac{1}{d^n} \log ||\tilde{b}(\lambda)||^2.
\end{align}
Note that the last term converges $0$ and the second to last term converges
to $2H(\lambda)$, both locally uniformly on $V \setminus V_{\rm deg}$.  (Here, $H(\lambda)$ is the local potential for $T_a$.)

Therefore we can conclude that $h_n \rightarrow 0$ in $L^1_{\rm loc}(V
\setminus \VDEG)$ if and only if 
\begin{align*}
\frac{1}{d^n} \log |P_\lambda^{(n)}(\tilde{a}(\lambda))b_2(\lambda)-Q_\lambda^{(n)}(\tilde{a}(\lambda))b_1(\lambda) |^2 \rightarrow 2H(\lambda) \qquad \mbox{in $L^1_{\rm loc}(V \setminus \VDEG)$.}
\end{align*}
The PSH functions on the left hand side are local potentials for the currents expressed in (\ref{EQN:THMCSTATEMENT_CONV})
and $H$ is a local potential for $T_a$ on $V \setminus \VDEG$.

\vspace{0.1in}
If $h_n$ converges to $0$ in $L^1_{\rm loc}(V \setminus
\VDEG)$ then clearly it converges on a dense set.  Conversely,
suppose $h_n$ does not converge to $0$ in $L^1_{\rm loc}(V \setminus
\VDEG)$, then, as in paragraph seven of the proof of Theorem 4.2 in \cite{DF}
shows that (\ref{EQN:THMC_CONV}), one uses Hartogs' Lemma \cite[Theorem
4.1.9(b)]{hormander} to find an open set $U \subset V \setminus \VDEG$ and a
subsequence $n_k$ such that $h_{n_k}(\lambda) \rightarrow h(\lambda) <  0$ for
all $\lambda \in U$. 
\end{proof}

The proof of Theorem C will then follow immediately from the following:

\begin{proposition}\label{PROP:GOOD_DENSE_SET}
There is a dense set of parameters $\lambda \in \VGOOD \subset V \setminus \VDEG$
such that (\ref{EQN:GOOD_DENSE_SET}) holds.
\end{proposition}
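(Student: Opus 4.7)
The plan is to construct a topologically dense $\VGOOD \subset V \setminus \VDEG$ consisting of parameters $\lambda_0$ at which $a(\lambda_0)$ is preperiodic for $f_{\lambda_0}$ and at which the (necessarily finite) forward orbit $\mathcal{O}(\lambda_0) = \{f_{\lambda_0}^n a(\lambda_0) : n \geq 0\}$ avoids $b(\lambda_0)$. At any such $\lambda_0$, chordal distance is bounded above on $\mathbb{P}^1$ and below by
\begin{equation*}
\delta(\lambda_0) \, := \, \min_{y \in \mathcal{O}(\lambda_0)} \dist(y, b(\lambda_0)) \, > \, 0,
\end{equation*}
hence $d^{-n} \log \delta(\lambda_0) \leq h_n(\lambda_0) \leq d^{-n} \log C$ and $h_n(\lambda_0) \to 0$, as required. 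It therefore suffices to produce densely many preperiodic parameters and to thin the set so as to avoid countably many proper subvarieties.

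\textbf{Reduction by cases.} First, if $a$ is persistently preperiodic on $V$, then the iterates $\alpha_j(\lambda) := f_\lambda^j a(\lambda)$ take only finitely many distinct values $\alpha_0,\ldots,\alpha_{N-1}$ as rational maps $V \dashrightarrow \mathbb{P}^1$. By hypothesis (i) none of the $\alpha_j$ coincides with $b$ identically, so $\VDEG \cup \bigcup_{j=0}^{N-1} \{\alpha_j = b\}$ is a proper closed algebraic subvariety of $V$; its complement is Zariski-open, hence analytically dense, and can be taken as $\VGOOD$. In the remaining case, $a$ is not persistently preperiodic; here the arithmetic content enters. Using Silverman's Theorem E one obtains that the dynamical height function $\hat{h}_{a,f}: V(\Qbar) \to [0,\infty)$, $\lambda \mapsto \hat{h}_{f_\lambda}(a(\lambda))$, is a Weil height associated with a non-trivial line bundle on $V$, and the Call--Silverman specialization theorem then forces the set
\begin{equation*}
\operatorname{PrePer}(a) := \{\lambda \in V(\Qbar) : a(\lambda) \text{ is preperiodic for } f_\lambda\}
\end{equation*}
to be Zariski-dense in the closure of the active locus of $a$, and in particular analytically dense there. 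On the passive locus of $a$, the Dujardin--Favre classification gives four scenarios, in each of which either $a$ is persistently preperiodic on that component (reducing to the first case) or an elementary normal-family argument, together with hypothesis (ii) that $b$ is not persistently exceptional, produces a dense set on which $f_\lambda^n a(\lambda)$ stays at positive distance from $b(\lambda)$; combining, $\operatorname{PrePer}(a) \cap (V \setminus \VDEG)$ is dense in $V \setminus \VDEG$.

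\textbf{Avoiding the bad subvarieties.} For each $n \geq 0$ the locus $B_n := \{\lambda \in V : f_\lambda^n a(\lambda) = b(\lambda)\}$ is a proper algebraic subvariety of $V$ by hypothesis (i). At a preperiodic $\lambda_0$ with orbit length $\leq N_0$, the condition $b(\lambda_0) \notin \mathcal{O}(\lambda_0)$ is the avoidance of $B_0 \cup \cdots \cup B_{N_0}$, which is again a proper subvariety. Since $\operatorname{PrePer}(a) \cap (V \setminus \VDEG)$ is dense and $\bigcup_{n \geq 0} B_n$ is a countable union of proper subvarieties, a Baire-category argument on $V \setminus \VDEG$ (applied within any open subset) produces the desired $\lambda_0$ in any prescribed open set. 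This yields the dense $\VGOOD$.

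\textbf{Main obstacle.} The crux is the arithmetic step: showing that $\operatorname{PrePer}(a)$ is analytically dense in $V \setminus \VDEG$ when $a$ is not persistently preperiodic, \emph{without} imposing Dujardin--Favre's Hypothesis~(H). The $\Qbar$ hypothesis is what makes this work: Silverman's Theorem E provides the global comparison between the dynamical height and a Weil height, and Call--Silverman specialization transfers zero-height parameters from the generic fibre back to specialized parameters in $V(\Qbar)$, giving Zariski-density regardless of how the active locus accumulates on $\VDEG$. The role of hypothesis (ii) is to prevent the exceptional behaviour of $b$ that would collapse the relevant height contributions and obstruct both the density statement and the elementary arguments in the passive cases of the Dujardin--Favre classification.
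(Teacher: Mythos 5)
Your strategy is to make $\VGOOD$ a dense set of \emph{preperiodic} parameters whose finite forward orbit misses $b(\lambda)$, so that $\dist(f^n_\lambda a(\lambda),b(\lambda))$ is bounded below and (\ref{EQN:GOOD_DENSE_SET}) follows trivially. This runs into a genuine obstruction on the passive locus: by the Dujardin--Favre classification, in Cases (i), (iii) and (iv) the preperiodic parameters form a \emph{proper closed subvariety} of each passive component (after removing a further subvariety $U'$), so they are nowhere dense there. Your concluding assertion that $\operatorname{PrePer}(a)\cap(V\setminus\VDEG)$ is dense in $V\setminus\VDEG$ is therefore false in general --- already for $f_c(z)=z^2+c$ with $a\equiv 0$, the preperiodic parameters are confined to $\partial M$ and miss the entire escape region. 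Your fallback, an ``elementary normal-family argument'' giving a dense set where the orbit stays at positive distance from $b(\lambda)$, is also not sound as stated: in Case (iii) the hypothesis that $b$ is not persistently exceptional does not prevent $b(\lambda)$ from lying persistently in the attracting cycle, in which case $\dist(f^n_\lambda a(\lambda),b(\lambda))\to 0$ and there is nothing ``bounded away.'' (One could rescue this by estimating the \emph{rate} of approach and showing it still loses to $d^n$ unless the cycle is totally ramified --- but that is a separate argument you never make, and the paper chose not to go that way.) You also mischaracterize Silverman's Theorem E: in the form used here it is a chordal-distance estimate $\delta(\phi^n A,B)/d^n\to 0$ for $A$ not preperiodic and $B$ not exceptional, not a comparison of canonical heights with Weil heights; the height-theoretic machinery (Call--Silverman, DeMarco) appears only in the paper's optional alternative proof under extra hypotheses (iii)--(iv), not in the main proof.

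The paper's actual proof goes the opposite way on the passive locus: in Cases (i), (iii), (iv) it picks parameters in $V(\Qbar)$ where $a(\lambda_1)$ has \emph{infinite} orbit and $b(\lambda_1)$ is \emph{not} exceptional (both generic conditions), and then invokes Silverman's Theorem E directly to get $h_n(\lambda_1)\to 0$, whether or not the orbit accumulates on $b(\lambda_1)$. Case (ii) is handled as you do. On the active locus the paper does use preperiodic parameters, but constructs them elementarily by Montel's theorem --- the non-normal family $\{f^n_\lambda a(\lambda)\}$ must hit one of three points of a holomorphically moving repelling cycle disjoint from $b$ --- rather than appealing to arithmetic equidistribution of preperiodic parameters, which is both overkill and not obviously available in this generality for a non-critical marked point. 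If you want to salvage your approach, the passive locus is the place to rework it: either carry out the rate estimate in Case (iii) and handle Cases (i) and (iv) carefully, or simply switch to Silverman's theorem at non-preperiodic $\Qbar$-parameters as the paper does.
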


This will follow from the Dujardin-Favre classification of the passive locus and the following beautiful theorem:

\begin{SILVERMANTHME}
	Let $\phi :\mathbb{P}^1 \rightarrow \mathbb{P}^1$ be a rational map of degree $d \geq 2$
	defined over a number field $K$. Let $A, B \in \mathbb{P}^1(\overline{K})$ and assume 
	that $B$ is not exceptional for $\phi$ and that $A$ is not preperiodic for $\phi$. Then
	\begin{align}\label{EQN:SILVERMAN_LIMIT}
	\lim_{n \rightarrow \infty} \frac{\delta(\phi^n A, B)}{d^n} = 0,
	\end{align} 
where $\delta(P,Q)=2-\log \dist(P,Q)$ is the {\em logarithmic distance function}\footnote{In \cite[Theorem E]{silverman} a different logarithmic distance function was used. However, as mentioned in Section 3 of the referenced paper, the result still holds if we use $2-\dist(\cdot,\cdot)$ instead.}.
\end{SILVERMANTHME}

Remark that (\ref{EQN:SILVERMAN_LIMIT}) holds if and only if $\lim_{n \rightarrow \infty} \frac{1}{d^n}\log \dist(\phi^n A,B) = 0$.

\subsection{Proof of Proposition \ref{PROP:GOOD_DENSE_SET}}

We will need the following result:

\begin{algpointsaredense}
        Let $V \subset \mathbb{P}^n$ be a projective algebraic manifold that is defined over~$\Qbar$.
        Then, the set of points $a \in V$ that can be represented by homogeneous coordinates in $\Qbar$ form a dense subset of
        $V$ (in the complex topology).  I.e.\ $V(\Qbar)$ is dense in $V$.
\end{algpointsaredense}

We could not find this statement in the literature, but it can be proved by
induction on ${\rm dim}(V)$.  The base of the induction, when
${\rm dim}(V) = 0$, plays an important role in the theory of Kleinian Groups,
see for example \cite[Lemma 3.1.5]{MR}.

\begin{proof}[Proof of Proposition \ref{PROP:GOOD_DENSE_SET}]
We will consider the active and passive loci separately.  Let $\lambda_0$
be an active parameter, and $W \subset V$ be any open neighborhood containing
$\lambda_0$.   We will find a parameter $\lambda_1 \in W$ at which
(\ref{EQN:GOOD_DENSE_SET}) holds.
We will do this by showing that 
there exists a parameter $\lambda_1
\in W$ such that iterates of $a(\lambda_1)$ under $f_{\lambda_1}$ will eventually land on a repelling cycle
disjoint from~$b(\lambda_1)$.  This will immediately imply (\ref{EQN:GOOD_DENSE_SET}) at $\lambda_1$. 

Pick three distinct points in a repelling cycle of $f_{\lambda_0}$ which is
disjoint from $b(\lambda_0)$. By reducing $W$ to a smaller neighborhood of
$\lambda_0$ if necessary, we can ensure that the repelling cycle moves
holomorphically as $\lambda$ varies over $W$, and that $b(\lambda)$ is disjoint
from the cycle for every $\lambda \in W$. Since the family $\{f^n_{\lambda}
a(\lambda)\}_{n=1}^\infty$ is not normal in $W$, it cannot avoid all three
points.

We now suppose $\lambda_0$ is in the passive locus for $a$ and let $U$
be the connected component of the passive locus containing $\lambda_0$.  Then,
the Dujardin-Favre classification gives four possible behaviors for
$f^n_\lambda(a(\lambda))$ in $U$.

In Cases (i),(iii), and (iv) the classification gives a (possibly empty) closed
subvariety $U' \subsetneq U$ such that the set of parameters for which
$a(\lambda)$ is preperiodic is contained in a proper closed subvariety $U_1
\subset (U \setminus U')$.  Moreover, the hypothesis that marked point
$b(\lambda)$ is not persistently exceptional gives that there is another proper
closed subvariety $U_2 \subset U$ such that $b(\lambda)$ is not exceptional for
$\lambda \in U \setminus U_2$.  Then, $U \setminus (U' \cup U_1 \cup U_2)$ is
an open dense subset of $U$.  Since $V(\Qbar)$ is dense in $V$, see the beginning of this subsection,
arbitrarily close to $\lambda_0$ is a
point $\lambda_1 \in U \setminus (U' \cup U_1 \cup U_2)$ with coordinates in
$\Qbar$.  Since there are only finitely many coefficients to consider, we can
find a number field $K$ so that $f_{\lambda_1} \in K(z)$ and the points
$a(\lambda_1), b(\lambda_1) \in \mathbb{P}^1(K)$.  Since $\lambda_1 \in U
\setminus (U' \cup U_1 \cup U_2)$, the point $a(\lambda_1)$ has infinite orbit
under $f_{\lambda_1}$, and the point $b(\lambda_1)$ is not exceptional for
$f_{\lambda_1}$.  Hence Silverman's Theorem E implies that
(\ref{EQN:GOOD_DENSE_SET}) holds for the parameter $\lambda_1$.

Finally suppose we are in Case (ii), so that the marked point $a(\lambda)$ is
persistently preperiodic. By assumption there is no iterate $n$ with
$f^n_\lambda(a(\lambda)) \equiv b(\lambda)$, so there is a proper closed
subvariety $U_1 \subset U$ such that for all $\lambda \in U \setminus U_1$ and
all $n \geq 0$, we have $f^n_\lambda(a(\lambda)) \neq b(\lambda)$.  It follows
that for each $\lambda \in U \setminus U_1$, the quantities
$\dist(f^n_{\lambda}(a(\lambda)), b(\lambda))$ are uniformly bounded in $n
\geq 0$, which  implies~(\ref{EQN:GOOD_DENSE_SET}) for all $\lambda \in U \setminus U_1$.
\end{proof}

\subsection{Arithmetic proof of Proposition \ref{PROP:GOOD_DENSE_SET} under additional hypotheses}
Under additional hypotheses we can prove Proposition \ref{PROP:GOOD_DENSE_SET}
(and hence Theorems C and C') without appealing to the Dujardin-Favre classification of the
passive locus.  Instead we will require some technical results from arithmetic dynamics.

The additional hypotheses we need are:
\begin{itemize}
	\item[(iii)] The parameter space is $\mathbb{P}^1$.
	\item[(iv)]  The marked point $a$ is not passive on all of $\mathbb{P}^1 \setminus \VDEG$.
\end{itemize}
For applications in chromatic zeros our parameter space is $\mathbb{P}^1$ so that
Hypothesis (iii) will automatically hold (in fact, we typically think of it as $\mathbb{C} \subset \mathbb{P}^1$).
Meanwhile, for the renormalization mappings associated with many hierarchical lattices one can check Hypothesis (iv) 
directly, but it does not hold for all such mappings (e.g. when the generating graph is a triangle, as
discussed in Section \ref{SUBSEC:TRIANGLES}).

Proposition \ref{PROP:GOOD_DENSE_SET} will follow from Silverman's Theorem E 
and the next statement (choosing $K$ to be dense in $\mathbb{C}$), whose proof was communicated to us by Laura DeMarco and Niki Myrto Mavraki.

\begin{proposition}\label{PROP:D-M}
	Suppose the hypotheses in Theorem C and additionally hypotheses (iii) and (iv) above.
	Then, for any number field $K$ there are at most finitely many parameters $\lambda \in \mathbb{P}^1(K) \setminus \VDEG$ such that the marked point $a(\lambda)$ is preperiodic under $f_\lambda$.

	%is a dense\footnote{in the complex topology} subset of  $\mathbb{P}^1 \setminus \VDEG$
	%consisting of points $\lambda$ whose coordinates are in $K$ for which 
	%$a(\lambda)$ has infinite orbit under $f_\lambda$.
\end{proposition}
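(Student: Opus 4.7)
The plan is to reformulate the statement arithmetically using canonical heights on the parameter space. For each $\lambda \in V(\Qbar) \setminus \VDEG$ let $\hat{h}_{f_\lambda}:\mathbb{P}^1(\Qbar)\to[0,\infty)$ denote the canonical height associated to $f_\lambda$. Since $\hat{h}_{f_\lambda}(z)=0$ exactly when $z$ is preperiodic for $f_\lambda$, the conclusion is equivalent to showing that for any number field $K$, only finitely many $\lambda\in\mathbb{P}^1(K)\setminus\VDEG$ satisfy $\hat{h}_{f_\lambda}(a(\lambda))=0$.

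First I would view the algebraic family as a single rational map $f_{\mathcal{K}}:\mathbb{P}^1_{\mathcal{K}}\dashrightarrow\mathbb{P}^1_{\mathcal{K}}$ over the function field $\mathcal{K}=\Qbar(\mathbb{P}^1)$ of the parameter space, with marked point $a_{\mathcal{K}}\in\mathbb{P}^1(\mathcal{K})$. The generic canonical height $\hat{h}_{f_{\mathcal{K}}}(a_{\mathcal{K}})$ vanishes exactly when $a_{\mathcal{K}}$ is preperiodic for $f_{\mathcal{K}}$, i.e.\ when $a$ is persistently preperiodic for $f$. However, persistent preperiodicity would allow the orbit of $a$ to be followed by a holomorphic motion across all of $\mathbb{P}^1\setminus\VDEG$, making $a$ passive everywhere and contradicting hypothesis~(iv). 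Hence $\hat{h}_{f_{\mathcal{K}}}(a_{\mathcal{K}})>0$.

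Next I would invoke the Call-Silverman Specialization Theorem applied to the pair $(f_{\mathcal{K}}, a_{\mathcal{K}})$, which yields
\[
\hat{h}_{f_\lambda}(a(\lambda)) \, = \, \hat{h}_{f_{\mathcal{K}}}(a_{\mathcal{K}})\cdot h_{\mathbb{P}^1}(\lambda)\,+\,O(1),
\]
where $h_{\mathbb{P}^1}$ is the standard Weil height on the parameter space $\mathbb{P}^1$. With the leading coefficient strictly positive by the previous paragraph, any parameter $\lambda$ for which $a(\lambda)$ is preperiodic (so that $\hat{h}_{f_\lambda}(a(\lambda))=0$) must satisfy $h_{\mathbb{P}^1}(\lambda)=O(1)$. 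Northcott's theorem, asserting that any number field $K$ contains only finitely many points of $\mathbb{P}^1(K)$ of bounded Weil height, then finishes the argument.

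The crucial step is establishing $\hat{h}_{f_{\mathcal{K}}}(a_{\mathcal{K}})>0$, which is where hypothesis~(iv) and the assumption $V=\mathbb{P}^1$ (so that we work over the function field of a curve, to which Call-Silverman applies directly) both enter. Once this positivity is secured, the remainder is a clean combination of Call-Silverman specialization and Northcott finiteness, both standard tools of arithmetic dynamics; this explains why the conclusion is weaker than (and indeed strictly implied by) a full appeal to the Dujardin-Favre classification used in the main proof of Proposition~\ref{PROP:GOOD_DENSE_SET}.
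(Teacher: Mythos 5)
Your argument follows essentially the same route as the paper's: combine Call--Silverman specialization with a positivity statement for the generic canonical height $\hat{h}_f(a)$, then finish with Northcott. (You run it in contrapositive form, the paper by contradiction; both reduce to showing that $\hat h_f(a)\ne 0$ under hypothesis~(iv).) The difference is in how that positivity is obtained, and there you have a genuine gap.

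You assert that ``the generic canonical height $\hat{h}_{f_{\mathcal{K}}}(a_{\mathcal{K}})$ vanishes exactly when $a_{\mathcal{K}}$ is preperiodic for $f_{\mathcal{K}}$'' as though it were a general fact about canonical heights over function fields. It is not. Over the function field $\mathcal{K}=\Qbar(\mathbb{P}^1)$, the canonical height can vanish for non-preperiodic points: take the isotrivial family $f_\lambda(z)=z^2$ with $a(\lambda)\equiv 2$; the orbit $2,4,16,\dots$ is infinite, yet $\hat h_{f_{\mathcal{K}}}(a_{\mathcal{K}})=0$ because every iterate is a constant element of $\mathcal{K}$ and therefore has function-field height zero. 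The equivalence you want---$\hat h_f(a)=0$ iff $(f,a)$ is preperiodic iff $a$ is passive on all of $\mathbb{P}^1\setminus\VDEG$---is exactly the content of DeMarco's Theorem \ref{DeM3}, which the paper cites, and which crucially requires the family to be non-isotrivial. Your proof never invokes this theorem and treats its conclusion as a triviality; as written, the step ``$\hat h_{f_{\mathcal{K}}}(a_{\mathcal{K}})=0 \Rightarrow a_{\mathcal{K}}$ preperiodic'' is unjustified. Once you do invoke Theorem \ref{DeM3}, note that it already gives ``$\hat h_f(a)=0 \Rightarrow a$ is passive everywhere'' directly, so your intermediate argument via holomorphic motions of a persistently preperiodic orbit becomes unnecessary.

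A minor further imprecision: you quote Call--Silverman specialization with an $O(1)$ error term, i.e.\ $\hat h_{f_\lambda}(a(\lambda))=\hat h_f(a)\,h(\lambda)+O(1)$. The version in \cite[Theorem 4.1]{callsilverman} that the paper uses is the weaker asymptotic statement $\hat h_{f_{\lambda_n}}(a(\lambda_n))/h(\lambda_n)\to \hat h_f(a)$ as $h(\lambda_n)\to\infty$, which is what is cited and is already enough: if $\hat h_f(a)>0$, this forces any sequence of $K$-rational parameters with $a(\lambda_n)$ preperiodic to have bounded height, and Northcott finishes. You should not claim the sharper $O(1)$ form without a reference, though for this argument it costs nothing to weaken it.
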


We will need the following two results, which depend on having a one-dimensional parameter space.
Denote the logarithmic absolute Weil height on $\Qbar$ by $h: \Qbar \rightarrow \mathbb{R}$.
For a rational map $\phi: \mathbb{P}^1 \rightarrow \mathbb{P}^1$ defined over
$K$ and a point $P \in \mathbb{P}^1(K)$, we denote the canonical height
function associated to $\phi$ by~$\hat{h}_\phi(P)$. For more background on these definitions, see \cite{silvermanbook}. 

\begin{CSS}
	Let $(f, a)$ be a one-dimensional algebraic family of rational maps of
degree $d \geq 2$ with a marked point $a$, both defined over a number field
$K$. Then, for any sequence of parameters $\{\lambda_n\}_{n=1}^\infty \subset
\mathbb{P}^1(K) \setminus \VDEG$ such that $h(\lambda_n) \rightarrow \infty$, we
have
	\begin{equation*}
	\lim_{n\rightarrow \infty} \frac{\hat{h}_{f_{\lambda_n}}(a(\lambda_n))}{h(\lambda_n)} = \hat{h}_f (a),
	\end{equation*}
	where $\hat{h}_f (a)$ is the canonical height associated to the pair $(f,a)$. 
\end{CSS}
\noindent
The canonical height $\hat{h}_f(a)$ was introduced in \cite{callsilverman}.
% \textcolor{red}{Does Call-Silverman delete $\VDEG$?}

%Let $f: \mathbb{P}^1 \times \mathbb{P}^1 \dashrightarrow \mathbb{P}^1$ be an algebraic family and let $a$ be a marked point.  
The pair $(f,a)$ is called
{\em isotrivial} if there exists a branched covering $W \rightarrow \mathbb{P}^1
\setminus \VDEG$ and a family of holomorphically varying M\"obius transformations
$M_\lambda$ such that $M_\lambda\circ f_\lambda \circ M_\lambda^{-1}$ is
independent of $\lambda \in W$ and also $M_\lambda\circ a$ is a constant
function of $\lambda \in W$.

\begin{theorem}{\bf (DeMarco \cite[Theorem 1.4]{DeM3})}\label{DeM3}
	Suppose $f: \mathbb{P}^1 \times \mathbb{P}^1 \dashrightarrow \mathbb{P}^1$ is a
	non-isotrivial one-dimensional algebraic family of rational maps.
	Let $\hat{h}_f:\mathbb{P}^1(\bar{k})\rightarrow \mathbb{R}$
	be a canonical height of $f$, defined over the function field
	$k=\mathbb{C}(\mathbb{P}^1)$. For each $a \in \mathbb{P}^1(\bar{k})$, the following
	are equivalent:
	
	\begin{enumerate}
		\item[\rm (1)] The marked point $a$ is passive in all of $\mathbb{P}^1 \setminus \VDEG$;
		\item[\rm (2)] $\hat{h}_f (a) = 0$;
		\item[\rm (3)] $(f, a)$ is preperiodic.
	\end{enumerate}
	Moreover, the set 
	\begin{align*}
	\{a\in \mathbb{P}^1(k) : a \mbox{ is passive in all of } \ \mathbb{P}^1 \setminus \VDEG \}
	\end{align*}
	is finite.
\end{theorem}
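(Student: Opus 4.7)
The plan is to prove the cycle $(3) \Rightarrow (2)$, $(3) \Rightarrow (1)$, $(1) \Rightarrow (2)$, $(2) \Rightarrow (3)$, and then deduce the finiteness statement from the same circle of ideas.

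The two implications out of $(3)$ are the elementary ones. For $(3) \Rightarrow (2)$, the functional equation $\hat{h}_f(f(P)) = d \cdot \hat{h}_f(P)$ applied to an identity $f^n(a) = f^m(a)$ with $n > m$ gives $d^n \hat{h}_f(a) = d^m \hat{h}_f(a)$, forcing $\hat{h}_f(a) = 0$. For $(3) \Rightarrow (1)$, the same algebraic identity in $\mathbb{P}^1(\bar{k})$ specializes to $f_\lambda^n(a(\lambda)) = f_\lambda^m(a(\lambda))$ on a Zariski-open subset of $\mathbb{P}^1$, so the forward orbit of $a(\lambda)$ takes values in a finite set and Montel's theorem yields normality everywhere on $\mathbb{P}^1 \setminus \VDEG$.

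The key analytic ingredient for $(1) \Rightarrow (2)$ is a mass formula identifying the total mass of the activity current $T_a$, extended across $\VDEG$ as in Proposition \ref{PROP:L1LOCC}, with the canonical height $\hat{h}_f(a)$. I would establish this by recognizing local PSH potentials of $T_a$ as fiberwise dynamical Green functions $G(\lambda, \tilde{a}(\lambda))$ and matching them, via the Call-Silverman adelic decomposition $\hat{h}_f(a) = \sum_v \hat{\lambda}_{f,v}(a)$, with local canonical heights at each place of $k$. Hypothesis $(1)$ forces $T_a$ to vanish on $\mathbb{P}^1 \setminus \VDEG$, and a further local argument at each $v \in \VDEG$ (using passivity on a punctured neighborhood of $v$ to bound the potential from above there) shows the residual atomic masses vanish as well, so $\hat{h}_f(a) = 0$.

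The direction $(2) \Rightarrow (3)$ and the finiteness statement both rest on Baker's theorem over function fields, which is the main obstacle of the entire proof. Baker's theorem asserts that for a non-isotrivial family $f$ over $k = \mathbb{C}(\mathbb{P}^1)$, a point $a \in \mathbb{P}^1(\bar{k})$ satisfies $\hat{h}_f(a) = 0$ if and only if it is preperiodic; it is proved via adelic equidistribution of points of small height together with a rigidity argument that crucially uses non-isotriviality. (Isotriviality would allow an infinite family of preperiodic sections via conjugation by a holomorphic M\"obius family, so the hypothesis cannot be dropped.) By the equivalence just established, the passive $a \in \mathbb{P}^1(k)$ are exactly the preperiodic points of $f$ in $\mathbb{P}^1(k)$, and Baker's theorem further implies that there are only finitely many preperiodic points of bounded algebraic degree over $k$, which yields the finiteness claim. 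I would invoke Baker's theorem as a black box, as its proof lies well outside the techniques used elsewhere in this paper.
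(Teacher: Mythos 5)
The paper does not prove this theorem: it is stated purely as a citation of DeMarco \cite[Theorem 1.4]{DeM3} and used as a black box in the proof of Proposition \ref{PROP:D-M}, so there is no in-paper argument to compare your proposal against.

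On its own merits, your outline does capture the architecture of the result, and it correctly singles out the two nontrivial inputs: the mass formula $\int_{\mathbb{P}^1} \tilde{T}_a = \hat{h}_f(a)$ (with $T_a$ extended across $\VDEG$ as in Proposition \ref{PROP:L1LOCC}) for $(1)\Rightarrow(2)$, and Baker's function-field theorem for $(2)\Leftrightarrow(3)$ together with the Northcott-style finiteness over $\mathbb{P}^1(k)$. The elementary implications $(3)\Rightarrow(2)$ and $(3)\Rightarrow(1)$ are handled correctly. The one place you wave your hands in a way that hides a genuine gap is the claim that ``using passivity on a punctured neighborhood of $v$ to bound the potential from above there'' shows the residual atomic masses of $\tilde{T}_a$ at points of $\VDEG$ vanish. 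A local upper bound on a PSH potential does not control its Lelong number at the puncture: a potential that is pluriharmonic and bounded above on a punctured disk can still have a strictly positive Lelong number at the center (e.g.\ $\log|\lambda|$). Ruling out this atomic mass is precisely the delicate part of the $(1)\Rightarrow(2)$ direction in \cite{DeM3}, and it is not obtained by an elementary local estimate. As stated, your argument would not close this step; you would need to supply a two-sided estimate on $G(\lambda,\tilde{a}(\lambda))$ near $\VDEG$, which is where the bulk of the analytic work actually lives. Baker's theorem, which you correctly leave as a black box, is the other genuinely hard input.
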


%\begin{lemma}
%Suppose the activity locus $T_a$ is non-empty. Then the family $f$ cannot be isotrivial.
%\end{lemma}

%\begin{proof}
%Let $\lambda_0$ be an active parameter, and let $U_0$ be a neighborhood containing $\lambda_0$. Pick a repelling cycle for $f_{\lambda_0}$ with period $\geq 3$ and choose three distinct points in the cycle. By reducing $U_0$ if necessary, we can assume the repelling cycle varies holomorphically in $U_0$.  

%Since $\lambda_0$ is active, the family $\{f^n_{\lambda} a(\lambda)\}_{n=0}^\infty$ cannot be normal. By Montel's Theorem, there is some $\lambda_1 \in U_0, n_1 \geq 0$ such that $f^{n_1}_{\lambda_1} a(\lambda_1)$ lands on the above cycle.

%Meanwhile, choose another repelling cycle for $f_{\lambda_0}$ with distinct period $\geq 3$ and repeat the above argument. Then we find two parameters $\lambda_1, \lambda_2$ such that the marked points $a(\lambda_1), a(\lambda_2$ are preperiodic to two cycles with distinct periods (under $f_{\lambda_1}, f_{\lambda_2}$ respectively), which is impossible if $f$ is isotrivial.
%\end{proof}

\begin{proof}[Proof of Proposition \ref{PROP:D-M}]
	%It is sufficient to restrict our attention to any affine coordinate system
	%$\mathbb{C}^m \subset \mathbb{P}^m$.  Since points in $K^{m-1}$ are dense in
	%$\mathbb{C}^{m-1}$ it suffices to prove the claim for each vertical slice of
	%parameter space $\kappa \times \mathbb{C}$ that is not contained entirely in $\VDEG$, where $\kappa \in K^{m-1}$.
	%We do this in order to have a one-dimensional parameters space, allowing us to appeal to the above
	%two results.
	
	%In fact, we will show that within any such vertical slice $\kappa \times
	%\mathbb{C}$ there are only finitely many parameters $\lambda = (\kappa,t)$ with
	%$t \in K$ and $a(\lambda)$ preperiodic under $f_\lambda$.
	
	Assume on the contrary that there is a sequence of distinct parameters
	$\{\lambda_n\}_{n=1}^\infty \subset \mathbb{P}^1(K)  \setminus \VDEG$ 
	such that $a(\lambda_n)$ is preperiodic for $f_{\lambda_n}$.
	It follows from Northcott property \cite[Theorem 3.7]{silvermanbook}
	that the
	parameters $\lambda_n$ satisfies $h(\lambda_n) \rightarrow \infty$.
	Meanwhile, since $a(\lambda_n)$ is preperiodic for $f_{\lambda_n}$, we have
	$\hat{h}_{f_{\lambda_n}}(a(\lambda_n)) = 0$.  Then Call-Silverman
	Specialization implies  $\hat{h}_f (a) = 0$, so by Theorem \ref{DeM3} the
	marked point $a$ must be passive in all of $\mathbb{P}^1 \setminus \VDEG$,
	which contradicts hypothesis~(iv).
	
\end{proof}

%%%%%%%%%%%%%%%%%%%%%%%%%%%%%%%%%%%%%%%%%%%%%%%%%%%%%%%%%%%%%%%%%%%%%%%%%%%%%%%%%%%%%%%%%%%%%%%%%%%%%%%
%%%%%%%%%%%%%%%%%%%%%%%%%%%%%%%%%%%%%%%%%%%%%%%%%%%%%%%%%%%%%%%%%%%%%%%%%%%%%%%%%%%%%%%%%%%%%%%%%%%%%%%

\section{Proof of Theorem C'}\label{SEC:PROOF_THMC'}

The following statement about convergence of sequences of PSH functions is probably standard, but we will include a proof
because we cannot find an appropriate reference.

\begin{proposition}\label{LEM:PATCHING}
Let $\{\phi_n\}_{n=1}^\infty$ be a sequence of PSH functions in an open connected set $U \subseteq \mathbb{C}^m$ which is uniformly bounded above in compact sets. Suppose there is a PSH function $\phi$ in $U$ such that $\phi_n \rightarrow \phi$ in $L^1_{loc}(U \setminus X)$, where $X \subset U$ is an analytic hypersurface. Then $\phi_n \rightarrow \phi$ in $L^1_{loc}(U)$.
\end{proposition}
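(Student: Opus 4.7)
The plan is to argue by contradiction using the compactness theorem for PSH functions (Hörmander, Theorem 4.1.9), which is already cited several times in the paper.

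Suppose, for contradiction, that $\phi_n$ does not converge to $\phi$ in $L^1_{\text{loc}}(U)$. Then there exist $\varepsilon > 0$, a compact set $K \subset U$, and a subsequence $\phi_{n_k}$ with $\|\phi_{n_k} - \phi\|_{L^1(K)} \geq \varepsilon$. The sequence $\{\phi_{n_k}\}$ is still uniformly bounded above on compact subsets of $U$, so the compactness theorem for PSH functions guarantees that some further subsequence, which I will still denote $\phi_{n_k}$, either (a) tends to $-\infty$ locally uniformly on $U$, or (b) converges in $L^1_{\text{loc}}(U)$ to some PSH function $\psi$ on $U$.

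Case (a) is ruled out immediately: locally uniform convergence of $\phi_{n_k}$ to $-\infty$ on $U$ forces locally uniform convergence to $-\infty$ on $U \setminus X$, but by hypothesis $\phi_{n_k} \to \phi$ in $L^1_{\text{loc}}(U \setminus X)$ with $\phi$ a (genuine) PSH function, a contradiction. So we are in case (b), and obtain $\psi \in \text{PSH}(U)$ with $\phi_{n_k} \to \psi$ in $L^1_{\text{loc}}(U)$.

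In particular $\phi_{n_k} \to \psi$ in $L^1_{\text{loc}}(U \setminus X)$, and also $\phi_{n_k} \to \phi$ in $L^1_{\text{loc}}(U \setminus X)$ by hypothesis; hence $\psi = \phi$ almost everywhere on $U \setminus X$. Since $X$ is an analytic hypersurface it has Lebesgue measure zero in $U$, so $\psi = \phi$ almost everywhere on $U$. Two PSH functions on a connected open set that agree almost everywhere must be equal pointwise (their usc representatives coincide, as a PSH function is the decreasing limit of its mean values over shrinking balls, which depend only on its a.e.\ equivalence class). Therefore $\psi \equiv \phi$ on $U$, contradicting $\|\phi_{n_k} - \phi\|_{L^1(K)} \geq \varepsilon$.

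The proof is essentially a packaging exercise; the only point requiring care is ruling out the $-\infty$ alternative in the compactness dichotomy (handled by the hypothesis that $\phi_{n_k} \to \phi$ off $X$ and that $\phi$ is a genuine PSH function) and the standard fact that two PSH functions agreeing almost everywhere on a connected open set agree everywhere. No further input is needed, so the equivalent statement about currents $\operatorname{dd}^c \phi_n \to \operatorname{dd}^c \phi$ follows by continuity of $\operatorname{dd}^c$ on $L^1_{\text{loc}}$.
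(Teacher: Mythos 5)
Your proof is correct and follows the same overall contradiction-via-Hörmander-compactness strategy as the paper, but the finishing step is different. After extracting the limit $\psi$ (the paper calls it $\tilde{\phi}$), the paper argues measure-theoretically: it picks a compact $K' \subset K$ of positive measure where $|\tilde{\phi}-\phi| > \delta$, removes a shrinking tubular neighborhood $X'_\varepsilon$ of $X$ chosen so that $\mathrm{Leb}(K') = 2\,\mathrm{Leb}(X'_\varepsilon)$, and observes that the remaining compact set $K'\setminus X'_\varepsilon$ is disjoint from $X$ yet still has positive measure, contradicting $\tilde{\phi} = \phi$ a.e.\ off $X$. You instead invoke the standard regularity fact that a PSH function on a connected open set is determined by its a.e.\ equivalence class (recoverable as the decreasing limit of its ball averages), so $\psi = \phi$ a.e.\ on $U\setminus X$ plus $\mathrm{Leb}(X) = 0$ gives $\psi \equiv \phi$ on all of $U$ directly. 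Your route is cleaner and avoids the explicit neighborhood-shrinking; the paper's route is more self-contained in that it only uses that $X$ is a null set and never appeals to the a.e.-determinacy of PSH functions. You also explicitly rule out the ``degenerate to $-\infty$'' branch of the compactness dichotomy, which the paper leaves implicit — a minor but genuine improvement in rigor.
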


\begin{proof}
Assume by contradiction that $\phi_n$ does not converge to $\phi$ in $L^1_{loc}(U)$. Then there is an $\epsilon >0$, a compact set $K$ with positive Lebesgue measure, and a subsequence $\phi_{n_k}$ such that 
\begin{align*}
|| \phi_{n_k} - \phi ||_{L^1(K)} > \epsilon \ \mbox{ for all } k.
\end{align*} 
Note that since $\phi_n \rightarrow \phi$ in $L^1_{\rm loc}(U \setminus X)$, the compact set $K$ must intersect $X$.
By the hypotheses, the sequence $\phi_{n_k}$ satisfies the conditions for the compactness theorem for PSH functions \cite[Theorem 4.1.9]{hormander}, so we can find a further subsequence (which we still denote by $\phi_{n_k}$), and a PSH function $\tilde{\phi}$ in $U$ such that 
\begin{align*}
	\phi_{n_k} \rightarrow \tilde{\phi} \quad \mbox{in} \quad L^1_{loc}(U).
\end{align*}
In particular $\phi_{n_k} \rightarrow \tilde{\phi}$ in $L^1(K)$, which implies
$\tilde{\phi} \neq \phi$ in $L^1(K)$, so there exist $\delta > 0$ and a compact
subset $K' \subset K$ with positive Lebesgue measure such that
$|\tilde{\phi}(z) - \phi (z)| > \delta$ for all $z \in K'$. Let $X_\epsilon$ be
the $\epsilon$-neighborhood of $X$ in $U$, and let $X'_\epsilon:= X_\epsilon
\cap K'$. Choose $\epsilon > 0$ which satisfies ${\rm Leb}(K') = 2 {\rm Leb}
(X'_\epsilon)$, where ${\rm Leb}$ denotes
Lebesgue measure. It follows that
\begin{align}\label{EQU:DIFF}
\int_{K' \setminus X'_\epsilon} |\tilde{\phi} - \phi| \ d{\rm Leb} > \delta \cdot {\rm Leb}(K' \setminus X'_\epsilon) = \frac{\delta}{2}{\rm Leb}(K') > 0.
\end{align}
Meanwhile, since $K' \setminus X'_\epsilon$ is a compact subset of $U$ disjoint from
$X$, we must have $\tilde{\phi} = \phi$ in $L^1 (K' \setminus X'_\epsilon)$,
which contradicts (\ref{EQU:DIFF}).
\end{proof}

\begin{proof}[Proof of Theorem C']
This is a local statement, so we can suppose without loss of generality that
$V$ is an open subset of $\mathbb{C}^m$.
In the proof of 
Theorem C we saw that
\begin{align}\label{EQN:THMCPRIME}
\frac{1}{d^n} \log |P_\lambda^{(n)}(\tilde{a}(\lambda))b_2(\lambda)-Q_\lambda^{(n)}(\tilde{a}(\lambda))b_1(\lambda) |^2 \rightarrow 2H(\lambda) \qquad \mbox{in $L^1_{\rm loc}(V \setminus \VDEG)$.}
\end{align}
Here, the notation is as in the proof of Proposition \ref{PROP:4.1}.
Note that the PSH functions on the left hand side of (\ref{EQN:THMCPRIME}) are defined on all of $V$ and are potentials for
the currents defined in Equation~(\ref{EQN:THMCPRIME_CONV}) from Theorem C'.

On any precompact open subset of $V$ we can choose our lift $\tilde{a}(\lambda)$ sufficiently close
to the origin $(0,0) \in \mathbb{C}^2$ so that it is in the basin of attraction of $(0,0)$ under
$\tilde{f}_\lambda(z,w) = (P_\lambda(z,w),Q_\lambda(z,w))$. 
It then follows that the sequence of potentials on the left hand side of (\ref{EQN:THMCPRIME}) is locally bounded above.  Hence, Proposition \ref{LEM:PATCHING} implies that the convergence extends to all of $V$.

\end{proof}

%%%%%%%%%%%%%%%%%%%%%%%%%%%%%%%%%%%%%%%%%%%%%%%%%%%%%%%%%%%%%%%%%%%%%%%%%%%%%%%%%%%%%%%%%%%%%%%%%%%%%%%
%%%%%%%%%%%%%%%%%%%%%%%%%%%%%%%%%%%%%%%%%%%%%%%%%%%%%%%%%%%%%%%%%%%%%%%%%%%%%%%%%%%%%%%%%%%%%%%%%%%%%%%

\section{The Potts Model, Chromatic Zeros, and Migdal-Kadanoff Renormalization}\label{PottsModel}

We first give a brief account of the antiferromagnetic Potts model on a graph
$\Gamma$ and its connection with the chromatic zeros of $\P_\Gamma$. Suitable
references include  \cite{WUSURVEY,shrock,sokalsurvey}, \cite[p.323-325]{BAXTER}, and references therein.  We then describe the  Migdal-Kadanoff
Renormalization procedure that produces a rational function $r_q(y)$ relating
the zeros for the Potts Model on one level of a hierarchical lattice
to the zeros for the next level.  The remainder of the section is devoted to
proving properties of the renormalization mappings $r_q(y)$.

\subsection{Basic Setup}

Fix a graph $\Gamma = (V,E)$ and fix an integer $q \geq 2$. A \textit{spin configuration} of the graph $\Gamma$ is a map
\begin{equation*}
\sigma : V \rightarrow \{1,2,...,q\}.
\end{equation*} 
Fix the coupling constant  $J < 0$. The energy $H_\Gamma(\sigma)$ associated with a configuration $\sigma$ on $\Gamma$ is defined as
\begin{equation*}
H_\Gamma(\sigma) = -J \sum_{\{v_i,v_j\} \in E} \delta(\sigma(v_i), \sigma(v_j)) = -J \mathcal{E}(\sigma),
\end{equation*}
where $\delta(a,b)=1$ if $a=b$ and $0$ otherwise, and $\mathcal{E}(\sigma)$ is the number of edges whose endpoints are assigned the same spin under $\sigma$.
Remark that since $J < 0$ it is energetically favorable to have different spins at the endpoints of each edge, if possible.  This means 
that we are in the {\em antiferromagnetic} regime.

The Boltzmann distribution 
assigns a configuration $\sigma$ on $\Gamma$ probability proportional to the weight
\begin{equation*}
W_\Gamma(\sigma) = \exp \left(-\frac{H_\Gamma(\sigma)}{{\rm T}}\right) = \exp \left(\frac{J \mathcal{E}(\sigma)}{{\rm T}}\right),
\end{equation*}
where ${\rm T}>0$ is the temperature of the system \footnote{We set the
Boltzmann constant $k_B=1$.}.  The probability ${\rm Pr}(\sigma)$ of $\sigma$
occurring is therefore
\begin{equation}\label{DEF:PARTI}
{\rm Pr}(\sigma) = W_\Gamma(\sigma)/Z_\Gamma \quad \mbox{where} \quad Z_\Gamma:= \sum_\sigma W_\Gamma(\sigma),
\end{equation}
and the sum is over all possible spin configurations.
Some intuition for this distribution can be gained by considering the following two extreme cases:
when ${\rm T}$ is near zero, configurations with
minimum energy are strongly favored. Meanwhile for high temperature, all
configurations occur with nearly equal probability.

Let us introduce the temperature-like
%\footnote{It is customary to use $e^{-J/{\rm T}}$ (or $e^{-2J/{\rm T}}$) as a temperature-like variable in the ferromagnetic Ising Model, so we use $y$ here to avoid confusion about signs.} 
variable $y:=e^{J/{\rm T}}$, so that
$W_\Gamma(\sigma)=y^{\mathcal{E}(\sigma)}.$ All the quantities above implicitly
depend on $q$, $y$, and the graph $\Gamma$. The normalizing factor $Z_\Gamma
(q, y)$ is called the \textit{partition function} and given by
\begin{align*}
Z_\Gamma(q,y):= \sum_\sigma y^{\mathcal{E}(\sigma)}.
\end{align*}
It turns out that $Z_\Gamma(q,y)$ is actually a polynomial in {\em both} $q$ and $y$.  To see this
it will be helpful to express the partition function in terms of $(q,v)$ where
$v = y-1$.   For any subset of the edge set
$A \subseteq E$ is a subgraph $(V, A)$.  We have
\begin{equation}\label{DEF:PARTI2}
 Z_\Gamma (q,v)  = \sum_\sigma \prod_{(i, j) \in E}\ [1 + v \delta(\sigma_i, \sigma_j)] = \sum_{A \subseteq E} q^{k(A)} v^{|A|}.
\end{equation}
where $k(A)$ is the number of connected components of $(V,A)$, including
isolated vertices.  This is called the {\em Fortuin-Kasteleyn \cite{FK} representation of
$Z_\Gamma (q,v)$}; see, for example, \cite[Section 2.2]{sokalsurvey}.  (We will
only express $Z_\Gamma$ in terms of $v$ instead of $y$ in this paragraph and in
Subsection \ref{SUBSEC:IRRED}.)

As discussed in the introduction, we will describe the zeros of $Z_\Gamma(q,y)$ as a divisor denoted
\begin{align*}
\mathcal{S}:=(Z_\Gamma(q, y)=0).
\end{align*}
Remark that in the next subsection we will see that if $\Gamma$ is $2$-connected, then $Z_\Gamma(q,y) = q \tilde{Z}_\Gamma(q,y)$
with $\tilde{Z}_\Gamma(q,y)$ irreducible, implying $\mathcal{S}$ is a {\em reduced divisor}, i.e.\ all multiplicities are one.
Therefore, if $\Gamma$ is $2$-connected
there is no harm in thinking of $\mathcal{S}$ as a (reducible) algebraic curve.

To establish the connection between the chromatic polynomial $\P_\Gamma(q)$ and the partition function $Z_\Gamma(q,y)$ of the Potts model note that
\begin{equation*}
\P_\Gamma(q)=\sum_{\substack{\sigma  \text{  such that} \\ \mathcal{E}(\sigma)=0}} 1 = Z_\Gamma(q,0).
\end{equation*}
Therefore, the chromatic zeros are given by the intersection:
\begin{equation*}\label{DEF:C}
\mathcal{C}:= \mathcal{S} \cap (y=0),
\end{equation*}
where Bezout intersection multiplicities and multiplicities of the divisor $\mathcal{S}$ are taken into account.

\subsection{Irreducibility of $\tilde{Z}_\Gamma(q,y)$ for $2$-connected $\Gamma$}
\label{SUBSEC:IRRED}
It follows from (\ref{DEF:PARTI2}) that we can always factor $Z_\Gamma (q,v) =
q \widetilde{Z}_\Gamma (q,v)$ in the polynomial ring $\mathbb{C}[q,v]$.
The goal of this subsection is to prove:
\begin{proposition}\label{PROP:IRRED}
  If $\Gamma$ is $2$-connected, then $\widetilde{Z}_\Gamma (q,v)$ is irreducible in $\mathbb{C}[q,v]$.  (The same holds in the $(q,y)$ variables.)
\end{proposition}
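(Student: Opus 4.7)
My plan is to reduce irreducibility of $\widetilde{Z}_\Gamma(q,v)$ to irreducibility of the Tutte polynomial $T_\Gamma(x,y)$, so that the Merino-Mier-Noy theorem cited in the preamble can be applied directly under the $2$-connectedness hypothesis. First I would record the standard Tutte expansion for the connected graph $\Gamma = (V,E)$ with $|V|=n$,
\[
T_\Gamma(x,y) \;=\; \sum_{A \subseteq E} (x-1)^{k(A)-1}(y-1)^{|A|+k(A)-n},
\]
and, by comparing it term-by-term with the Fortuin-Kasteleyn expansion $Z_\Gamma(q,v)=\sum_A q^{k(A)}v^{|A|}$, derive the key identity
\[
\widetilde{Z}_\Gamma\bigl((x-1)(y-1),\, y-1\bigr) \;=\; (y-1)^{n-1}\, T_\Gamma(x,y) \qquad \text{in } \mathbb{C}[x,y].
\]
The calculation is a one-liner once the Tutte expansion and the factorization $Z_\Gamma = q\widetilde{Z}_\Gamma$ are on the page, and I would cross-check it on $\Gamma = K_3$ as a sanity test.

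Next I would introduce the ring homomorphism $\psi: \mathbb{C}[q,v] \to \mathbb{C}[x,y]$ defined by $q \mapsto (x-1)(y-1)$ and $v \mapsto y-1$, and observe it is injective: writing $F(q,v)=\sum a_{ij}q^iv^j$ gives $\psi(F)=\sum a_{ij}(x-1)^i(y-1)^{i+j}$, and the pair $(i,i+j)$ uniquely recovers $(i,j)$. The strategy is then to argue by contradiction: suppose $\widetilde{Z}_\Gamma = PQ$ in $\mathbb{C}[q,v]$ with both $P,Q$ non-constant; applying $\psi$ yields
\[
\psi(P)\,\psi(Q) \;=\; (y-1)^{n-1}\, T_\Gamma(x,y)
\]
in the UFD $\mathbb{C}[x,y]$. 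Merino-Mier-Noy gives that $T_\Gamma(x,y)$ is irreducible, and since $T_\Gamma(1,1)$ is the (positive) number of spanning trees of $\Gamma$, the prime $(y-1)$ does not divide $T_\Gamma$. Unique factorization in $\mathbb{C}[x,y]$ therefore forces, after swapping $P$ and $Q$ if necessary, $\psi(P) = c\,(y-1)^a$ for some $c \in \mathbb{C}^\times$ and $0 \le a \le n-1$; by injectivity of $\psi$ this pulls back to $P(q,v) = c\,v^a$.

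The final step is to rule out $a \ge 1$, which I would do by setting $v = 0$ in the Fortuin-Kasteleyn sum. Only the term $A = \emptyset$ survives, giving $Z_\Gamma(q,0) = q^n$ and hence $\widetilde{Z}_\Gamma(q, 0) = q^{n-1} \neq 0$; so $v \nmid \widetilde{Z}_\Gamma$, forcing $a = 0$ and $P$ constant, contradicting non-constancy. This gives irreducibility in $\mathbb{C}[q,v]$, and the parenthetical assertion in $\mathbb{C}[q,y]$ is then immediate from the affine isomorphism $v = y-1$. The main obstacle I anticipate is really just the bookkeeping needed to transport a factorization in the enlarged ring $\mathbb{C}[x,y]$ back into $\mathbb{C}[q,v]$; once the identity relating $\widetilde{Z}_\Gamma$ and $T_\Gamma$ is in place, injectivity of $\psi$ together with the trivial evaluation $\widetilde{Z}_\Gamma(q, 0) = q^{n-1}$ closes the argument.
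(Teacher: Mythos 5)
Your argument is correct and rests on the same key reduction as the paper's proof: transport the hypothetical factorization of $\widetilde{Z}_\Gamma$ to the Tutte polynomial via the substitution $q = (x-1)(y-1)$, $v = y-1$, and invoke the Merino--Mier--Noy irreducibility theorem. The implementation differs, though. The paper argues geometrically, tracking irreducible components of zero sets under the birational map $H(x,y) = ((x-1)(y-1),\, y-1)$ and observing that each factor's zero set cannot pull back to just the contracted line $\{y=1\}$; you instead work purely algebraically through the injective ring homomorphism $\psi$ and unique factorization in $\mathbb{C}[x,y]$. Your version is the tighter of the two in a couple of small respects: the paper's claim that $H^{-1}(C_i)$ is never just the line $\{y=1\}$ silently presupposes that $v$ is not, up to scalar, an irreducible factor of $\widetilde{Z}_\Gamma$ --- a case you dispose of explicitly via $\widetilde{Z}_\Gamma(q,0) = q^{|V|-1} \neq 0$ --- and you also spell out why $(y-1) \nmid \TUTTE_\Gamma$ by evaluating $\TUTTE_\Gamma(1,1) > 0$, a point the paper leaves unstated. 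Both proofs lean on the same identities; yours simply does the bookkeeping inside the polynomial ring rather than on varieties, which makes the corner cases visible and cheap to rule out.
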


We will prove this proposition using the well-known relationship between $\widetilde{Z}_\Gamma (q,y)$ and the {\em Tutte Polynomial} of $\Gamma$.
It is defined as
\begin{equation}\label{DEFN:TUTTE}
\TUTTE_\Gamma (x,y) = \sum_{A \subset E} (x-1)^{k(A)-1} (y-1)^{|A|+k(A)-|V|},
\end{equation}
where $k(A)$ has the same interpretation as in (\ref{DEF:PARTI2}). 
The variables $(x,y)$ in the Tutte Polynomial are related\footnote{Although the variable $y$ appears
in Equation (\ref{DEF:PARTI}) for the partition function and also in Equation (\ref{DEFN:TUTTE}) for the Tutte Polynomial, there is no
conflict of notation because both satisfy $y=v+1$.}  to the
variables $(q,v)$ in the Partition Function (\ref{DEF:PARTI2})
by:
\begin{align*}
x = 1+(q/v) \qquad \mbox{and} \qquad y = v+1.
\end{align*}
Comparing (\ref{DEFN:TUTTE}) with (\ref{DEF:PARTI2}) we
see the following relationship \cite[Section 2.5]{sokalsurvey} between $\TUTTE(x,y)$ and $Z_\Gamma(q,v)$:
\begin{equation}\label{TUTTE-PARTI}
   \TUTTE_\Gamma (x,y) = (x-1)^{-1} (y-1)^{-|V|} Z_\Gamma ((x-1)(y-1), y-1).
\end{equation}

Proposition \ref{PROP:IRRED} will be a corollary to the following nice
result by de Mier, Merino, and Noyi \cite{merino}.

\begin{IRRETUTTE}
	If $\Gamma$ is a 2-connected graph, then $\TUTTE_\Gamma (x,y)$ is irreducible in $\mathbb{C}[x,y]$.
\end{IRRETUTTE}	

\noindent
Remark that the theorem proved in \cite{merino} is that the Tutte polynomial
of a connected matroid is irreducible.  However, that implies the result stated above because
associated to any graph $\Gamma$ is a matriod $M(\Gamma)$, called the {\em cycle matroid} of $\Gamma$, which has the following properties:
\begin{enumerate}
\item The Tutte polynomial of $\Gamma$ equals the Tutte polynomial of $M(\Gamma)$; (see e.g.
\cite[Equation 1.4]{sokalsurvey}),
\item $\Gamma$ is $2$-connected iff $M(\Gamma)$ is connected; (see e.g. \cite[p. 78]{WELSH}.)
\end{enumerate}

\begin{lemma}
        $Z_\Gamma(q,v)$ vanishes to order exactly $|V|$ at the origin.
\end{lemma}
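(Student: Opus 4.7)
The plan is to use the Fortuin–Kasteleyn representation
$$ Z_\Gamma(q,v) = \sum_{A \subseteq E} q^{k(A)} v^{|A|} $$
directly, and simply identify the lowest total-degree monomials. Since this is a polynomial in $(q,v)$ with \emph{nonnegative integer} coefficients, the order of vanishing at the origin equals the smallest value of $k(A)+|A|$ attained as $A$ ranges over subsets of $E$, as no cancellation is possible between terms of equal total degree.

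First I would observe that for any subgraph $(V,A)$, the inequality $|A| \geq |V| - k(A)$ holds, with equality exactly when every connected component of $(V,A)$ is a tree. Indeed, each connected component of $(V,A)$ with $n_i$ vertices has at least $n_i-1$ edges, and summing over the $k(A)$ components gives $|A| \geq |V|-k(A)$. Consequently
$$ k(A) + |A| \;\geq\; |V|, $$
with equality if and only if $(V,A)$ is a spanning forest of $\Gamma$.

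Next I would isolate the homogeneous part of lowest total degree:
$$ \sum_{\substack{A \subseteq E \\ (V,A) \text{ spanning forest}}} q^{k(A)} v^{|A|}. $$
This sum has strictly positive integer coefficients, so it is nonzero. In fact, the empty subset $A = \emptyset$ is a spanning forest with $k(\emptyset) = |V|$ and $|\emptyset| = 0$, contributing the monomial $q^{|V|}$, whose coefficient cannot be cancelled by any other term of total degree $|V|$ (no other spanning forest has $|A|=0$). Therefore the homogeneous component of $Z_\Gamma(q,v)$ of degree $|V|$ is nonzero, while all homogeneous components of lower degree vanish identically.

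There is no real obstacle here; the only thing to watch is the direction of the inequality and the combinatorial identification of the equality case (spanning forests), both of which are standard. This immediately yields that the order of vanishing of $Z_\Gamma(q,v)$ at the origin is exactly $|V|$, as desired.
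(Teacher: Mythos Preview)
Your proof is correct and follows essentially the same approach as the paper: both use the Fortuin--Kasteleyn representation, observe that $k(A)+|A|\geq |V|$ for every $A\subseteq E$, and note that the edgeless subgraph $A=\emptyset$ achieves equality. Your version is more detailed (you make explicit the no-cancellation argument from nonnegativity of the coefficients and characterize the equality case as spanning forests), but the underlying argument is the same.
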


\begin{proof}
 For any subgraph $(V, A)$, it follows from a counting argument that
 $k(A) + |A| \geq |V|$. Moreover, for the subgraph $(V, A_0)$ without any edges, the sum $k(A_0) + |A_0|$ is exactly $|V|$. Therefore the order of vanishing is exactly $|V|$ at the origin.
\end{proof}

\begin{proof}[Proof of Proposition \ref{PROP:IRRED}]
By the Irreducibility of the Tutte Polynomial, it suffices to prove that if $\widetilde{Z}_\Gamma$ is reducible then so is $\TUTTE_\Gamma$.
Suppose $\widetilde{Z}_\Gamma$ is reducible:
\begin{equation*}
\widetilde{Z}_\Gamma = A_1 \cdot A_2 \cdot B,
\end{equation*}
where $A_1, A_2$ are non-constant irreducible factors, and $B$ can potentially be a unit. Denote by $C_i$ the zero set of $A_i$.

Let $H:\mathbb{C}^2 \rightarrow \mathbb{C}^2$ be the birational map $(x,y)
\mapsto ((x-1)(y-1), y-1)$, so that by (\ref{TUTTE-PARTI}) we have
\begin{align*}
\TUTTE_\Gamma(x,y) = (y-1)^{-|V|+1} (\tilde{Z}_\Gamma \circ H ). 
\end{align*}
Therefore, in order to prove that $\TUTTE_\Gamma$ is reducible it suffices to find at least
two irreducible factors of $\tilde{Z}_\Gamma \circ H$ each of which is not equal to $y-1$.

For
$i=1$ and $2$, although $H^{-1} (C_i)$ can possibly contain the line $E:=\{(x,
y) \in \mathbb{C}^2: y=1\}$, it cannot be the only irreducible component of
$H^{-1}(C_i)$  because $H(E)$ is a single point $(0,0)$. From this observation
we now have to consider two separate cases.
\begin{enumerate}
        \item[\rm (i)] If $A_1 \not\equiv A_2$, then the zero set of $\TUTTE_\Gamma$ contains at least two distinct irreducible components, neither of which is the line $E$.
        \item[\rm (ii)] If $A_1 \equiv A_2$, then the zero set of $\TUTTE_\Gamma$ contains an irreducible component of multiplicity at least two, which is not the line $E$.
\end{enumerate}
In either case, we conclude that $\TUTTE_\Gamma$ is reducible.

\end{proof}

\subsection{Combinatorics of Hierarchical Lattices}

\begin{proposition}\label{PROP:2CONNECTED_GENERATING_GRAPH}
Suppose $\{\Gamma_n\}_{n=1}^\infty$ is a hierarchical lattice that is generated by a $2$-connected generated graph $\Gamma = (V,E)$.
Then, $\Gamma_n$ is $2$-connected for each $n \geq 0$.
\end{proposition}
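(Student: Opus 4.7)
The plan is to argue by induction on $n$, with base case $n=1$ given by the hypothesis. For the inductive step, assume $\Gamma_n$ is 2-connected and view $\Gamma_{n+1}$ as obtained from $\Gamma$ by replacing each edge $e=\{u,w\}$ of $\Gamma$ with a copy $\Gamma_n^{(e)}$ of $\Gamma_n$, identifying the marked vertices $a,b$ of $\Gamma_n^{(e)}$ with $u,w$. I will call the vertices of $\Gamma$ the \emph{joining} vertices of $\Gamma_{n+1}$ and the rest (each lying in the interior of a unique block) the \emph{internal} vertices. Since $\Gamma$ has at least three vertices, so does $\Gamma_{n+1}$, and it suffices to show $\Gamma_{n+1}\setminus\{v\}$ is connected for every vertex $v$.

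If $v$ is internal to a single block $\Gamma_n^{(e)}$, then all other blocks and all joining vertices are untouched. By 2-connectivity of $\Gamma_n$, $\Gamma_n^{(e)}\setminus\{v\}$ is connected and still contains both of its marked vertices. Combined with the intact blocks and connectivity of $\Gamma$, this immediately gives connectivity of $\Gamma_{n+1}\setminus\{v\}$.

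If $v$ is a joining vertex, then for every edge $e$ of $\Gamma$ incident to $v$ the block $\Gamma_n^{(e)}$ loses the marked vertex $v$; by 2-connectivity of $\Gamma_n$ each such reduced block is still connected and retains its other marked vertex. By 2-connectivity of $\Gamma$, the graph $\Gamma\setminus\{v\}$ is connected, so any two surviving joining vertices can be joined by a path in $\Gamma\setminus\{v\}$ which lifts to a path in $\Gamma_{n+1}\setminus\{v\}$ through the corresponding blocks (each of which, intact or reduced, still connects its two marked vertices). Any internal vertex reaches a surviving joining vertex through its own block. The argument is essentially a careful case analysis keyed to internal versus joining vertices; the only real subtlety is choosing this viewpoint so that one copy of 2-connectivity handles intra-block connectivity while the other handles the gluing, and no deeper obstacle arises.
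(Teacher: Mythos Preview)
Your proof is correct and follows essentially the same approach as the paper: induction on $n$, with the inductive step handled by splitting into the case where the removed vertex is a ``joining'' vertex of $\Gamma$ (the paper's Type (1)) versus an ``internal'' vertex of a single copy of $\Gamma_n$ (Type (2)), invoking $2$-connectivity of $\Gamma$ and of $\Gamma_n$ respectively. The only discrepancy is that you take $n=1$ as the base case while the statement asks for $n\geq 0$; the paper starts the induction at $\Gamma_0$ (a single edge), so you should at least remark on this case separately.
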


\begin{proof}
The proof is by induction on $n$.   Since $\Gamma_0$ is a single edge with two vertices at its endpoints it is $2$-connected.  Suppose now that $\Gamma_n$ is $2$-connected for some $n \geq 0$ to show that $\Gamma_{n+1}$ is $2$-connected.
Recall that $\Gamma_{n+1}$ is built by replacing each edge of the generating graph $\Gamma$ with a copy of $\Gamma_n$
using the marked vertices $a$ and $b$ as endpoints.  The vertices of $\Gamma_{n+1}$ fall into two classes:
\begin{enumerate}
\item The $|V|$ vertices of $\Gamma_{n+1}$ that come from the vertices of $\Gamma$.  Each of them is a marked
vertex $a$ or $b$ from some copy of $\Gamma_n$, and
\item The remaining vertices.
\end{enumerate}
If the removal of a vertex of Type (1) disconnects $\Gamma_{n+1}$ then, since
each $\Gamma_n$ is $2$-connected, this would imply that removal of the
corresponding vertex of $\Gamma$ disconnects $\Gamma$.  This is impossible
because $\Gamma$ is $2$-connected.
Meanwhile, if removal of a vertex of Type (2) disconnects $\Gamma_{n+1}$ then its removal will also disconnect the unique copy
of $\Gamma_n$ that the vertex is contained in.  This contradicts the induction hypothesis.
\end{proof}

\begin{proposition}\label{vertexedge}
        Let $\Gamma_n = (V_n, E_n)$ be a hierarchical lattice generated by generating graph $\Gamma~=~(V,E)$. Then $|V_n|$ and $|E_n|$ grow at the same exponential rate as $n \rightarrow \infty$.
\end{proposition}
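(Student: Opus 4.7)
The plan is to write down explicit recursions for $|V_n|$ and $|E_n|$ from the substitution rule, solve them, and read off the common exponential growth rate. By definition, $\Gamma_{n+1}$ is obtained from $\Gamma=(V,E)$ by replacing each edge of $\Gamma$ with a copy of $\Gamma_n$, gluing the two marked vertices $a,b$ of each copy to the endpoints of the edge it replaces. Edge replacements are disjoint on edges, and on vertices they only identify the two marked vertices of $\Gamma_n$ with endpoints in $V$. So the recursions read
\begin{equation*}
|E_{n+1}| \;=\; |E|\cdot |E_n|, \qquad |V_{n+1}| \;=\; |V| + |E|\bigl(|V_n| - 2\bigr),
\end{equation*}
with initial data $|E_0|=1$, $|V_0|=2$ (the base graph $\Gamma_0$ being a single edge).

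First I would handle the edge count: the first recursion is geometric, yielding $|E_n| = |E|^n$. Next I would solve the affine linear recursion for $|V_n|$. For any hierarchical lattice that is an actual lattice (i.e.\ $|V_n|\to\infty$) we must have $|E|\ge 2$, so I can solve the recursion in closed form by finding the fixed point $v^\ast = (2|E|-|V|)/(|E|-1)$ of $v\mapsto |E|v + (|V|-2|E|)$ and iterating:
\begin{equation*}
|V_n| \;=\; \frac{|V|-2}{|E|-1}\,|E|^n \;+\; \frac{2|E|-|V|}{|E|-1}.
\end{equation*}

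Since $|V|\ge 2$ with equality only in the trivial case that $\Gamma$ itself is a single edge (which is excluded by $|V_n|\to\infty$, and in any case by $2$-connectivity when that hypothesis is in force), the coefficient $(|V|-2)/(|E|-1)$ is strictly positive, and I can conclude
\begin{equation*}
\lim_{n\to\infty} \frac{|V_n|}{|E_n|} \;=\; \frac{|V|-2}{|E|-1} \;>\; 0,
\end{equation*}
so both sequences grow at the common exponential rate $\log|E|$. This also identifies the constant $\alpha$ appearing earlier in the paper as $\alpha = (|E|-1)/(|V|-2)$. I do not anticipate any serious obstacle here; the only thing to be careful about is the degenerate initial data and ruling out the trivial cases $|V|=2$ or $|E|=1$, which correspond to hierarchical lattices that do not grow and which are not of interest.
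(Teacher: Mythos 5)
Your proposal is correct and follows essentially the same route as the paper: write down an explicit recursion for $|V_n|$ coming from the substitution rule, solve it in closed form, and compare the growth with $|E_n|=|E|^n$. The only cosmetic difference is the choice of recursion—you use $|V_{n+1}| = |V| + |E|(|V_n|-2)$ (decomposing $\Gamma_{n+1}$ as $|E|$ copies of $\Gamma_n$ glued along $V$), while the paper uses the dual form $|V_{n+1}| = |V_n| + |E_n|(|V|-2)$ (decomposing $\Gamma_{n+1}$ as $\Gamma_n$ with each edge blown up into a copy of $\Gamma$); both telescope to the identical closed form $|V_n| = |V| + (|V|-2)\sum_{i=1}^{n-1}|E|^i$.
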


\begin{proof}
Observe that for any $n \geq 1$,
\begin{equation*}
|V_{n+1}| \ = \ |V_n| + |E_n| \cdot \left(|V|-2\right) \ = \ |V_n| + |E|^n \cdot \left(|V|-2\right).
\end{equation*}
It follows from induction that
\begin{equation*}
|V_n| \ = \ |V| + \left(|V|-2\right) \cdot \sum_{i=1}^{n-1}|E|^i \ = \ |V| + \left(|V|-2\right) \cdot |E|\frac{|E|^{n-1}-1}{|E|-1},
\end{equation*}
which proves the assertion.
\end{proof}

\subsection{Migdal-Kadanoff Renormalization for the DHL}
\label{SEC:MK_RENORM_DHL}

Let $\{\Gamma_n = (V_n,E_n)\}_{n=0}^\infty$ be the Diamond Hierarchical Lattice (DHL).  For each $n \geq 0$ the
partition function $Z_n(q,y) \equiv Z_{\Gamma_n}(q,y)$ has zero divisor
\begin{align*}
\mathcal{S}_n:=(Z_n(q, y)=0).
\end{align*}
Remark that $\Gamma_0$ is always a single edge with two vertices at its endpoints, so a simple calculation yields $Z_{\Gamma_n}(q,y) = q(y+q-1)$ so that
\begin{align*}
\mathcal{S}_0:= (q(y+q-1)=0).
\end{align*}
Associated to the hierarchical lattice $\{\Gamma_n\}_{n=0}^\infty$ is a Migdal-Kadanoff renormalization mapping that relates the zero divisor $\mathcal{S}_{n+1}$
to the zero divisor $\mathcal{S}_{n}$.

\begin{proposition}\label{DEFN:RENORM}
For the DHL we have that for each $n \geq 0$
\begin{align*}
\mathcal{S}_{n} = (R^{n})^*(\mathcal{S}_0)
\end{align*}
where $R:\mathbb{C} \times \mathbb{P}^1 \rightarrow \mathbb{C} \times \mathbb{P}^1$ is given by
\begin{equation}\label{EQN:DEF_RQ_FOR_DHL}
R(q,y)=\left(q, r_q(y) \right), \quad \mbox{where} \quad r_q(y)=\left(\frac{y^2+q-1}{2y+q-2}\right)^2.
\end{equation}
As usual, the superscript $*$ denotes pullback of a divisor and we will denote points $y \in \mathbb{P}^1$ using the standard chart $\mathbb{C} \subset \mathbb{P}^1$.
\end{proposition}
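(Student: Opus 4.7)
My plan is to prove the identity $\mathcal{S}_n=(R^n)^*\mathcal{S}_0$ by factoring the partition function $Z_n$ explicitly in a way that directly matches the renormalization pullback. First I would set up two-terminal ``effective weights.'' For a graph $H$ with two marked vertices $a,b$, let $A_H(q,y)$ be the sum of Boltzmann weights over spin configurations on $H$ with $\sigma(a)=\sigma(b)=1$ held fixed, and $B_H(q,y)$ the analogous sum with $\sigma(a)=1, \sigma(b)=2$ held fixed; color symmetry gives $Z_H = q\bigl(A_H+(q-1)B_H\bigr)$. Writing $A_n:=A_{\Gamma_n}$ and $B_n:=B_{\Gamma_n}$, the fact that $\Gamma_{n+1}$ is formed from the diamond by replacing each edge with a copy of $\Gamma_n$ (i.e.\ two series pairs of $\Gamma_n$'s joined in parallel) yields, after summing over the spins at the two interior vertices of the diamond,
\begin{equation*}
A_{n+1}=\bigl(A_n^2+(q-1)B_n^2\bigr)^2, \qquad B_{n+1}= B_n^2\bigl(2A_n+(q-2)B_n\bigr)^2,
\end{equation*}
with $A_0=y$, $B_0=1$. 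A direct calculation then shows that the ratio $y_n:=A_n/B_n$ satisfies $y_{n+1}=r_q(y_n)$, hence $y_n=r_q^n(y)$.

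Combining these, $Z_n(q,y)=qB_n\bigl(r_q^n(y)+q-1\bigr)$, so as divisors on $\mathbb{C}\times\mathbb{P}^1$ we have
\begin{equation*}
\mathcal{S}_n = (q=0) + \bigl(A_n+(q-1)B_n=0\bigr).
\end{equation*}
On the other hand, since $R$ preserves the first coordinate $(R^n)^*(q=0)=(q=0)$, and if we write $r_q^n(y)=\mathcal{P}_n/\mathcal{Q}_n$ in lowest terms as coprime elements of $\mathbb{C}[q,y]$, the standard pullback description for a rational self-map of skew-product form $(q,y)\mapsto(q,r_q(y))$ gives
\begin{equation*}
(R^n)^*(y+q-1=0) = \bigl(\mathcal{P}_n+(q-1)\mathcal{Q}_n=0\bigr),
\end{equation*}
(using that $\mathcal{P}_n+(q-1)\mathcal{Q}_n$ and $\mathcal{Q}_n$ are automatically coprime). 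Thus the proposition reduces to showing that the recursive expression $A_n/B_n$ is \emph{already} in lowest terms, i.e.\ that $A_n$ and $B_n$ are coprime in $\mathbb{C}[q,y]$.

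This coprimality is the main obstacle: it does not seem transparent from the recursion alone. My strategy is to sidestep it by invoking irreducibility of the full partition function. Proposition \ref{PROP:2CONNECTED_GENERATING_GRAPH} shows that $\Gamma_n$ is $2$-connected for each $n\geq 1$ (the diamond being $2$-connected), and Proposition \ref{PROP:IRRED} then implies that $\widetilde{Z}_n:=A_n+(q-1)B_n$ is irreducible in $\mathbb{C}[q,y]$. A short induction on the recursion gives $\deg_y A_n = 4^n$ and $\deg_y B_n < 4^n$, so $\deg_y \widetilde{Z}_n = 4^n > \deg_y B_n$ and hence $\widetilde{Z}_n \nmid B_n$. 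Since $\widetilde{Z}_n$ is irreducible this forces $\gcd(\widetilde{Z}_n,B_n)=1$, and the elementary identity $\gcd\bigl(A_n+(q-1)B_n,\,B_n\bigr)=\gcd(A_n,B_n)$ then yields $\gcd(A_n,B_n)=1$. (The base case $n=0$, where $\mathcal{S}_0=(R^0)^*\mathcal{S}_0$ is a tautology, is trivial.) Combining everything, $(R^n)^*\mathcal{S}_0=(q=0)+\bigl(A_n+(q-1)B_n=0\bigr)=\mathcal{S}_n$, as required.
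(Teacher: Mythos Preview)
Your proof is correct and follows essentially the same route as the paper: define the conditional partition functions (the paper calls them $\mathcal{U}_n,\mathcal{V}_n$), derive the same recursion, observe that their ratio iterates under $r_q$, and then use $2$-connectedness of $\Gamma_n$ together with irreducibility of $\widetilde{Z}_n$ (Propositions \ref{PROP:2CONNECTED_GENERATING_GRAPH} and \ref{PROP:IRRED}) to conclude that $A_n$ and $B_n$ are coprime, so no degree is lost in the pullback. The only difference is that the paper asserts ``$\widetilde{Z}_n$ irreducible $\Rightarrow \gcd(\mathcal{U}_n,\mathcal{V}_n)=1$'' directly, whereas you supply the extra sentence ruling out $\widetilde{Z}_n\mid B_n$ via the degree comparison $\deg_y B_n<4^n=\deg_y\widetilde{Z}_n$; this is a welcome clarification of a step the paper leaves implicit.
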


The proof will be very similar to the derivation of the Migdal-Kadanoff renormalization transformation for the Ising Model on the DHL \cite[Section 2.5]{BLR1}
and it relies on the multiplicativity of the conditional partition functions which is proved in \cite[Lemma 2.1]{BLR1}, in the context of the Ising Model.

\begin{proof}
For each $n \geq 0$ consider the following conditional partition functions:
\begin{equation*}
\U_n \equiv \U_n(q,y) := \sum_{\substack{\sigma \text{  such that} \\ \sigma(a)=\sigma(b)=1}} W(\sigma) \qquad \mbox{and} \qquad \V_n \equiv \V_n(q,y) :=\sum_{\substack{\sigma \text{  such that} \\ \sigma(a)=1, \,  \sigma(b)=2}} W(\sigma).
\end{equation*}
We claim for each $n\geq 0$ that
\begin{equation}\label{EQN:HOMOG_RECURSION}
\U_{n+1}=\left(\U_n^2+(q-1)\V_n^2\right)^2 \quad \mbox{and} \quad \V_{n+1}=\left(2\U_n\V_n+(q-2)\V_n^2\right)^2.
\end{equation}
To show this,
it will be helpful to depict them graphically as follows:
%\begin{figure}[H]
%	\centering
%	\includegraphics[scale=0.33]{UV_definition.png}
%\end{figure}
\begin{figure}[H]
	\centering
	\scalebox{0.7}{
		\input{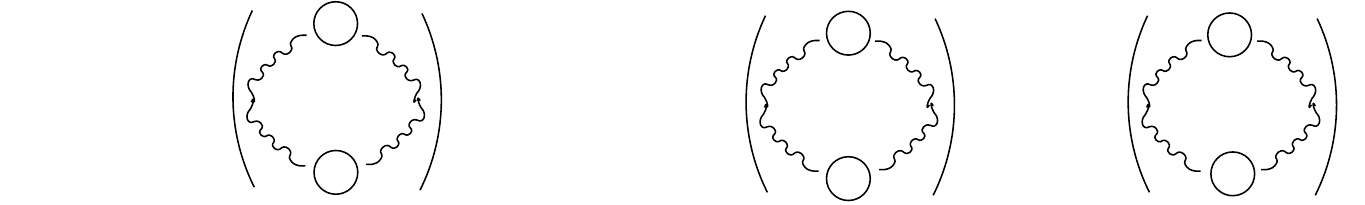tex_t}
	}
\end{figure}
\noindent
The ones and twos in the figure denote the spins at the marked vertices $a$ and $b$.
Let us graphically illustrate the derivation of the first equation from (\ref{EQN:HOMOG_RECURSION}):
%\begin{figure}[H]
%	\centering
%	\includegraphics[scale=0.43]{MK_derivation_new.png}
%\end{figure}
\begin{figure}[H]
    \centering
	\scalebox{0.7}{
		\input{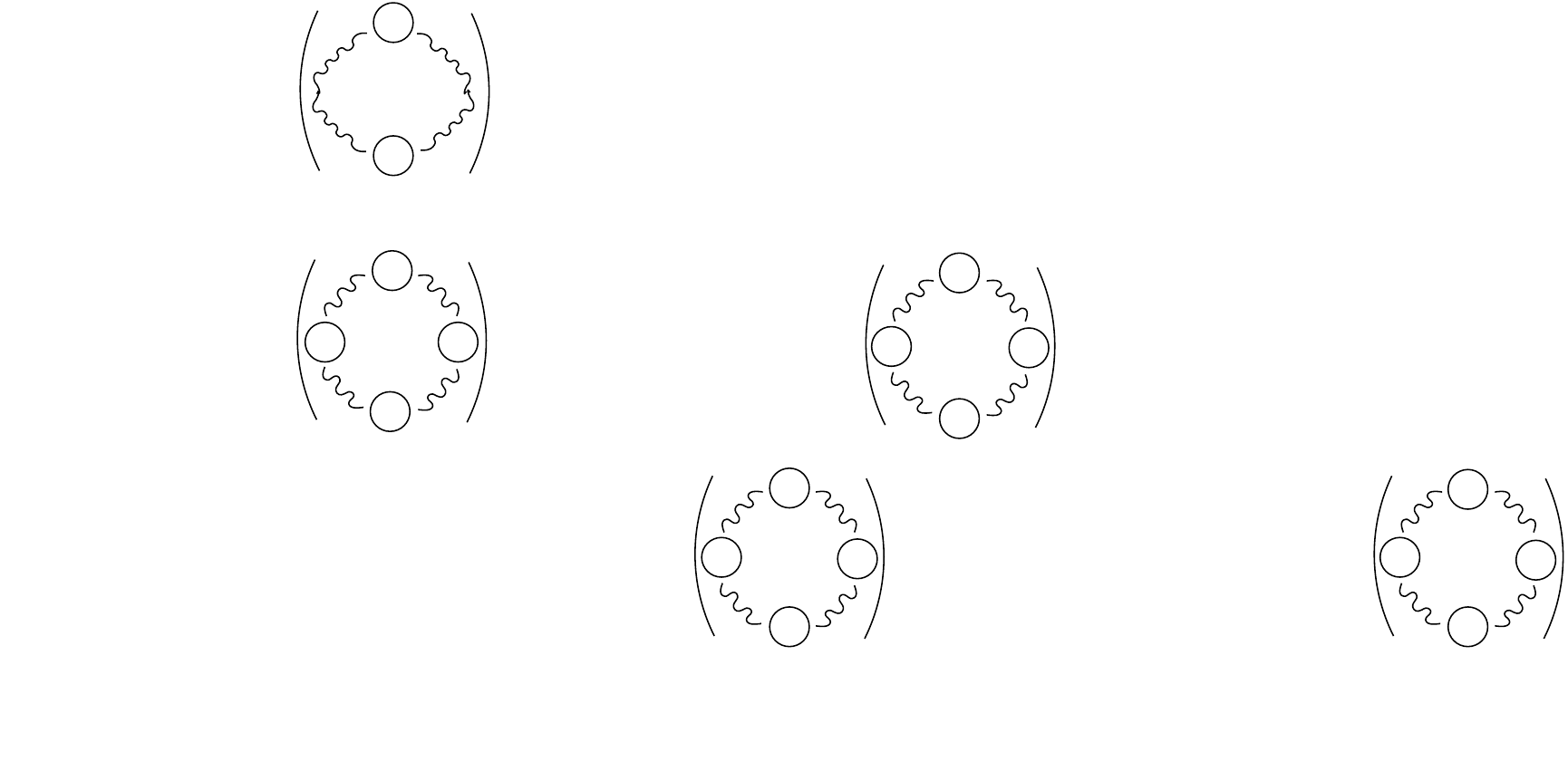tex_t}
	}
\end{figure}
\noindent
The numbers one, two, and three in the second row of the figure above are meant to denote the boundary conditions
imposed on each of the four copies of $\Gamma_n$ that are glued together to
form $\Gamma_{n+1}$.  The third line is obtained from the second using multiplicativity of the conditional
partition functions.  (Once the spins at those four vertices are fixed, the conditional partition function
is the same as that of a disjoint union of the four copies of $\Gamma_n$, each with its corresponding
boundary conditions.) The expression for $\V_{n+1}$ in
(\ref{EQN:HOMOG_RECURSION}) can be obtained similarly. 

In order to use an iteration on $\mathbb{P}^1$ instead of $\mathbb{C}^2$ it will be 
more convenient to iterate the ratio $y_n := \U_n/\V_n$, where $n \geq 0$.  A simple calculation shows that
$y_0 = y = e^{J/T}$.  Using (\ref{EQN:HOMOG_RECURSION}) we find that 
\begin{equation*}
y_{n+1}=\frac{\U_{n+1}}{\V_{n+1}}
=\left(\frac{\U_n^2 + (q-1)\V_n^2}{2\U_n\V_n+(q-2)\V_n^2}\right)^2
=\left(\frac{y_n^2+q-1}{2y_n+q-2}\right)^2 = r_q(y_n).
\end{equation*}
Therefore, $(q_n,y_n) = R^n(q,y)$ where $q_n = q$ for all $n$.

Note that
\begin{equation}\label{EQN:DERIVATION_MK}
Z_n(q,y) = q \ \U_n+q(q-1)\V_n.
\end{equation}
Since the generating graph $\Gamma$ is $2$ connected Proposition \ref{PROP:2CONNECTED_GENERATING_GRAPH} implies that $\Gamma_n$ is $2$-connected 
for each $n \geq 0$.
Therefore, Proposition \ref{PROP:IRRED} gives that $\tilde{Z}_n(q,y) = \U_n+(q-1)\V_n$ is irreducible, implying
that $\U_n$ and $\V_n$ have no common factors of positive degree in $q$ or $y$.
Therefore,
\begin{align}\label{EQN:DERIVATION_MK2}
\mathcal{S}_{n} \, = \, (Z_n(q,y) = 0)\, = \, (q_n (\U_n+ (q_n-1)\V_n) = 0) \, = \, (q_n (y_n+q_n-1) = 0) \, = \, (R^n)^* \mathcal{S}_0,
\end{align}
where in the third equality we used that $\U_n$ and $\V_n$ have no common factors of positive degree.
\end{proof}

%\begin{remark}
%In order to avoid bringing the extra component $(q)$ through the entire discussion we will do the following:
%For each $n \geq 0$ let $\tilde{\mathcal{S}}_n := (\tilde{Z}_\Gamma(q,y))$ and note that $\tilde{\mathcal{S}}_0 = (y+q-1)$.
%It follows from the above discussion that 
%\begin{align*}
%\tilde{\mathcal{S}}_n = (R^n)^* \tilde{\mathcal{S}_0}, 
%\end{align*}
%for each $n \geq 1$.
%\end{remark}

The map $r_q(y)$ given in (\ref{EQN:DEF_RQ_FOR_DHL}) is called the \textit{Migdal-Kadanoff renormalization mapping} for the $q$-state Potts model on the DHL.
Remark that this is an algebraic family of rational mappings of degree $4$ defined over $\mathbb{Q}$.
As a consequence of Proposition \ref{DEFN:RENORM}, the chromatic zeros for the DHL can be obtained dynamically:
\begin{align}\label{DYNZERO}
\mathcal{C}_n = (R^{n})^*(\mathcal{S}_0) \cap (y=0)
\end{align}
and note that up to the simple zero at $q=0$ we can use 
\begin{align}\label{DYNZERO1}
\tilde{\mathcal{C}}_n = (R^{n})^*(y+q-1=0) \cap (y=0).
\end{align}
When considering the limiting measure of chromatic zeros it suffices to consider $\tilde{\mathcal{C}}_n$.

\subsection{Migdal-Kadanoff Renormalization for arbitrary hierarchical lattices}
\label{SUBSEC:MK_ARBITRARY}

Now suppose $\Gamma_n = (V_n,E_n)$ is the hierarchical lattice generated by an
arbitrary generating graph $\Gamma = (V,E)$.  
It is clear that we can repeat the procedure in Proposition \ref{DEFN:RENORM}
to produce a renormalization mapping $r_q(y)$ associated to the generating
graph $\Gamma$, which is a rational map in $y$ on the Riemann sphere of degree
at most $|E|$, parameterized by polynomials in $q$ with integer coefficients.

However, it is possible that the generic degree of $r_q(y)$ is strictly
smaller than $|E|$. One such example is the Tripod shown in Figure \ref{generating_graphs}
for which we have
\begin{equation*}
\U_{n+1}= \left(\U_n+(q-1)\V_n\right) \left(\U_n^2+(q-1)\V_n^2\right) \quad \mbox{and} \quad \V_{n+1}=\left(\U_n+(q-1)\V_n\right) \left(2\U_n\V_n+(q-2)\V_n^2\right).
\end{equation*}
The common factor of positive degree $\left(\U_n+(q-1)\V_n\right)$ is a consequence of the ``horizontal'' edge that
is connected to the remainder of the generating graph $\Gamma$ at a single vertex.   When taking the ratios 
$y_n = \U_n / \V_n$ we lose track of these common factors resulting in the drop of generic degree:
\begin{equation*}
  R(q,y) = (q,r_q(y)) \quad \mbox{where} \quad      r_q(y)= \frac{y^2+q-1}{2y+q-2},
\end{equation*}
which has degree two even though $\Gamma$ has three edges. 
This drop in generic degree results in 
$(R^n)^* \mathcal{S}_0 < (Z_n(q,y))$ for the hierarchical lattice
generated by the Tripod. 

This phenomenon can be avoided if the generating graph is $2$-connected and the proof is exactly the
same as for the DHL.  We summarize:
\begin{proposition}\label{NON-DEGENERATE-DEGREE}
	Let $\{\Gamma_n\}_{n=0}^\infty$ be the hierarchical lattice generated by $\Gamma = (V,E)$. If $\Gamma$ is $2$-connected, then the associated renormalization mapping $R(q,y) = (q,r_q(y))$ has generic degree $|E|$ and satisfies
\begin{align*}
\mathcal{S}_{n} = (R^{n})^*(\mathcal{S}_0),
\end{align*}
where $\mathcal{S}_n = (Z_n(q,y))$ and $\mathcal{S}_0 = (q(y+q-1))$.  Moreover, $r_q$ is defined over $\mathbb{Q}$.
\end{proposition}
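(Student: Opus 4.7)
The plan is to repeat verbatim the argument for the DHL (Proposition \ref{DEFN:RENORM}), now with $\Gamma = (V,E)$ an arbitrary $2$-connected generating graph. All of the combinatorics generalizes formally; the only step that genuinely uses $2$-connectivity is a coprimality argument based on the irreducibility of the partition function (Proposition \ref{PROP:IRRED}).

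First I would introduce, for each $n \geq 0$, the two conditional partition functions on $\Gamma_n$,
\begin{align*}
\U_n := \sum_{\sigma(a)=\sigma(b)=1} W(\sigma), \qquad \V_n := \sum_{\sigma(a)=1,\,\sigma(b)=2} W(\sigma),
\end{align*}
where the symmetry of $\Gamma$ at $a$ and $b$ guarantees that $\V_n$ does not depend on the choice of the two distinct spin values. The substitution rule that defines $\Gamma_{n+1}$ from $\Gamma$ and $\Gamma_n$, together with the multiplicativity of conditional partition functions (as in \cite[Lemma 2.1]{BLR1}), yields homogeneous polynomial recurrences
\begin{align*}
\U_{n+1} = P(\U_n,\V_n;\,q), \qquad \V_{n+1} = Q(\U_n,\V_n;\,q),
\end{align*}
with $P,Q$ homogeneous of degree $|E|$ in $(\U_n,\V_n)$ and with coefficients in $\mathbb{Z}[q]$ obtained by summing over the spin assignments at the internal vertices of $\Gamma$ (exactly as in the pictorial derivation for the DHL). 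Setting $y_n := \U_n/\V_n$, homogeneity gives $y_{n+1} = r_q(y_n)$ where $r_q(y) := P(y,1;q)/Q(y,1;q)$, so that $R(q,y) = (q,r_q(y))$ is automatically defined over $\mathbb{Q}$ and has generic degree at most $|E|$.

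The crux — and the only place where the $2$-connectivity of $\Gamma$ enters — is the claim that $P(y,1;q)$ and $Q(y,1;q)$ are coprime, so that no common factor cancels and $r_q$ in fact has generic degree exactly $|E|$. By Proposition \ref{PROP:2CONNECTED_GENERATING_GRAPH}, every $\Gamma_n$ is $2$-connected, so Proposition \ref{PROP:IRRED} applies and $\tilde Z_n(q,y) = \U_n + (q-1)\V_n$ is irreducible in $\mathbb{C}[q,y]$. Any common factor of $\U_n$ and $\V_n$ would divide $\tilde Z_n$; by irreducibility together with a degree comparison, such a factor must be a unit. With this coprimality in hand, the same computation as in (\ref{EQN:DERIVATION_MK})--(\ref{EQN:DERIVATION_MK2}) gives
\begin{align*}
\mathcal{S}_n = (q\U_n + q(q-1)\V_n = 0) = (q=0) + (\U_n + (q-1)\V_n = 0) = (R^n)^*\mathcal{S}_0,
\end{align*}
finishing the proof. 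The main obstacle is exactly the coprimality step; everything else is bookkeeping with homogeneous recurrences, and the hypothesis of $2$-connectivity is used precisely, and only, to invoke the irreducibility of $\tilde Z_n$.
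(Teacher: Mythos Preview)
Your proposal is correct and is exactly the approach the paper takes: the paper simply states that ``the proof is exactly the same as for the DHL,'' and you have spelled out that generalization, using Proposition \ref{PROP:2CONNECTED_GENERATING_GRAPH} and Proposition \ref{PROP:IRRED} at precisely the point (coprimality of $\U_n$ and $\V_n$) where $2$-connectivity is needed. The only cosmetic difference is that you make explicit the abstract homogeneous recurrences $P,Q \in \mathbb{Z}[q][\U,\V]$ and the fact that $r_q$ is defined over $\mathbb{Q}$, whereas the paper leaves these implicit in the DHL derivation.
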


\noindent
Several concrete examples are presented in Section \ref{SEC:EXAMPLES}.

\section{Proof of Theorem A}\label{SEC:PROOF_THMA}

Let $\{\Gamma_n\}_{n=1}^\infty$ be a hierarchical lattice whose generating
graph $\Gamma=(V,E)$ is $2$-connected. Denote its Migdal-Kadanoff
renormalization mapping by $R(q,y) = (q,r_q(y))$. Since 
$\Gamma$ is $2$-conneced, Proposition \ref{NON-DEGENERATE-DEGREE} implies that the chromatic zeros
for $\Gamma_n$ (omitting the simple zero at $q=0$) are given by
$\tilde{\mathcal{C}}_n = (R^{n})^*(y+q-1=0) \cap (y=0)$.  Therefore, in the
language of currents,
\begin{align*}
\tilde{\mu}_n \, := \, \frac{1}{|V_n|} \sum_{\substack{q \in \mathbb{C} \setminus \{0\} \\ \P_{\Gamma_n}(q)=0}} \delta_q \, = \, (\pi_{1})_* \left(\frac{1}{|V_n|} (R^{n})^* [y+q-1 =0] \ \wedge \ [y=0]\right),
\end{align*}
where the zeros of $\P_{\Gamma_n}(q)$ are counted with multiplicities, as always.
Since $\tilde{\mu}_n$ and $\mu_n$ (see (\ref{limitingmeasure})) differ by $1/|V_n|$ times a Dirac measure at $q=0$, it suffices
to prove that the sequence $\tilde{\mu}_n$ converges.  Moreover, Proposition \ref{vertexedge} allows
us to replace the normalizing factor of $|V_n|$ with $|E_n|$.  Therefore, it suffices to verify
that $R = (q,r_q(y))$ and the marked points $a(q)=0$ and $b(q)=1-q$ satisfy the hypotheses
of Theorem C'. 

By Proposition \ref{NON-DEGENERATE-DEGREE}, the algebraic family $r_q$ is
defined over $\mathbb{Q}$.  Hypotheses (i) and (ii) on the marked points will
be verified in Propositions \ref{noiterate} and \ref{nonexceptional} below.

\begin{proposition}\label{noiterate}
There are no iterates $n \geq 0$ satisfying $r_q^n (0) \equiv 1-q$.
\end{proposition}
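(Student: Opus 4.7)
The plan is to derive a contradiction by observing that the identity $r_q^n(0) \equiv 1-q$ would force the chromatic polynomial $\P_{\Gamma_n}(q)$ to vanish identically, which is impossible for a finite nonempty graph.

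The case $n=0$ is immediate: $r_q^0(0) = 0$ is not the rational function $1-q$. For $n \geq 1$, I would suppose for contradiction that $r_q^n(0) \equiv 1-q$ as an identity of rational functions of $q$. Since the generating graph $\Gamma$ is $2$-connected, Proposition \ref{NON-DEGENERATE-DEGREE} identifies $\mathcal{S}_n = (R^n)^*\mathcal{S}_0$ with the zero divisor of $Z_n(q,y)$, and splits it as $\mathcal{S}_n = (R^n)^*(q=0) + (R^n)^*(y+q-1=0)$. The pullback divisor $(R^n)^*(y+q-1=0)$ is cut out by the rational function $r_q^n(y) + q - 1$; under our hypothesis this function vanishes identically along $\{y=0\}$, so the horizontal line $\{y=0\}$ appears as a component of $(R^n)^*(y+q-1=0)$, and hence of $\mathcal{S}_n$. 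Consequently $y$ divides $Z_n(q,y)$ in $\mathbb{C}[q,y]$, so $Z_n(q,0) \equiv 0$ as a polynomial in $q$.

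But by (\ref{EQN:POTTS_CHROMATIC}), $Z_n(q,0) = \P_{\Gamma_n}(q)$, which is a monic polynomial of degree $|V_n| \geq 2$ and in particular is not identically zero. This contradiction yields the claim. The only delicate point is to handle the pullback of divisors under the rational map $R$ correctly; this is precisely where the $2$-connectedness hypothesis enters, since Proposition \ref{NON-DEGENERATE-DEGREE} is exactly what guarantees that no common factors are lost when passing from the homogeneous recursion in $(\mathcal{U}_n,\mathcal{V}_n)$ to the affine recursion in $y_n = \mathcal{U}_n/\mathcal{V}_n$, so that identifying $(R^n)^*(y+q-1=0)$ with the zero locus of $r_q^n(y)+q-1$ is legitimate.
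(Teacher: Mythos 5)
Your proof is correct and takes essentially the same approach as the paper's. The paper phrases the contradiction as "$\Gamma_n$ would have infinitely many chromatic zeros, impossible since $\deg \P_{\Gamma_n} = |V_n|$," whereas you make the same point slightly more carefully by tracking the divisor $(R^n)^*(y+q-1=0)$ and concluding that $Z_n(q,0) = \P_{\Gamma_n}(q)$ would vanish identically; both hinge on Proposition~\ref{NON-DEGENERATE-DEGREE} to identify $(R^n)^*\mathcal{S}_0$ with the zero divisor of $Z_n$, which is exactly where $2$-connectedness enters, as you correctly flag.
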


\begin{proof}
Away from the finitely many points in $\VDEG$, the 
chromatic zeros of $\Gamma_n$ are solutions in~$q$ to $r_q^n(0)=1-q$.  If there is some iterate $n \geq 0$ such that $r_q^n (0) \equiv 1-q$, this will imply that $\Gamma_n$ has infinitely many chromatic zeros, which is impossible
because ${\rm deg}(\P_{\Gamma_n}) = |V_n|$.
\end{proof}

%The next Lemma can be found in \cite[Theorem 12.33.6 \& 12.35]{harary}

%\begin{lemma}[\bf \cite{harary}]
%Let $\Gamma$ be a connected graph with $p$ vertices, and denote its chromatic polynomial by $\P_\Gamma$. Then
%\begin{enumerate}
%  \item[\rm (i)] The smallest exponent of $q$ in $\P_\Gamma (q)$ with nonzero coefficient is $1$. 
%  \item[\rm (ii)] $\Gamma$ is a tree if and only if $\P_\Gamma (q) = q (q-1)^{p-1}$.
%\end{enumerate}
%\end{lemma}

\begin{proposition}\label{nonexceptional}
The marked point $b(q)=1-q$ is not persistently exceptional for $r_q$.
\end{proposition}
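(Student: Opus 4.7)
The plan is to show that persistent exceptionality of $b(q)=1-q$ would force a polynomial divisibility contradicting the irreducibility of $\widetilde{Z}_{2}$ established in Proposition~\ref{PROP:IRRED}. Since $\Gamma$ is $2$-connected, Proposition~\ref{NON-DEGENERATE-DEGREE} gives that $r_{q}$ has generic degree $|E|\geq 3$, so it is a standard fact that the exceptional set of $r_{q}$ contains at most two points, each of which is periodic of period dividing $2$. Were $b(q)=1-q$ persistently exceptional, $r_{q}^{2}(1-q)$ would agree with $1-q$ on an open subset of parameters, forcing the identity
\begin{equation*}
  r_{q}^{2}(1-q)\equiv 1-q
\end{equation*}
of rational functions of $q$.

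My next step is to convert this identity into a polynomial statement about $\widetilde{Z}_{2}$. Following the derivation in Section~\ref{SEC:MK_RENORM_DHL} (extended to arbitrary $2$-connected generating graphs in Section~\ref{SUBSEC:MK_ARBITRARY}), one has $r_{q}^{n}(y) = \U_{n}(q,y)/\V_{n}(q,y)$ as rational functions, and
\begin{equation*}
  \widetilde{Z}_{n}(q,y) = \U_{n}(q,y) + (q-1)\,\V_{n}(q,y).
\end{equation*}
The identity $r_{q}^{n}(1-q)\equiv 1-q$ implies $\U_{n}(q,1-q) \equiv (1-q)\V_{n}(q,1-q)$ as polynomials in $q$, hence $\widetilde{Z}_{n}(q,1-q)\equiv 0$. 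By polynomial division in $\mathbb{C}[q][y]$, the linear factor $y+q-1$ therefore divides $\widetilde{Z}_{n}(q,y)$ in $\mathbb{C}[q,y]$.

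Applying this at $n=2$: Proposition~\ref{PROP:2CONNECTED_GENERATING_GRAPH} gives that $\Gamma_{2}$ is $2$-connected, so Proposition~\ref{PROP:IRRED} implies $\widetilde{Z}_{2}$ is irreducible in $\mathbb{C}[q,y]$. The divisibility $(y+q-1)\mid\widetilde{Z}_{2}$ would then force $\widetilde{Z}_{2} = c(y+q-1)$ for some $c\in\mathbb{C}^{\times}$. However, counting the constant colorings of the connected graph $\Gamma_{2}$ shows that the coefficient of $y^{|E_{2}|}$ in $Z_{2}(q,y)$ equals $q$, so the $y$-degree of $\widetilde{Z}_{2}$ equals $|E_{2}|=|E|^{2}\geq 9$, whereas $y+q-1$ has $y$-degree one. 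This contradiction rules out persistent exceptionality. The main technical subtlety will be verifying that $\V_{n}(q,1-q)\not\equiv 0$ when transferring the rational-function identity to the polynomial ring, but this is forced by the supposed identity itself, since $\V_{n}(q,1-q)\equiv 0$ would make $r_{q}^{n}(1-q) = \infty$ generically, contrary to $r_{q}^{n}(1-q)\equiv 1-q$.
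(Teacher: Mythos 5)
Your proof is correct, and it reaches the same contradiction (reducibility of $\widetilde{Z}_{2}$) as the paper's argument, but along a somewhat different path that is worth contrasting. The paper invokes the full strength of exceptionality: since $b(q)=1-q$ is an exceptional fixed point of $r_{q}^{2}$, its entire preimage divisor under $R^{2}$ is $|E|^{2}(y+q-1=0)$, and Equation~(\ref{EQN:DERIVATION_MK2}) identifies this with $(\widetilde{Z}_{2})$, forcing $\widetilde{Z}_{2}=(y+q-1)^{|E|^{2}}$ --- reducible outright since $|E|^{2}\geq 2$. You, instead, extract from exceptionality only the weaker consequence that $r_{q}^{2}(1-q)\equiv 1-q$ (i.e.\ periodicity of period dividing $2$), which gives just the divisibility $(y+q-1)\mid\widetilde{Z}_{2}$; you then need the $y$-degree count ($\deg_{y}\widetilde{Z}_{2}=|E|^{2}\geq 9 > 1$) to rule out $\widetilde{Z}_{2}=c(y+q-1)$ and conclude reducibility. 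Both arguments land on Proposition~\ref{PROP:IRRED}, but yours proves a slightly stronger statement --- $1-q$ is not even a \emph{persistent fixed point} of $r_{q}^{2}$, let alone persistently exceptional --- at the modest cost of one extra degree estimate. Your handling of the subtlety about $\V_{n}(q,1-q)\not\equiv 0$ is adequate; one could also note that $\U_{n}$ and $\V_{n}$ have no common factor (as the paper establishes from irreducibility of $\widetilde{Z}_{n}$), so $(y+q-1)$ cannot divide both, which covers the residual case directly.
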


\begin{proof}
Assume by contradiction that the marked point $b(q)=1-q$ is persistently
exceptional.  Taking the second iterate, we can suppose it is a fixed point.
Then by (\ref{EQN:DERIVATION_MK2}), the pullback of the divisor $(y=1-q)$ by
the map $R^2$ satisfies
\begin{align*}
(\tilde{Z}_2(q,y))=\left(R^2\right)^* (y=1-q) = |E|^2 (y=1-q),
\end{align*}
which implies that the partition function, $\tilde{Z}_2(q,y)=(y+q-1)^{|E|^2}$, for $\Gamma_2$ is reducible. However, since the generating graph $\Gamma$ is assumed to be $2$-connected, $\Gamma_2$ is also $2$-connected, so $\tilde{Z}_2(q,y)$ is irreducible by Proposition \ref{PROP:IRRED}, which is a contradiction.

%Observe that since $q=1$ is a chromatic zero for any connected graph, we have $r_1 (0) = 0$, so $y=1-q$ must be a persistently fixed point instead of a period two cycle. 

%Since $\Gamma$ is not a tree, the previous Lemma implies that $\P_\Gamma$ has a root $q_0 \in \mathbb{C} \setminus \{0, 1\}$, or equivalently, there is a $q_0 \in \mathbb{C} \setminus \{0, 1\}$ satisfying $r_{q_0}(0)=1-q_0$, which implies that $1-q_0$ is not totally invariant under $r_{q_0}$, so it is not exceptional. If $q_0 \in \mathbb{C} \setminus \VDEG$, then we are done. Otherwise, since $r_{q_0}(y) \not\equiv 1-q_0$ in (Corollary \ref{degreehigher}), Weierstrass Preparation Theorem gives a Weierstrass polynomial $g_q (y)$ and a holomorphic function $s(y,q), s(0,q_0) \neq 0$, such that $r_q(y) - (1-q) = g_q(y) s(y,q)$ in a small neighborhood of $(q_0,0)$.  Therefore we can find $(y_1, q_1)$  near $(0,q_0)$ so that

%\begin{itemize}
%	\item $q_1 \in \mathbb{C} \setminus \VDEG$, i.e. the degree of $r_q(y)$ does not degenerate at $q_1$.
%	\item $r_{q_1}(y_1)=1-q_1$, where $y_1 \neq 1-q_1$.
%\end{itemize}
%Again this implies that $y=1-q_1$ is not totally invariant under $r_{q_1}$, so it cannot be exceptional.

\end{proof}

\qed (Theorem A)

%\begin{remark}
%In order to apply Theorem C', we normalize the measure $[r^n_q(0)=1-q]$ by $d^n \equiv |E_1|^n$. However, by Proposition \ref{vertexedge} $|V_n|$ and $|E_n|$ grow at the same exponential rate as $n \rightarrow \infty$, so a normalizing factor $\alpha > 0$, which only depends on the generating graph, might be introduced for the convergence
 
%\begin{equation*}
%\frac{1}{|V_n|} \sum_{z : \P_{\Gamma_n} (z)=0} \delta_z \rightarrow \alpha T_a.
%\end{equation*}

%\end{remark}

%%%%%%%%%%%%%%%%%%%%%%%%%%%%%%%%%%%%%%%%%%%%%%%%%%%%%%%%%%%%%%%%%%%%%%%%%%%%%%%%%%%%%%%%%%%%%%%%%%%%%%%
%%%%%%%%%%%%%%%%%%%%%%%%%%%%%%%%%%%%%%%%%%%%%%%%%%%%%%%%%%%%%%%%%%%%%%%%%%%%%%%%%%%%%%%%%%%%%%%%%%%%%%%

\section{Proof of Theorem B}\label{SEC:PROOF_THMB}

This is the only section of the paper where we will use marked points that are critical.
We will use the following famous result which appears as Corollary 1.6 from \cite{McM1}:

\begin{theorem}[\bf McMullen \cite{McM1}]\label{McMullen1}
For any holomorphic family of rational maps over the unit disk $\Delta$, the bifurcation locus $B(f) \subset \Delta$ is either empty or has Hausdorff dimension two.
\end{theorem}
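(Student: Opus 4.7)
The plan is to assume $B(f) \neq \emptyset$ and produce inside $B(f)$ a quasiconformal copy of $\partial M$, where $M$ is the Mandelbrot set; invoking Shishikura's theorem that $\dim_H \partial M = 2$ then yields the claim. I proceed in three stages: find Misiurewicz-type parameters in $B(f)$, establish a transversality/similarity at such parameters, and upgrade this to a polynomial-like renormalization producing a baby copy of $M$.

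First, fix $\lambda_0 \in B(f)$, so that the family $\{\lambda \mapsto f_\lambda^n c(\lambda)\}_{n \geq 0}$ fails to be normal on every neighborhood of $\lambda_0$, where $c(\lambda)$ is a marked critical point witnessing the bifurcation. Pick a repelling periodic point $z_0$ of $f_{\lambda_0}$; by the implicit function theorem it persists as a holomorphic motion $z_0(\lambda)$ on some neighborhood $U$ of $\lambda_0$. Montel's theorem applied to the non-normal family, together with the three distinct values $z_0(\lambda), f_\lambda(z_0(\lambda)), f_\lambda^2(z_0(\lambda))$, produces a sequence of parameters $\lambda^{\ast} \in U$ accumulating on $\lambda_0$ with $f_{\lambda^{\ast}}^n c(\lambda^{\ast}) = z_0(\lambda^{\ast})$ for some $n$; these Misiurewicz parameters lie in $B(f)$.

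Second, near such a $\lambda^{\ast}$ I would establish a Tan-Lei type transversality: the map $\lambda \mapsto f_\lambda^n c(\lambda) - z_0(\lambda)$ has a zero of finite order at $\lambda^{\ast}$, so after possibly passing to a branched cover in parameter space it has nonvanishing derivative there. Using the holomorphic motion of the cycle and linearization coordinates near $z_0(\lambda^{\ast})$, one shows that rescaled preimages of $z_0(\lambda^{\ast})$ under iterates $f_{\lambda^{\ast}}^k$ — a Cantor set inside $J(f_{\lambda^{\ast}})$ accumulating on $z_0(\lambda^{\ast})$ — embed asymptotically into $B(f)$ around $\lambda^{\ast}$. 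Combining this with a renormalization of a sufficiently high iterate $f_\lambda^N$ centered at $c(\lambda)$ with a suitably chosen topological disk $V(\lambda) \Subset V'(\lambda)$, one obtains a holomorphic family of quadratic-like maps tuned by $\lambda$. Douady-Hubbard straightening then gives a quasiconformal embedding $\Phi : M \hookrightarrow U$ with $\Phi(\partial M) \subset B(f)$. Shishikura's theorem yields $\dim_H \partial M = 2$, and quasiconformal maps preserve Hausdorff dimension, so $\dim_H B(f) \geq 2$; the reverse inequality is immediate from $B(f) \subset \Delta$.

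The main obstacle will be the renormalization step: constructing a genuine polynomial-like family over parameter space for an arbitrary holomorphic family (without an ambient polynomial structure) demands delicate a priori geometric bounds on $f_\lambda^N$ near the critical orbit and careful control as $\lambda$ varies, together with the choice of the return time $N$. A secondary nuisance specific to this generality is ruling out that $c(\lambda)$ is persistently in the grand orbit of $z_0(\lambda)$ — which would make the Montel argument vacuous — by either shifting to another repelling cycle or another marked critical point before the argument begins.
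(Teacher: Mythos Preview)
The paper does not prove this theorem itself---it is cited from McMullen \cite{McM1}---but the paragraph following the statement sketches McMullen's argument, and your proposal follows exactly that strategy: use activity of a marked critical point to build a holomorphic family of polynomial-like maps, obtain a quasiconformal copy of the Mandelbrot set in parameter space via Douady--Hubbard straightening, and invoke Shishikura's theorem that $\dim_H \partial M = 2$. Your outline is a faithful and somewhat more detailed rendering of that approach, including the standard preliminary step of locating Misiurewicz parameters via Montel's theorem; the obstacles you flag (the renormalization construction and ruling out persistent pre-periodicity) are indeed the genuine work in McMullen's paper.
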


Although the above theorem states that the bifurcation locus, which is the
union of the active loci of all the critical points, has Hausdorff dimension
two (unless it is empty), one can check that the proof still applies to each
individual marked critical point $c(\lambda)$, as long as it bifurcates.
Indeed the proof of Theorem \ref{McMullen1} consists of using activity of the
marked point to construct a holomorphically-varying family of polynomial-like
mappings,  whose critical point is the marked one $c(\lambda)$.  Associated to
this family is the space of parameters $\lambda$ for which the orbit of the
critical point remains bounded (in the polynomial-like mapping).  McMullen
shows that this set is a quasiconformal image of the Mandelbrot set (or a
higher degree generalization).  The boundary of this ``baby Mandelbrot set''
has Hausdorff Dimension two \cite{Shi1}, and, by definition, the marked point
$c(\lambda)$ is active at such points.

\begin{proof}[Proof of Theorem B]
Using an analogous proof to that of Proposition \ref{DEFN:RENORM} one finds that the renormalization mapping for the $k$-fold DHL is
\begin{equation}\label{EQN:DHL_RQ}
	r_q(y)=\left(\frac{y^2+q-1}{2y+q-2}\right)^k.
\end{equation}
For this family of mappings we have $\VDEG = \{0,\infty\}$.  Since the generating graph is $2$-connected
Theorem A implies that the limiting measure of chromatic zeros exists for the $k$-fold DHL and the proof
of Theorem A implies that on $\mathbb{C} \setminus \VDEG$ it coincides with the activity measure
for the marked point $a(q) \equiv 0$.

One can check that $c(q):=\sqrt{1-q}$ is a critical point for $r_q(y)$, which
we can suppose is marked after replacing $\mathbb{C}$ with a branched cover.  A
direct calculation shows that $r_q(c(q)) \equiv 0 \equiv a(q)$.
Therefore, the activity loci of marked point $a(q)$ (and hence of our limiting measure
of $\mu$ of chromatic zeros) coincides with the activity locus for the critical point  $c(q)$.

It remains to check that these are non-empty and not entirely contained in
the set of parameters for which the degree of $r_q(y)$ drops.
Drop in degree of $r_q(y)$ corresponds to values of $q$ for which numerator and denominator of $r_q(y)$ have
a common zero.  One can check that this only happens when $q=0$.

One can also check by direct calculation that $y=1$ and $y=\infty$ are both
persistently superattracting fixed points for $r_q(y)$.  One has that $r_q(0)$
is a degree $k \geq 2$ rational function of $q$ and that $r_0(0) = (1/2)^k$.
Therefore, there is some parameter $q_1 \neq 0$ for which $r_{q_1}(0) = 1$.
On some open neighborhood of this parameter one has $r_q^n(0) \rightarrow 1$.
Meanwhile, one has $r_2(0) = \infty$ and so there is an open neighborhood of
$q=2$ on which $r_q^n(0) \rightarrow \infty$.  This implies that the marked
point $a(q)$ cannot be passive on the connected set $\mathbb{C} \setminus \{0\}$
by the identity theorem.

Theorem \ref{McMullen1} and the paragraph following it then give that the activity locus
of $c(q)$ has Hausdorff Dimension equal to two. 

\end{proof}

In the special case that $k = 2$, Laura DeMarco and Niki Myrto Mavraki observed the following:
\begin{proposition}
\label{PROP:SYMMETRY}
Let $r_q(y)$ be the renormalization mapping for the $2$-fold DHL given by (\ref{EQN:DHL_RQ}) with $k=2$.
Then, $B(r_q) = {\rm supp}(T_a)$.
\end{proposition}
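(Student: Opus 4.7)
The bifurcation locus $B(r_q)$ is by definition the union of the active loci of all critical points of $r_q$, so my plan is to enumerate these, discard the persistently passive ones, and check that the remaining ones have active loci inside ${\rm supp}(T_a)$. Writing $r_q(y) = g(y)^2$ with $g(y) = (y^2+q-1)/(2y+q-2)$, one has $r_q'(y) = 2g(y)g'(y)$, so the finite critical points are $\pm\sqrt{1-q}$ (zeros of $g$) together with $1$ and $1-q$ (zeros of $g'$); combined with the ramifications at $y = \infty$ and at the pole $y = 1-q/2$ of $g$, these make up all six critical points (Riemann--Hurwitz gives $2\cdot 4 - 2 = 6$).

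I would then check that $\{1, \infty, 1-q/2\}$ are persistently passive: $1$ and $\infty$ are persistently superattracting fixed points of $r_q$, and $r_q(1-q/2) = \infty$ so $1-q/2$ is persistently preperiodic. Meanwhile $r_q(\pm\sqrt{1-q}) \equiv 0 = a(q)$, so the active loci of $\pm\sqrt{1-q}$ coincide with that of $a$, namely ${\rm supp}(T_a)$ by Theorem~\ref{suppofbifcur}. This already yields the inclusion $B(r_q) \supseteq {\rm supp}(T_a)$.

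The reverse inclusion reduces to showing that the active locus of the remaining critical point $y = 1-q$ is contained in ${\rm supp}(T_a)$, and this is the main obstacle. My plan is to take any $q_0$ in the passive locus of $a$, apply the Dujardin--Favre classification to the orbit of $a$ near $q_0$, and verify case by case that $\{r_q^n(1-q)\}_{n \geq 0}$ is also normal on a neighborhood of $q_0$. The key structural input for $k=2$ is the factorization $r_q = \psi \circ g$ with $\psi(y) = y^2$, together with the fact that $g$ is persistently conjugate to $z \mapsto z^2$ via $\phi(y) = (y-1)/(y-(1-q))$ (sending the two fixed critical points $1$ and $1-q$ of $g$ to $0$ and $\infty$). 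This produces a $q$-dependent involution $\sigma$ with $r_q \circ \sigma = r_q$ fixing $1-q$, and the orbit relation $r_q(1-q) = (1-q)^2$ together with the persistent superattracting basins of $1$ and $\infty$ should allow a Montel-type argument showing that the orbit of $1-q$ is trapped in the Fatou set on each such passive component. The hardest step will be carrying out this trapping argument uniformly across the four Dujardin--Favre cases, especially the Siegel/Herman case where the passive behavior of $a$ does not directly furnish an attracting region to catch the orbit of $1-q$.
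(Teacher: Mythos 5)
You set up the problem correctly: the enumeration of the six critical points $\pm\sqrt{1-q}$, $1$, $1-q$, $\infty$, $(2-q)/2$, the identification of $1$, $\infty$ as superattracting fixed points and $(2-q)/2$ as a preimage of $\infty$, and the observation that $r_q(\pm\sqrt{1-q})\equiv 0$ all match the paper and correctly establish the inclusion ${\rm supp}(T_a)\subseteq B(r_q)$. This reduces the proposition, exactly as you say, to showing that the active locus of the remaining critical point $y=1-q$ is contained in the active locus of $a(q)=0$.

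But the argument you sketch for that reverse inclusion has a genuine gap, and you acknowledge it yourself. The involution $\sigma$ you introduce — the deck transformation of $g$ determined by $\phi(y)=(y-1)/(y-(1-q))$ — satisfies $r_q\circ\sigma=r_q$ and \emph{fixes} $y=1-q$, so it tells you nothing new about the forward orbit $\{r_q^n(1-q)\}$: applying $\sigma$ doesn't move the starting point, and $r_q\circ\sigma=r_q$ only identifies preimages, not forward iterates. Nothing in your setup connects the orbit of $1-q$ to the orbit of $0$, which is why the Dujardin–Favre case analysis stalls (as you note, in case (i) — a holomorphic motion of the orbit closure of $0$ — you have no leverage at all over the orbit of $1-q$, and the irrationally neutral case is worse).

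What is missing is the key structural observation the paper uses: the rational map
\begin{align*}
C_q(y)=\left(\frac{y+q-1}{y-1}\right)^2
\end{align*}
\emph{commutes} with $r_q$ (i.e.\ $C_q\circ r_q=r_q\circ C_q$) and satisfies $C_q(1-q)=0$ and $C_q(0)=(1-q)^2=r_q(1-q)$. Commutativity then gives $r_q^n(0)=C_q\bigl(r_q^n(1-q)\bigr)$ and $r_q^{n+1}(1-q)=C_q\bigl(r_q^n(0)\bigr)$ for all $n$, so the two families $\{q\mapsto r_q^n(0)\}$ and $\{q\mapsto r_q^n(1-q)\}$ are obtained from one another by post-composition with the holomorphically-varying proper map $C_q$; hence they are normal at exactly the same parameters. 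This makes the two activity loci equal in one line, with no case analysis. Your $\sigma$ is a symmetry of the \emph{fibers} of $r_q$, whereas $C_q$ is a symmetry of the \emph{dynamics} that exchanges the two critical orbits — that is the ingredient your proposal does not supply, and without it the reverse inclusion remains unproved.
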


\begin{proof}
The critical points of the map $r_q(y)$
are $y=1, 1-q, \infty, \frac{2-q}{2}, \pm \sqrt{1-q}$. Three of them behave similarly: $y=1$ and $y=\infty$ are superattracting fixed points, while $y=\frac{2-q}{2}$ is just a preimage of $\infty$. Meanwhile, note that $\pm \sqrt{1-q}$ are both preimages of $y=0$, so the bifurcation locus of the family is the union of the activity loci of the two marked points $y=0$, $y=1-q$.

The map $r_q$ commutes with 
\begin{align*}
	C_q (y):=\left( \frac{y+q-1}{y-1} \right)^2,
\end{align*}	
which satisfies $C_q(1-q)=0$ and $C_q(0) = r_q (1-q)=(1-q)^2$. Therefore, the activity loci of $y=0$ and $y=1-q$ coincide, and it follows that the bifurcation locus of the family is equal to the activity locus of the non-critical marked point $y=0$.
\end{proof}

\section{Examples}
\label{SEC:EXAMPLES}
We conclude the paper with a discussion of the chromatic zeros associated with the hierarchical lattices generated by the  
graphs shown in Figure \ref{generating_graphs}.  We also provide a more detailed
explanation of Figures \ref{FIG:DHLSUPP} and \ref{FIG:SPLIT_DIAMOND}.

\subsection{Linear Chain}
In this case, each graph $\Gamma_n$ is a tree so that
$\P_{\Gamma_n}(q) = q(q-1)^{|V_n|}$.  See, for example, \cite{harary}.  Therefore, the limiting measure of chromatic zeros for the
linear chain is a Dirac measure at $q=1$.

Meanwhile, even though the generating graph is not $2$-connected, the statement of Proposition~\ref{NON-DEGENERATE-DEGREE}
still applies with 
\begin{align*}
r_q(y) = \frac{y^2+q-1}{2y+q-2},
\end{align*}
which is the same formula as for the $k$-fold DHL, except with exponent $k=1$.  One can check that $r_q$ 
has $y=1-q$ as a persistent exceptional point, so that Theorem C' does not apply.  Indeed, the activity locus
for marked point $a(q) \equiv 0$ is the round circle $|q-1/2| = 1/2$ while for each $n \geq 0$ the sequence of wedge products (\ref{EQN:THMCPRIME_CONV}) 
is just the Dirac measure at $q=1$.

\subsection{$k$-fold DHL, where $k \geq 2$}
In the proofs of 
Theorems A and B we already saw that the limiting measure
$\mu$ of chromatic zeros exists for this lattice and that outside of $\VDEG =
\{q=0,\infty\}$ it coincides with the activity measure for the marked point
$a(q) \equiv 0$.  Here, we will explain the claim the activity locus, and hence
${\rm supp}(\mu)$, is the boundary between any two of the colors (blue, black,
and white) in Figure \ref{FIG:DHLSUPP}.  

The Migdal-Kadanoff renormalization mapping is given by (\ref{EQN:DHL_RQ}).
One can check that this mapping has $y=1$ and $y=\infty$ as persistent
superattracting fixed points.  In Figure \ref{FIG:DHLSUPP}, the set $q$ for
which $r_q^n(0) \rightarrow 1$ is shown in white (i.e.\ not colored) and the
set of $q$ for which $r_q^n(0) \rightarrow \infty$ is shown in blue.  Each of
these corresponds to passive behavior for the marked point $a(q) \equiv 0$.
Meanwhile, if there is some neighborhood $N$ of $q_0 \in \mathbb{C} \setminus
\VDEG$ on which $r_q^n(0)$ does not have one of these two behaviors, then
Montel's Theorem implies that $a(q)$ is also passive on $N$.  Such points are
colored black.

Conversely, if $q_0$ is on the boundary of two colors (blue, black, and white), then $q_0$ is an active 
parameter for the marked point $a(q)$.  Indeed, if $N$ is any neighborhood of $q_0$ then along any subsequence
$n_k$ we have that $r_q^{n_k}(0)$ will converge uniformly to $1$ or $\infty$ the parts of $N$ that are white 
or blue, respectively, and $r_q^{n_k}(0)$ will remain bounded away from $1$ and $\infty$ on the black.
Therefore, $r_q^n(0)$ cannot form a normal family on $N$.

\subsection{Triangles}
\label{SUBSEC:TRIANGLES}
As the generating graph is $2$-connected, Proposition~\ref{NON-DEGENERATE-DEGREE} applies and one can compute
that the Migdal-Kadanoff renormalization mapping is:
\begin{align}\label{EQN:RENORM_TRIANGLES}
r_q(y) = y\left(\frac{y^2+q-1}{2y+q-2}\right).
\end{align}
It is the same as for the linear chain, but with an extra factor of $y$.
Notice that for this family of mappings $\VDEG = \{q=0,2,\infty\}$.  The proof
of Theorem A applies and one concludes that on $\mathbb{C} \setminus \VDEG$ the
limiting measure of chromatic zeros $\mu$ coincides with the activity measure
of the marked point $a(q) \equiv 0$.  However, a curious thing happens: for
every iterate $n$ we have $r_q^n(0) = 0$ so that the marked point $a$ is
globally passive on $\mathbb{C} \setminus \VDEG$.  Therefore, $\mu$ is
supported on $\VDEG$.  This illustrates why it was important to use Theorem C'
(instead of just Theorem C) when proving Theorem A.  Working inductively with
(\ref{EQN:RENORM_TRIANGLES}) one can directly prove that $\mu$ is the Dirac
measure at $q=2$.

\subsection{Tripods}
As explained in Section \ref{SUBSEC:MK_ARBITRARY}, the Migdal-Kadanoff renormalization mapping for
the tripod coincides with that of the linear chain, due to a common factor appearing in the numerator and denominator.
This drop in degree makes $r_q$ not useful for studying the chromatic zeros on the hierarchical lattice
generated by the tripod.
However, since each of the graphs $\Gamma_n$ in this hierarchical lattice
is a tree, the limiting measure of chromatic zeros exists and is a Dirac measure at $q=1$,
by the same reasoning as for the linear chain.

\subsection{Split Diamonds}
The split diamond is $2$-connected and Theorem A implies that there is a limiting measure of chromatic
zeros $\mu$ for the associated lattice.  One can check that the Migdal-Kadanoff renormalization mapping 
for this generating graph is
\begin{align}\label{EQN:RENORM_SPLIT_DIAMOND}
r_q(y) = \frac{y^5+2(q-1)y^2+(q-1)y+(q-1)(q-2)}{2y^3+2y^2+5(q-2)y+(q-2)(q-3)}.
\end{align}
%By taking the resultant of the numerator and denominator, one can find that
%\begin{align*}
%\VDEG = \left\{0,2,{\frac{33}{2}} \pm \frac{7}{2}\sqrt {17},\infty\right\}.
%\end{align*}
As for the $k$-fold DHL, one can check that $r_q$ has $y=1$ and $y=\infty$ as persistent superattracting fixed points.
Therefore, one can use the the same coloring scheme as for the $k$-fold DHL to make computer images
of the activity locus of $a(q) \equiv 0$, and hence of ${\rm supp}(\mu)$; See Figure \ref{FIG:SPLIT_DIAMOND}.
With some explicit calculations, one can rigorously verify that each of the three behaviors (white, blue,
and black) actually occurs for $q \not \in \VDEG$.

\bibliographystyle{plain}
\bibliography{biblio}
\vspace{0.2in}

\end{document}